\newtcolorbox{algbox}[2][]{%
  colframe=gray!15, 
  colback=gray!05, 
  coltitle=black, 
  fonttitle=\bfseries,
  colupper=black, 
  title=#2, 
  arc=1mm,
  boxrule=0.0mm, 
  left=0pt,
  right=0pt,
  top=0pt,
  bottom=3pt,
  #1 
}
\newtheorem{definition}{Definition}[section] 
\newtheorem{assumption}{Assumption}[section]
\newtheorem{theorem}{Theorem}[section]
\newtheorem{remark}{Remark}[section]
\newtheorem{lemma}{Lemma}[section]
\newtheorem{proof}{Proof}[section]
\newcommand{\cmark}{\color{green!50!black}\ding{51}}%
\newcommand{\xmark}{\color{red!50!black}\ding{53}}
\newacro{DM}{decision making}
\newacro{AD}{automated driving}
\newacro{SOC}{discrete-time stochastic optimal control}
\newacro{NLP}{nonlinear program}
\newacro{ODE}{ordinary differential equation}
\newacro{MPC}{model predictive control}
\newacro{MHE}{moving horizon estimation}
\newacro{NMPC}{nonlinear model predictive control}
\newacro{ENMPC}{economic nonlinear model predictive control}
\newacro{MPPI}{model predictive path integral control}
\newacro{SPC}{subspace predictive control}
\newacro{QP}{quadratic program}
\newacro{MIQP}{mixed-integer {\ac{QP}}}
\newacro{MILP}{mixed-integer \ac{LP}}
\newacro{MINLP}{mixed-integer \ac{NLP}}
\newacro{MI}{mixed-integer}
\newacro{MIP}{mixed-integer programming}
\newacro{MPCC}{mathematical program with complementarity constraints}
\newacro{BB}{branch-and-bound}
\newacro{SQP}{sequential quadratic programming}
\newacro{RNN}{recurrent neural network}
\newacro{OCP}{optimal control problem}
\newacro{LQR}{linear quadratic regulator}
\newacro{iLQR}{iterative linear quadratic regulator}
\newacro{SLQ}{sequential linear quadratic programming}
\newacro{DDP}{differential dynamic programming}
\newacro{MCP}{mixed complementarity problem}
\newacro{IP}{interior point}
\newacro{ADMM}{alternating direction method of multipliers}
\newacro{RTI}{real time iteration}
\newacro{POMDP}{partially observable \ac{MDP}}
\newacro{CEM}{cross-entropy method}
\newacro{GGN}{generalized Gauss-Newton}
\newacro{KKT}{Karush-Kuhn-Tucker}
\newacro{IFT}{implicit function theorem}
\newacro{RL}{reinforcement learning}
\newacro{DP}{dynamic programming}
\newacro{LSTM}{long short-term memory}
\newacro{NN}{artificial neural network}
\newacro{PPO}{proximal policy optimization}
\newacro{PG}{policy gradient}
\newacro{DPG}{deterministic \ac{PG}}
\newacro{SPG}{stochastic \ac{PG}}
\newacro{TD}{temporal difference}
\newacro{SAC}{soft actor-critic}
\newacro{PI}{policy iteration}
\newacro{SARSA}{state action reward state action}
\newacro{IL}{imitation learning}
\newacro{MLE}{maximum likelihood estimation}
\newacro{MDP}{Markov decision process}
\newacro{DDPG}{deep deterministic \ac{PG}}
\newacro{DQN}{deep Q-networks}
\newacro{TD3}{twin-delayed actor-critic}
\newacro{VI}{value iteration}
\newacro{TRPO}{trust region policy optimization}
\newacro{BC}{behavior cloning}
\newacro{GAN}{generative adversarial network}
\newacro{GPS}{guided policy search}
\newacro{GP}{Gaussian process}
\newacro{leap-c}{Learning for Predictive Control}
\newacro{FA}{function approximator}
\newacro{ADP}{approximate \ac{DP}}
\newacro{LMPC}{linear \ac{MPC}}
\newacro{FF}{feed forward network}
\newacro{DNN}{deep neural network}
\newacro{OMD}{optimal model design}
\newacro{UGV}{unmanned ground vehicle}
\newacro{RWE}{real-world experiments}
\newcommand{\dpTxt}{\ac{DP}}
\newcommand{\mppi}{\ac{MPPI}}
\newcommand{\ilqr}{\ac{iLQR}}
\newcommand{\MDPAligned}{\ac{MDP} aligned}
\newcommand{\closedLoopOpt}{closed-loop optimal}
\newcommand{\MDPComplete}{\ac{MDP} complete}
\newcommand{\hierarchical}{hierarchical}
\newcommand{\integrated}{integrated}
\newcommand{\paramterized}{parameterized}
\newcommand{\termvalfun}{term. val. fun.}
\newcommand{\etal}[1]{{#1} et al.}
\newcommand{\ipopt}{\texttt{IPOPT}}
\newcommand{\acados}{\texttt{acados}}
\newcommand{\casadi}{\texttt{CasADi}}
\newcommand{\gurobi}{\texttt{Gurobi}}
\newcommand{\pytorch}{\texttt{PyTorch}}
\newcommand{\na}{N/A} 
\newcommand{\mpc}{\mathrm{MPC}}
\newcommand{\R}{\mathbb{R}}
\newcommand{\grad}[1]{\nabla_{\!#1}}
\newcommand{\argmin}{\mathop{\mathrm{argmin}}\limits}
\newcommand{\STAC}{{\mathcal{S}\times \mathcal{A}}}
\newcommand{\ST}{\mathcal{S}}
\newcommand{\AC}{\mathcal{A}}
\newcommand{\mpcShrt}{\textsc{mpc}}
\DeclareMathOperator*{\E}{\mathbb{E}}
\newcommand{\nnfun}{\varphi}
\newif\ifdraft
\renewcommand{\arraystretch}{1.3}
\newcommand{\ra}[1]{\renewcommand{\arraystretch}{#1}} 
\newcommand{\appref}[1]{App.~\ref{#1}}
\newcommand{\secref}[1]{Sect.~\ref{#1}}
\newcommand{\figref}[1]{Fig.~\ref{#1}}
\begin{document}

\title{Synthesis of Model Predictive Control and Reinforcement Learning: Survey and Classification}

\author{Rudolf Reiter\textsuperscript{$\star$}, Jasper Hoffmann\textsuperscript{$\star$}, Dirk Reinhardt, Florian Messerer, Katrin Baumgärtner, Shambhuraj Sawant, Joschka Boedecker, Moritz Diehl, Sebastien Gros
\thanks{This research was supported by DFG via Research Unit FOR 2401 and project 424107692, by the EU via ELO-X 953348, and by NFR through the project Safe Reinforcement Learning using Model Predictive Control (SARLEM, grant number 300172).}
\thanks{Rudolf Reiter, Florian Messerer, Katrin Baumgärtner and Moritz Diehl are with the Department of Microsystems Engineering (IMTEK), University of Freiburg, 79110 Freiburg, Germany (e-mail: \{rudolf.reiter, florian.messerer, katrin.baumgaertner, moritz.diehl\}@imtek.uni-freiburg.de).}
\thanks{Jasper Hoffmann and Joschka Boedecker are with the Department of Computer Science, University of Freiburg, 79110 Freiburg, Germany (e-mail: \{hofmaja, jboedeck\}@informatik.uni-freiburg.de).}
\thanks{Dirk Reinhardt, Shambhuraj Sawant and Sebastien Gros are with the Department of Engineering Cybernetics, Norwegian University of Science and Technology (NTNU), 7034 Trondheim, Norway (e-mail: \{dirk.p.reinhardt, shambhuraj.sawant, sebastien.gros\}@ntnu.no).}}


\maketitle
\begingroup\renewcommand\thefootnote{$\star$}
\footnotetext{Equal contribution}
\endgroup

\begin{abstract}
The fields of \ac{MPC} and \ac{RL} consider two successful control techniques for Markov decision processes. Both approaches are derived from similar fundamental principles, and both are widely used in practical applications, including robotics, process control, energy systems, and autonomous driving.

Despite their similarities, \ac{MPC} and \ac{RL} follow distinct paradigms that emerged from diverse communities and different requirements. Various technical discrepancies, particularly the role of an environment model as part of the algorithm, lead to methodologies with nearly complementary advantages. 
Due to their orthogonal benefits, research interest in combination methods has recently increased significantly, leading to a large and growing set of complex ideas leveraging \ac{MPC} and \ac{RL}.

This work illuminates the differences, similarities, and fundamentals that allow for different combination algorithms and categorizes existing work accordingly.
Particularly, we focus on the versatile actor-critic \ac{RL} approach as a basis for our categorization and examine how the online optimization approach of \ac{MPC} can be used to improve the overall closed-loop performance of a policy.
\end{abstract}

\begin{IEEEkeywords}
Optimal Control, Model Predictive Control, Reinforcement Learning
\end{IEEEkeywords}

\acresetall
\section{Introduction}
\label{sec:introduction}

\IEEEPARstart{S}{olving} \acp{MDP} online is a large and active research domain where different research communities have developed various solution approaches~\cite{sutton_reinforcement_2018,rawlings_model_2017,bertsekas_reinforcement_2019}.
An \ac{MDP} can be stated as the problem of computing the optimal policy of an agent interacting with a stochastic environment that minimizes a cost function, possibly over an infinite horizon.
Two common approaches for obtaining optimal policies are \ac{MPC} and \ac{RL}.

Within \ac{MPC}, an optimization problem that approximates the \ac{MDP} is solved online, involving a simulation of an internal prediction model. The optimized controls are applied to the real-world environment in a closed loop at each time step.
Historically, \ac{MPC} has been developed within the field of optimization-based control engineering, driven by the success of optimization algorithms, e.g., linear programming~\cite{dantzig_simplex_1956}.
\Ac{MPC} design leverages domain knowledge to compose mathematical models, often by first principles. 

In contrast, within the \ac{RL} framework, a policy is expressed as a parameterized explicit function of the state. The policy is iteratively improved by interacting with the environment and adapting its parameters related to the observed cost. 
In general, \ac{RL} algorithms are not required to use models of the environment, but often, models are used during training for offline simulation.

Both \ac{MPC} and \ac{RL} found their way into classical real-world control~\cite{schwenzer_review_2021}. The applications differ depending on the availability of training data. \ac{MPC} is used where measurement data is scarce and expensive, and the environment can be described by optimization-friendly models. On the contrary, \ac{RL} is successfully implemented in settings where lots of training data can be generated~\cite{mnih_human-level_2015, wurman_outracing_2022}.

Besides their sampling efficiency, several further advantages and disadvantages of both methods are nearly orthogonal~\cite{gorges_relations_2017}, i.e., weaknesses of either approach are strengths of the other.
For instance, \ac{RL} struggles with safety issues~\cite{brunke_safe_2022}, whereas \ac{MPC} can guarantee constraint satisfaction related to a particular environment model~\cite{rawlings_model_2017}.
This motivated many authors to combine the advantages and synthesize novel algorithms that build on both approaches.

This paper's contribution is twofold.
We first analyze the properties and orthogonal strengths of \ac{MPC} and \ac{RL}. 
Secondly, we propose a systematic overview of how \ac{MPC} can be combined with \ac{RL} and provide an extensive literature overview guided by the proposed classification. This survey reviews practical and theoretical work and concludes with a section on available open-source software.

\subsection{Related Work}
\label{sec:related}
Several works exist that shed light on various parts of the huge research fields of control systems, machine learning, and optimization.
The author of~\cite{gorges_relations_2017} shows relations between \ac{MPC} and \ac{RL} on a conceptual level and includes a detailed discussion for discrete-time linear time-invariant constrained systems with quadratic costs.
\ac{RL} is contextualized from the perspective of control systems in~\cite{recht_tour_2019} where \ac{MPC} is mentioned briefly as one particular control technique.
The authors of~\cite{brunke_safe_2022} survey methods for safe learning in robotics and also consider the role of MPC.
In particular, the authors show how parameterized \ac{MPC} formulations can be used to guarantee safety during learning and how safety can be integrated into \ac{RL} in general.
The survey of~\cite{bengio_machine_2021} provides an extensive overview of how machine learning is used for combinatorial optimization.
While combinatorial optimization problems have a particular structure related to discrete decision variables, similarities can be observed in how \acp{NN} are integrated into an optimization problem.

The author in~\cite{bertsekas_reinforcement_2019} compares \ac{RL} and \ac{MPC} from the point of approximate dynamic programming and proposes a unified mathematical framework.
In \cite{bertsekas_lessons_2022,bertsekas_newtons_2022}, the RL-based program AlphaZero \cite{silver_mastering_2017,silver_mastering_2017-1} that plays games such as chess, shogi or go, is cast within an MPC framework.
It is argued that the lookahead and rollout-based policy improvement used in online deployment plays a crucial role in enabling the success of the respective algorithms. 

A high-level survey on how general machine learning concepts are used within \ac{MPC} is provided by~\cite{hewing_learning-based_2020}. 
The authors of~\cite{mesbah_fusion_2022} have the same focus on investigating general machine learning used as part of \ac{MPC} and provide technical details for learning models, policy parameters, optimal terminal value functions, and approximations of iterative online optimization.
\ac{RL} is a minor topic of both surveys~\cite{hewing_learning-based_2020} and \cite{mesbah_fusion_2022}. The survey~\cite{mesbah_fusion_2022} is well aligned with our proposed framework. For instance, it considers approximating the terminal value function to be treated in another category as closed-loop learning. In fact, our proposed framework can be seen as a complementary work to~\cite{bertsekas_reinforcement_2019} and~\cite{mesbah_fusion_2022}.

Further works compare MPC and RL for particular applications.
Similarly to~\cite{mesbah_fusion_2022}, the authors in~\cite{norouzi_integrating_2023} compare how \ac{MPC} and machine learning are combined exclusively for automotive applications.
The authors in~\cite{zhang_building_2022} compare \ac{RL} and \ac{MPC} specifically applied to energy management in buildings with a focus on data-driven \ac{MPC} methods.

\subsection{Notation}
\label{sec:notation}
The \ac{MPC} and \ac{RL} literature use different symbols and notations for identical objects, as shown in the comparison~\cite{bertsekas_reinforcement_2019}. The notation used within this paper is given in Tab.~\ref{tab:dictionary}. 
In this survey, mostly \ac{RL} notation is used to unify the language but with few exceptions and the use of control literature synonyms for established language. For instance, we mostly use ``environment'' instead of ``plant'' or ``system'' but may occasionally write ``system'' when it is more common in the context. Note that we exclusively use the term \ac{MDP} that also refers to the conceptually equivalent term discrete-time stochastic optimal control.

Also, within the field of \ac{RL} notation, ambiguities exist. In this survey, we mostly use \ac{RL} in order to refer to model-free \ac{RL} algorithms and include \ac{IL} due to their conceptual overlap. 
\Ac{MPC} that learns a model online is often seen as a variant of model-based \ac{RL}~\cite{wang_benchmarking_2019}. 
Since the proposed synthesis approaches are not limited to model learning, and, moreover, model-based \ac{RL} includes approaches that learn a model without the intention to be used in an optimization problem, we omit a detailed discussion on model-based \ac{RL}

For the concatenation~$z = [x^\top, y^\top]^\top $of column vectors~$x$ and~$z$, we write~$z = (x,y)$.
Given set~$\mathcal{A}$ and set~$\mathcal{B}$, we denote with~$\mathcal{B}^\mathcal{A}$ the set of all functions that map from~$\mathcal{A}$ to~$\mathcal{B}$.
Given a set~$\mathcal{A},$ we denote with~$\mathrm{Dist} (\mathcal{A})$ the space of all probability distributions or probability measures over~$\mathcal{A}$.
Let $p$ be a probablity density of a probability distribution over~$\mathcal{A}$, then the support is defined as $\mathrm{supp}(p) \coloneqq \{a \in \mathcal{A} \mid p(a) > 0\}$.


\subsection{Overview}
\label{sec:overview}
On a high level, this paper is structured into (1) an introductory part, (2) a comparison, and (3) a classification scheme and survey of synthesis approaches. An overview is provided in Fig.~\ref{fig:paper_overview}.

The introductory part introduces in Sect.~\ref{sec:problem_setting} the general problem setting. 
Sect.~\ref{sec:rl} and~\ref{sec:mpc} describe the main concepts behind \ac{RL} and \ac{MPC} and discuss how they aim at solving the general problem of Sect.~\ref{sec:problem_setting}.
Both of these sections are split into a conceptual and an algorithmic part, providing the basis for the remainder of this work. 

The comparison in Sect.~\ref{sec:comparison} highlights practical differences between both approaches and surveys applied comparisons.
Experts in the field of \ac{MPC} and \ac{RL} may skip Sect.~\ref{sec:rl},~\ref{sec:mpc} and potentially the comparison in Sect.~\ref{sec:comparison}.

The remaining paramount sections are devoted to the classification and review of combinations of both approaches. In Sect.~\ref{sec:comb_arch}, essential concepts are introduced to categorize existing combination variants based on the actor-critic framework of \ac{RL}.
Following the proposed categorization, literature that uses \ac{MPC} as an expert for training an \ac{RL} agent is summarized in Sect.~\ref{sec:comb_expert}. The most widespread variants using \ac{MPC} within the policy are outlined in Sect.~\ref{sec:mpc_as_actor}, and variants that use the \ac{MPC} to determine the value function at a particular state are surveyed in Sect.~\ref{sec:comb_critic}. 
An additional Sect.~\ref{sec:theory} focusses on important theoretical results from the field of \ac{MPC} and \ac{RL} and is aligned with the previous categorization. Current open-source software is presented in \secref{sec:software}.

The work is concluded and discussed in~Sect.~\ref{sec:discussion}.

\begin{figure}
	\centering
	\includegraphics[width=\linewidth]{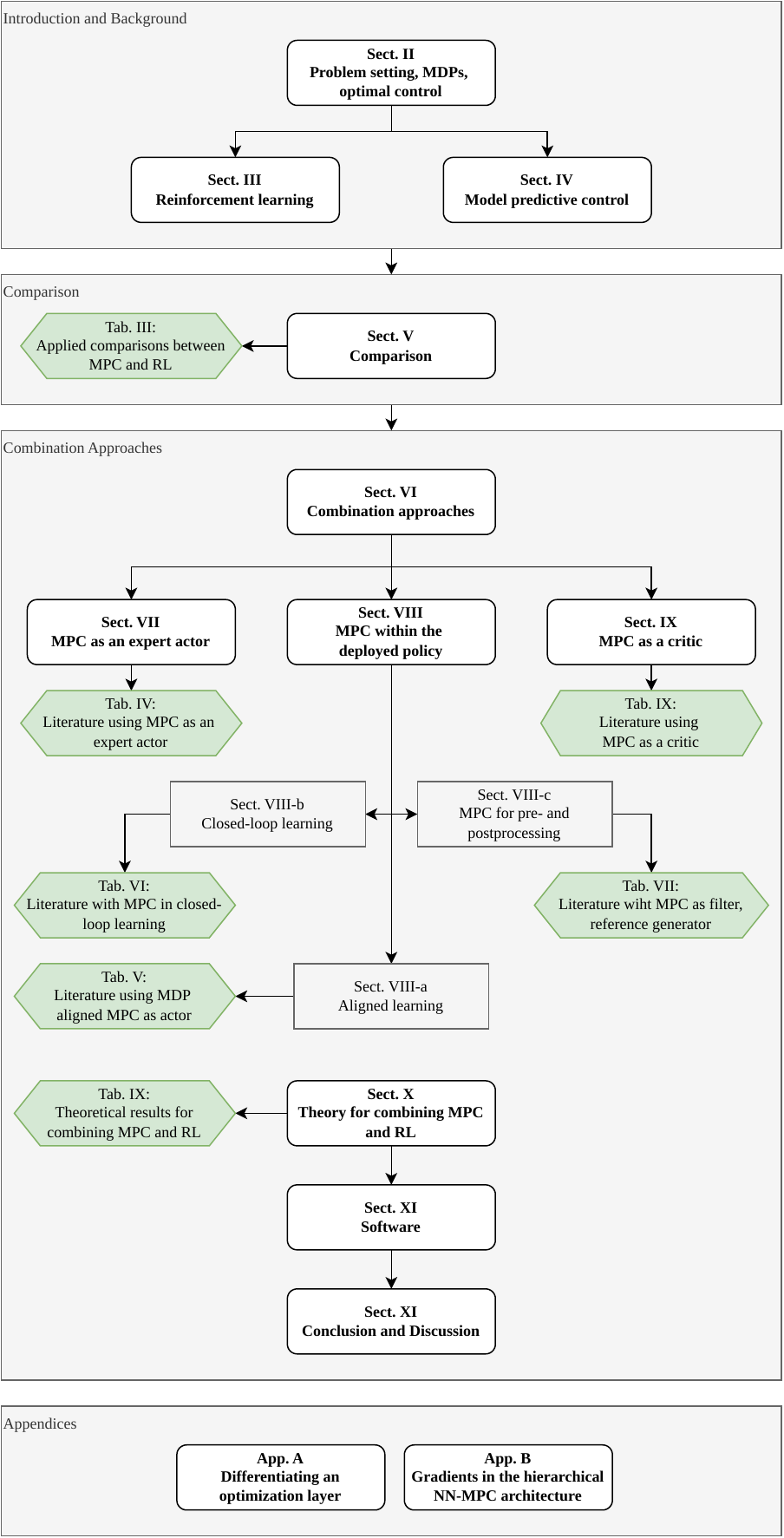}
	\caption{Paper structure. The main sections are highlighted in gray, subchapters in white, tables are green.}
	\label{fig:paper_overview}
\end{figure}


\begin{table}
\caption{Symbols used within this paper.}
	\centering
	\ra{1.3}
	\resizebox{\linewidth}{!}{%
	\begin{tabular}{@{}p{6.7cm}@{}@{}p{0.5cm}@{}@{}l@{}}
		\toprule
		Name && Symbol\\
		\midrule
            state && $s$\\
            sampled state && $S$\\
            planned states within \ac{MPC}&&$x$\\
            action (control) sampled from policy distribution && $a$ \\
            sampled action &&$A$\\
            planned actions within \ac{MPC} &&$u$\\
            stochastic model (part of the environment) &&$P(s^+|s,a)$\\
            deterministic model approximation within MPC && $f^\mathrm{MPC}(x,u)$\\
            stochastic policy (controller)&&$\pi(s)$\\
            deterministic policy &&$\mu(s)$\\
            parameterized policy &&$\pi_\theta (s),\;\mu_\theta (s)$\\
            cost function (part of the environment) && $l(s,a)$\\
            cost function approximation within MPC && $l^\mathrm{MPC}(s,a)$\\
            optimal stochastic policy&&$\pi^\star(s)$\\
            value function under policy $\pi$ && $V^\pi$\\
            value function under policy $\pi$ depending on policy parameters $\theta$ && 
            $J^\pi(\theta)$\\
            optimal value function && $V^\star(s)$\\
            terminal value function approximation under policy~$\pi$ && $\bar{V}^\pi$\\
            value function of MPC && $V^\mathrm{MPC}(x)$\\
            value function of MPC parameterized by~$\theta$&& $V_\theta^\mathrm{MPC}(x)$\\
            terminal value function of MPC && $\bar{V}^\mathrm{MPC}(x)$\\
            terminal value function of MPC parameterized by~$\theta$&& $\bar{V}^\mathrm{MPC}_\theta(x)$\\
            action-value function under policy $\pi$ && $Q^\pi(s,a)$\\
            optimal action-value function && $Q^\star(s,a)$\\
            action-value function of MPC &&$Q^\mathrm{MPC}(s,a)$\\
            action-value function of MPC parameterized by~$\theta$ &&$Q^\mathrm{MPC}_\theta(s,a)$\\
            fixed sampled dataset  &&$\mathcal{D}$\\
            data sampled under policy~$\pi$ &&$\mathcal{D}^{\pi}$\\
            data sampled under $\epsilon$-greedy policy implicitly defined by~$Q$ &&$\mathcal{D}^{\pi^\epsilon_Q}$\\
            replay buffer with transitions stored during training && $\mathcal{D}^\mathrm{buffer}$\\
		\bottomrule
	\end{tabular}}
 	\label{tab:dictionary}
\end{table}

\section{Problem Setting}
\label{sec:problem_setting}
This section describes the \acf{MDP} framework -- a central concept for both RL and MPC. 


\Acp{MDP} provide a framework for modeling a discrete-time stochastic decision process.
An \ac{MDP} is defined over a state space~$\ST$, an action space~$\AC$, and a stochastic transition model 
\begin{equation}
\label{eq:general_model}
    P: \ST \times \AC \rightarrow \mathrm{Dist}(\ST)
\end{equation}
describing a probability distribution over the next states given a current state and action.
A stage cost~$l(s, a)$ with~$l:\ST\times\AC \rightarrow\R\cup\{\infty\}$ defines the cost of each state-action pair, typically discounted by a factor~$\gamma \in (0, 1]$.
The \ac{MDP}~\cite{puterman_markov_2005} is then defined by the 5-tuple 
\begin{equation}
\label{eq:mdp}
    \mathcal{M}:=\big(\ST, \AC, P, l,\gamma\big).
\end{equation}
Solving the \ac{MDP} refers to obtaining actions that minimize the discounted stage cost, which is elaborated in the following.

For solving \acp{MDP} we introduce stochastic policies~$\pi: \ST \rightarrow \mathrm{Dist}(\AC)$ that map a state to a probability distribution over possible actions.
The value function~$V^\pi:\ST\rightarrow\R\cup\{\infty\}$ is the discounted expected future cost starting from state~$s$ and following a policy~$\pi$ defined by 
\begin{gather}
	\label{eq:general_expected_return}
	V^\pi (s) \coloneqq \E \Bigg[ \sum_{k=0}^{\infty} \gamma^k l(S_k, A_k)  \Bigg] \\
         A_k \sim \pi(\cdot \mid S_k),\; S_{k+1} \sim P(\cdot \mid S_k, A_k), \; S_0=s. \nonumber
\end{gather}
The state~$S_k$ and action~$A_k$ describe a sequence of random variables generated by applying the policy~$\pi$ on the \ac{MDP}.

The aim of solving the \ac{MDP} is to obtain an optimal policy~$\pi^\star$ that minimizes the expected cost for all states via
\begin{equation}
\label{eq:general_problem}
    \pi^\star \in \bigcap_{s \in \ST} \arg \min_{\pi} V^\pi (s).
\end{equation}
Solving this equation is often also shortly referred to as solving the~\ac{MDP}.
Note that the intersection in \eqref{eq:general_problem} is never empty, and there also always exists an optimal deterministic policy~$\mu^\star$~\cite{sutton_reinforcement_2018} satisfying~\eqref{eq:general_problem}.
All optimal policies~$\pi^\star$ share the same optimal value function~$V^\star \coloneqq V^{\pi^\star}$. In addition to the value function in \eqref{eq:general_expected_return}, the action-value function~$Q^\pi:\ST\times\AC \rightarrow\R\cup\{\infty\}$ of a policy
\begin{gather}
\label{eq:general_action_expected_return}
    Q^\pi(s, a) := \E \Bigg[ \sum_{k=0}^{\infty} \gamma^k l(S_k, A_k) \Bigg] \\
    A_k \sim \pi(\cdot \mid S_k),\; S_{k+1} \sim P(\cdot \mid S_k, A_k), \; S_0 = s, \; A_0 = a. \nonumber
  \end{gather}
is the expected value of first applying an action~$a$ at the current state~$s$ and following the stochastic policy~$\pi$ afterwards.

We define as~$V^\star \coloneqq V^{\pi^\star}$ and~$Q^\star \coloneqq Q^{\pi^\star}$ the optimal value and action-value functions.
The identities 
\begin{subequations}
\begin{align}\label{eq:derive_optimal_value_function}
 	V^\star(s) &= \min_{a} Q^\star(s, a)\\
   	\mathrm{supp} \big(\pi^\star(\cdot \mid s)\big) &\subseteq \arg \min_{a^\star} Q^\star(s, a^\star).
\end{align}
\end{subequations}
are relating the optimal value function and the optimal policy to the optimal action-value function.
In cases with multiple optimal actions for a state~$s$, an optimal policy~$\pi^\star$ can be stochastic and randomly selects an action in the set of optimal actions with any probability.
\begin{remark}
The optimal policy for a finite horizon problem is generally time-varying unless the terminal cost is $V^\star$.
Time-varying environments can be described in terms of~\eqref{eq:general_model} by augmenting the state with an additional clock state.
\end{remark}

In the following, we introduce \ac{MPC} and \ac{RL}, the two pivotal frameworks of this survey for approximately solving \acp{MDP}~\eqref{eq:general_problem}.

\section{Reinforcement Learning}
\label{sec:rl}
\ac{RL} is a powerful approach for solving \acp{MDP} using concepts from dynamic programming, Monte Carlo simulation, and stochastic approximation.
\ac{RL} typically involves an agent modeled by a policy iteratively collecting data by interacting with an environment, which could be a simulation model or the real world.
The collected data, consisting of state transitions, applied actions, and costs, is then used to iteratively update the policy.
In most \ac{RL} methods, an optimal policy is approximated via estimating the optimal action-value function using \ac{TD} learning or by iteratively updating the policy using \ac{PG} methods \cite{sutton_reinforcement_2018}, whereas in actor-critic methods both forms are combined.

\subsection{Theoretical Background}
In this section, a short theoretical overview of \ac{RL} is provided, including dynamic programming, temporal difference methods, and policy gradient methods.\\
\subsubsection{Dynamic Programming}
Introduced in \cite{bellman_dynamic_1966}, \ac{DP} provides the theoretical foundation for many algorithms in \ac{RL}.
\Ac{DP} solves \acp{MDP} with known transition models by breaking them into subproblems and using stored solutions to avoid recomputation. \Ac{DP} systematically updates value functions given complete knowledge of the environments \ac{MDP}. 
A general \ac{DP} approach to find the action-value function~$Q^\pi$ of a given policy~$\pi$ can be described by the Bellman operator~$T^\pi: \R^\STAC \rightarrow \R^\STAC$, 
\begin{equation}\label{eq:bellman_operator}
     T^\pi Q(s, a) \coloneqq l(s, a) + \gamma \E_{\substack{S^+ \sim P(\cdot \mid s, a), \\ A^+ \sim \pi (\cdot \mid S^+) }} [Q(S^+, A^+)]\,,
\end{equation}
where~$S^+$ is the next sampled state and~$A^+$ a sampled action.
In the space of value functions, the Bellman operator~$T^\pi$ is a contraction mapping for~$\gamma < 1$ with respect to the state supremum norm \cite{bertsekas_neuro-dynamic_1996}, and~$Q^\pi$ is the unique fixed point.
In other words, one can start with an arbitrary action-value function~$Q \in \R^\STAC$ and iteratively apply~$T^\pi$ to converge to the action-value function~$Q^\pi$.
Determining~$V^\pi$ and~$Q^\pi$ for a fixed policy~$\pi$ is referred to as policy evaluation.

Similar to the Bellman operator~$T^\pi$, the Bellman optimality operator~$T: \R^\STAC \rightarrow \R^\STAC$ is defined as
\begin{equation}\label{eq:bellman_optimality_operator}
     T Q(s, a) \coloneqq l(s, a) + \gamma \min_{a^+} \E_{S^+ \sim P(\cdot \mid s,a)} [Q(S^+, a^+)].
\end{equation}
Equivalently to~$T^\pi$, $T$ is a contraction mapping for $\gamma < 1$.
Iteratively applying~$T$ on an arbitrary action-value function~$Q \in \R^\STAC$ converges to the optimal value function~$Q^\star$.
The resulting method is called \ac{VI} and, differently to policy evaluation, tries to solve \acp{MDP} of \eqref{eq:general_problem} implicitly by finding the optimal value function~$Q^\star$.

The main drawback of \ac{DP} methods is the unfavorable scaling to high-dimensional or continuous state and action spaces, which is often referred to as the curse of dimensionality~\cite{bellman_dynamic_1966}.
\Ac{DP} methods require full knowledge of the environment model~$P$, which is a fundamental limitation compared to more generic model-free \ac{RL} algorithms.
\Ac{ADP}~\cite{powell_approximate_2007} addresses the curse of dimensionality by using different approximation strategies to extend classical \ac{DP}, as discussed in the following in the context of \ac{RL}.\\

\subsubsection{\Ac{TD} Methods}
To avoid the scaling issues of \ac{DP}, \acf{TD} methods~\cite{sutton_reinforcement_2018} introduce two extensions:
Firstly, they learn the value functions~$V^\pi$ or $Q^\pi$ for a policy~$\pi$ and their optimal versions~$V^\star$ and~$Q^\star$ with only transition samples utilizing stochastic approximation and without explicitly requiring a transition model~$P$.
Secondly, they use an adaptive exploration-exploitation strategy to decide favorable states for which the value function is updated.

In the following, different \ac{TD} learning algorithms are presented.
The simplest policy evaluation method is called $\mathrm{TD}(0)$, which estimates the value function~$V^\pi$ for a given policy~$\pi$.
Given a current value function~$V$, an update for a given state~$S$ for~$V$ is defined by
\begin{algbox}{TD(0) $(V \approx V^\pi)$}
\begin{subequations}\label{eq:td0}
\begin{align}
    &V(S) \leftarrow V(S) + \alpha \delta, \\
    &\delta \coloneqq l(S, A) + \gamma V(S^+) - V(S), \\
    &\textrm{with }S \sim \mathcal{D}^\pi,\; A \sim \pi( \cdot \mid S),\; S^+ \sim P(\cdot \mid S, A), 
\end{align}
\end{subequations}
\end{algbox}

%

where $\leftarrow$ denotes overwriting the function~$V$ at $S$, $\alpha > 0$ denotes the learning rate, $A \sim \pi(\cdot|S)$ a sampled action from the policy~$\pi$ at state~$S$ and~$S^+ \sim P(\cdot|S, A)$ the next state sampled from the stochastic model~\eqref{eq:general_model}.
Furthermore, $\delta$ is called the temporal difference, measuring the stochastic difference between the sampled target~$l(S, A) + \gamma V(S^+)$ and the current estimate~$V(S)$.
Finally, the distribution~$\mathcal{D}^\pi$ is a sampling or exploration strategy where the fixed policy~$\pi$ sequentially generates new states by interacting with the environment. For a more technical discussion, see Remark~\ref{remark:state_distribution}.
\begin{remark}\label{remark:state_distribution}
    The notations $S \sim \mathcal{D}^\pi$ or $(S, A) \sim \mathcal{D}^\pi$ are mathematically not well defined as distributions.
    Without further elaboration, given an initial state~$s$, an episode is simply run iteratively by applying a policy~$\pi$:
    \begin{equation}
        S_0=s, A_0 \sim \pi(\cdot | S_0), S_1 \sim P(\cdot | S_0, A_0), A_1 \sim \pi(\cdot | S_1), \dotsc \;.
    \end{equation}
    After each time step, an update can then be performed by a \ac{TD} method.
    We still use this notation for instructional purposes, especially to highlight which policy was used to generate states and actions.
\end{remark}

Similar to \ac{DP}, under some technical assumptions from stochastic approximation theory, the update scheme converges to the unique fixpoint~$V^\pi$ \cite{bertsekas_neuro-dynamic_1996}.
For example, one assumption is that the learning rate~$\alpha$ must decrease to zero over time.
The update scheme of \eqref{eq:td0} can be extended to also learn~$Q^\pi$.

Besides estimating the value function~$V^\pi$ or~$Q^\pi$ for given policies~$\pi$, a main target of \ac{RL} algorithms is learning the optimal value functions~$V^\star$ and~$Q^\star$.
For instance, the SARSA algorithm~\cite{sutton_reinforcement_2018} is an \ac{RL} method for approximating the optimal value function~$Q^\star$ and given by the update rule
\begin{algbox}{SARSA $(Q \approx Q^*)$}
\begin{subequations}\label{eq:sarsa}
\begin{align}
    &Q(S, A) \leftarrow Q(S, A) + \alpha \delta, \\
    &\delta \coloneqq l(S, A) + \gamma Q(S^+, A^+) - Q(S, A), \\
    &\textrm{with } (S, A) \sim \mathcal{D}^{\pi^\epsilon_Q},\; S^+ \sim P(\cdot \mid S, A),\; A^+ \sim \pi^\epsilon_Q (\cdot \mid S^+). \label{eq:td0q_sampling}
\end{align}
\end{subequations}
\end{algbox}
The $\epsilon$-greedy policy is implicitly defined by $Q$ via
\begin{equation}
   \pi^\epsilon_Q(a \mid s) =
\begin{cases} 
1 - \epsilon + \frac{\epsilon}{|\mathcal{A}(s)|}, & \text{if } a = \arg\min_{a^\star} Q(s, a^\star), \\
\frac{\epsilon}{|\mathcal{A}(s)|}, & \text{otherwise},
\end{cases} 
\end{equation}
where for notational convenience, it is assumed that there is only one optimal action for a given state $s$.
Note that $\pi^\epsilon_Q$ is used to sample the next action~$A^+ \sim \pi_Q^\epsilon$ and to generate the state and action with $D^{\pi^\epsilon_Q}$, where the update is performed.

Another variant for learning the optimal action-value function~$Q^\star$ is the popular Q-learning method~\cite{watkins_q-learning_1992}, which, instead of sampling an $\epsilon$-greedy action like SARSA, takes the greedy action to build the temporal difference.
More explicitly, the update rule is defined via 
\begin{algbox}{Q-learning $(Q \approx Q^\star)$}
\begin{subequations}\label{eq:q_learning}
\begin{align}
    &Q(S, A) \leftarrow Q(S, A) + \alpha \delta, \\
    &\delta\coloneqq l(S, A) + \gamma \min_{a^\star} Q(S^+, a^\star) - Q(S, A), \\
    &\textrm{with } (S, A) \sim \mathcal{D}^{\pi_Q^\epsilon},\; S^+ \sim P(\cdot \mid S, A). \label{eq:q_learning_sampling}
\end{align}
\end{subequations}
\end{algbox}
Note that Q-learning still uses the $\epsilon$-greedy exploration policy to sample $(S, A) \sim \mathcal{D}^{\pi^\epsilon_Q}$ in \eqref{eq:q_learning_sampling}, whereas the policy that is used for the temporal difference is the greedy policy.
Thus, Q-learning is called an off-policy method, whereas SARSA is an on-policy method as the same $\epsilon$-greedy policy is used for exploration and the temporal difference update.
For a more elaborate discussion, see \cite{sutton_reinforcement_2018}.

As discussed, an essential motivation for \ac{TD} methods is that learning a value function on the whole state space can be intractable, which is one of the major drawbacks of \ac{DP}.
Importantly, typically only a fraction of the state space is encountered under a given policy or an optimal policy.
Thus, \ac{TD} methods often incorporate a trade-off between exploring the state space and exploiting current knowledge using strategies like the $\epsilon$-greedy policy~$\pi^\epsilon_Q$.
Yet, for discrete state spaces, convergence to $Q^\star$ is only guaranteed if each state and action is seen infinitely often \cite{bertsekas_neuro-dynamic_1996}.\\

\subsubsection{Policy Gradient Methods}
\label{sec:policy_gradient_methods}
Different from the previous methods, \acf{PG} methods directly optimize a parameterized policy~$\pi_\theta$ that can be used for discrete and continuous action spaces~$\AC$.
Given a parameterized policy~$\pi_\theta: \ST \rightarrow \mathrm{Dist}(\AC)$ and an initial state distribution~$\rho_0 \in \mathrm{Distr}(\ST)$, the goal of \ac{PG} methods is to find the optimal parameters~$\theta^\star$ that minimize the expected return
\begin{subequations}
\label{eq:policy_obj}
\begin{align}
	 J^{\pi}(\theta) &\coloneqq \E_{S \sim \rho_0} \bigl[ V^{\pi_\theta}(S) \bigr] \\
     &=\frac{1}{1 - \gamma}  \E_{S \sim \rho^{\pi_\theta}(s),\; A \sim \pi_\theta(\cdot \mid S)} \left[l(S, A) \right],
\end{align}
\end{subequations}
where~$\rho^\pi$ is the discounted visitation frequency defined as follows.
Given a policy~$\pi_\theta$, the environment model~$P$ and an initial state~$s$, let~$p(s \rightarrow s^+, k, \pi_\theta)$ be the probability of reaching state~$s^+$ at time step~$k$ by starting from state~$s$ following policy~$\pi_\theta$.
The normalized discounted visitation frequency is defined by~$\rho^{\pi_\theta} (s^+) \coloneqq (1 - \gamma) \E_{S \sim \rho_0} \left[\sum_{k=1}^{\infty} \gamma^{k-1} p(S \rightarrow s^+, k, \pi_\theta) \right]$.
It is important to highlight that finding a policy that minimizes the objective of \eqref{eq:policy_obj} is less restricting as solving the \ac{MDP} over the full state space as defined in \eqref{eq:general_problem}.
The reasoning is that some parts of the state space might not be reached by the policy, which could be omitted if the support of the initial state distribution $\rho_0$ is required to cover the whole state space.

In order to find an update direction in which the expected return~\eqref{eq:policy_obj} improves, its gradient~$\nabla_\theta J^\pi(\theta)$ is required.
Different reformulations of \eqref{eq:policy_obj} exist that allow one to derive sample estimates of the gradient of the \ac{PG} objective, $\nabla_\theta J^\pi(\theta)$.
First, the stochastic policy-gradient theorem reformulates the policy gradient by
\begin{align}\label{eq:policy_gradient}
\nabla_\theta J^\pi (\theta)
    = \frac{1}{1 - \gamma} \E_{\substack{S \sim \rho^{\pi_\theta}(s),\\ A \sim \pi_\theta(\cdot \mid S)}} \left[Q^{\pi_\theta}(S, A) \nabla_\theta \log \pi_\theta (A \mid S) \right].
\end{align}
building the theoretical foundation of the REINFORCE algorithm~\cite{williams_simple_1992} or the first actor-critic methods~\cite{sutton_reinforcement_2018}.
A derivation is provided in \cite{sutton_reinforcement_2018}.

The second reformulation is called the deterministic policy gradient theorem.
Let~$\mu_\theta: \ST \rightarrow \AC$ be a deterministic policy.
By differentiating through the expected state-action value function~$Q^{\mu_\theta}$ the \ac{DPG} is obtained by
\begin{equation}\label{eq:deterministic_policy_gradient}
	\nabla_\theta J^\pi(\theta)  =  \frac{1}{1 - \gamma} \E_{S \sim \rho^{\pi_\theta}(s)} \left[ \left. \nabla_\theta \mu_\theta (S) \nabla_a Q^{\mu_\theta} (S, a) \right|_{a=\mu_\theta(S)} \right].
\end{equation}
A derivation is provided in \cite{silver_deterministic_2014}.
The advantage of the \ac{DPG} is particularly prominent in high-dimensional action spaces since in the stochastic \ac{PG}, actions~$A$ are sampled to estimate the gradient~\cite{silver_deterministic_2014}, leading potentially to a higher gradient variance.
Thus, the \ac{DPG} formulation is used in many of the state-of-the-art methods like \ac{TD3} \cite{fujimoto_addressing_2018} and \ac{SAC} \cite{haarnoja_soft_2018}.
A concrete algorithm for an actor-critic algorithm using the \ac{DPG} is provided in section~\ref{ssec:ddpg}.

An important distinction between the stochastic \ac{PG} and the \ac{DPG} is the ability to handle discrete action spaces.
With the stochastic \ac{PG}, discrete and continuous action spaces can be directly optimized, whereas the \ac{DPG} requires a differentiable parameterized policy with respect to the parameters.
To circumvent this problem, some work extends the \ac{DPG} to stochastic policies using relaxation techniques to differentiate through the sampling process of the discrete actions \cite{tilbury_revisiting_2023}.

\subsection{Deep Reinforcement Learning Methods}
\label{sec:algo_rl}
In the following, an overview of four influential deep \ac{RL} algorithms,  namely \ac{DQN}, \ac{DDPG}, \ac{PPO} and \acf{SAC} is given.\\

\subsubsection{Deep Q-Networks}
Function approximators like \acp{NN} approximate the action-value function to extend Q-learning to continuous state spaces.
One prominent implementation of this is \ac{DQN}~\cite{mnih_human-level_2015}, a combination of deep learning and \ac{RL}.
\ac{DQN} collects transition samples in a buffer~$\mathcal{D}^\mathrm{buffer}$ and minimizes the mean squared error between the current value function~$Q_w$ and the sampled target value by
\begin{multline} \label{eq:qlearning_dqn}
	\mathcal{L}^Q_\mathrm{DQN} (w) \coloneqq  \E_{(S, A, S^+) \sim \mathcal{D}^\mathrm{buffer}} \biggl[ \\
    \left( l(S, A) + \gamma \min_{a^\star} Q_{\bar{w}}(S^+, a^\star) - Q_w(S, A) \right)^2 \biggr].
\end{multline}
For the sample target, a fixed copy~$\bar{w}$ of the parameter~$w$ is used that is only periodically updated.
This stabilizes the training~\cite{mnih_human-level_2015}.
For an overview of different function approximation methods and their potential instabilities, see~\cite{sutton_reinforcement_2018}.

Given a parameterized Q-function~$Q_w$, the resulting update scheme from \ac{DQN} is given by
\begin{subequations}\label{eq:dqn_update_scheme}
\begin{algbox}{DQN $(Q_w \approx Q^\star)$}
\begin{align}
    & w \leftarrow w + \frac{\alpha_w}{B} \sum_{i=1}^B \delta_i \nabla_w Q_w (S_i, A_i), \label{eq:dqn_update} \\
    &\delta_i \coloneqq l (S_i, A_i) + \gamma \min_{a^\star} Q_{\bar{w}} (S^+_i, a^\star) - Q_w (S_i, A_i), \\
    &\textrm{with } (S_i, A_i, S^+_i) \sim \mathcal{D^\mathrm{buffer}}.
\end{align}
\end{algbox}
\end{subequations}
The update rule in \eqref{eq:dqn_update} provides a sample-based estimate of the gradient in \eqref{eq:qlearning_dqn}, where $B$ represents the batch size used for the update and $\alpha_w$ a learning rate.
Averaging over multiple samples is often called mini-batch training and leads to improved performance and accelerated convergence when training \ac{NN} \cite{lecun_efficient_1998}.
Exploration in \ac{DQN} is handled again by the $\epsilon$-greedy policy~$\pi^\epsilon_Q$.
Differently to the update scheme of Q-learning \eqref{eq:q_learning}, the buffer $\mathcal{D}^\mathrm{buffer}$ stores state and actions that were generated from previous policies derived from $Q_w$ earlier in the training.
Similar to Q-learning, \ac{DQN} is also an off-policy method.\\

\subsubsection{Deep Deterministic Policy Gradient}\label{ssec:ddpg}
Extending \ac{DQN}, \acf{DDPG} \cite{lillicrap_continuous_2016} is an off-policy actor-critic method that learns a deterministic policy $\mu_\theta$ and a value function $Q_w$ in parallel.
Both the actor and the critic are $\acp{NN}$.
The update rule of \ac{DDPG} is given by
\begin{algbox}{DDPG $(\mu_\theta \approx \pi^\star)$}
\begin{subequations}\label{eq:ddpg}
\begin{align}
    & w \overset{Q}{\leftarrow} w + \frac{\alpha_w}{B} \sum_{i=1}^B \delta_i \nabla_w Q_w (S_i, A_i), \\
    & \theta \overset{\mu}{\leftarrow} \theta + \frac{\alpha_\theta}{B} \sum_{i=1}^B \nabla_\theta \mu_\theta(S_i) \left.\nabla_a Q_w(S_i, a)\right|_{a=\mu_\theta(S_i)}, \label{eq:ddpg_actor} \\
    &\delta_i \coloneqq l (S_i, A_i) + \gamma Q_{\bar{w}} (S^+_i, A^+_i) -  Q_w (S_i, A_i), \\
    &\textrm{with } (S_i, A_i, S_i^+) \sim \mathcal{D^\mathrm{buffer}},\; A^+_i =  \mu_\theta(S_i^+),
\end{align}
\end{subequations}
\end{algbox}
where $\overset{Q}{\leftarrow}$ denotes the update for $Q$ and $\overset{\mu}{\leftarrow}$ denotes the update for $\mu$.
As can be seen from the update scheme, the critic~$Q_w$ is used to update the actor $\mu_\theta$ in \eqref{eq:ddpg_actor}, in that sense ``criticizing'' the actor.
Differently to the update rule of \ac{DQN} \eqref{eq:dqn_update}, where the greedy action is considered to build the temporal difference, in \ac{DDPG}, a single evaluation update with respect to the current policy~$\mu_\theta$ is performed.
The buffer~$\mathcal{D}^\mathrm{buffer}$ is filled up over time by using the policy~$\mu_\theta + \xi$, where $\xi$ is either an Ornstein-Uhlenbeck process for temporally correlated noise or a Gaussian~\cite{lillicrap_continuous_2016}.
\\

\subsubsection{Proximal-Policy Optimization}
A popular on-policy actor-critic method that can be used for discrete and continuous action spaces is \acf{PPO}~\cite{schulman_proximal_2017}.
Prior to each policy and critic update, data in the form of multiple episodes is collected.
Since \ac{PPO} is an on-policy method, transitions generated earlier in the training by outdated policies are discarded.
In practice, \ac{PPO} is often used with high-speed simulation environments, where generating new samples comes with low computational costs.
A main advantage of \ac{PPO} is its ability to prevent drastic parameter updates that could potentially destabilize the training.
Similar to trust region methods~\cite{schulman_trust_2015}, this is achieved by restricting the policy update.

During training, a stochastic policy~$\pi_\theta$ -- often a parameterized Gaussian -- is used.
Assuming a given initial state~$s_0$, a trajectory is drawn by the forward simulation~$S_{k+1} \sim P(\cdot | S_k, A_k)$ and~$A_k \sim \pi_\theta(\cdot | S_k)$ until a maximum roll-out length~$M$.
Given multiple roll-outs, an estimate~$\hat{A}(S_k, A_k)$ of the advantage function defined by~$A^\pi(S_k, A_k) \coloneqq Q^\pi(S_k, A_k) - V^\pi(S_k)$ can be derived, see \cite{schulman_high-dimensional_2016}.
Additionally, with the probability ratio~$ R_k(\theta) \coloneqq \pi_\theta(A_k|S_k) / \pi_{\bar{\theta}}(A_k|S_k)$, which measures how much the new policy~$\pi_\theta$ changes with respect to the current policy~$\pi_{\bar{\theta}}$, the \ac{PPO} clipping objective is defined by
\begin{align*}
	\begin{split}
		J^\pi_\mathrm{CLIP} (\theta) \coloneqq \E \Biggl[ \sum_{k=0}^{M - 1} &\max \Big\{ R_k (\theta) \hat{A}(S_k, A_k),\\
		&\mathrm{clip}(R_k(\theta), 1- \epsilon, 1 + \epsilon) \; \hat{A}(S_k, A_k) \Big\} \Biggr]\,,
	\end{split}
\end{align*}
where the $\mathrm{clip}$ function projects the ratio~$R_k$ to an interval from $1 - \epsilon$ to $1 + \epsilon$.
Note that the clipping objective requires maximization, whereas the original \ac{PPO} objective involves minimization, as in this work, costs are minimized rather than rewards being maximized.

One of the primary advantages of \ac{PPO} is its simplicity and ease of implementation~\cite{schulman_proximal_2017} compared to previous methods based on trust region optimization, see~\cite{schulman_trust_2015}.
A proof of convergence of \ac{PPO} to the optimal policy for discrete \acp{MDP} is given in \cite{grudzien_mirror_2022}.\\

\subsubsection{Soft Actor-Critic}
The \ac{SAC} algorithm~\cite{haarnoja_soft_2018, haarnoja_soft_2019} is a widely used off-policy actor-critic method that incorporates an entropy bonus for stochastic policies with higher entropy.
Using entropy regularization is considered in the framework of maximum-entropy \ac{RL} \cite{ziebart_maximum_2008}.
Like \ac{DQN} and \ac{DDPG}, \ac{SAC} optimizes the policy in an off-policy manner collecting transitions encountered during training in a replay buffer, leading to an improved sample efficiency when compared to \ac{PPO}.

\Ac{SAC} extends the objective~\eqref{eq:policy_obj} by introducing an entropy regularization term, leading to a ``soft'' policy gradient objective
\begin{multline*}
	J^\pi_\mathrm{soft}(\theta) \coloneqq \\ \frac{1}{1 - \gamma} \E_{S \sim \rho^{\pi_\theta}, \; A \sim \pi_\theta(\cdot | S)} \left[ l(S, A) + \lambda_\mathcal{H} \; \mathcal{H}(\pi_\theta(\cdot \mid S)) \right],
\end{multline*}
where $\mathcal{H}$ denotes the entropy of the paramterized policy~$\pi_\theta (\cdot | S)$ at the sampled state~$S$.
Given a distribution $X$, the entropy is defined by $\mathcal{H}(X) = \E [- \log(X) ]$.
The entropy regularization, scaled by the parameter~$\lambda_\mathcal{H}$, encourages exploration, stabilizes policy training \cite{ball_offcon3_2021} and can lead to more robust policies \cite{eysenbach_maximum_2022}.
Additionally, it has been shown in \cite{muller_essentially_2024} that for discrete \acp{MDP}, the error introduced by the entropy regularization decreases exponentially to the inverse regularization strength,~$1 / \lambda_\mathcal{H}$.

In \ac{TD3}, twin Q-networks addressing the overestimation bias in Q-value estimation were introduced \cite{fujimoto_addressing_2018}.
Whereas in \cite{haarnoja_soft_2018}, the fundamentals of \ac{SAC} are described, \cite{haarnoja_soft_2019} introduced an improved version using twin Q-networks and automatic tuning of the entropy regularization with $\lambda_\mathcal{H}$.
The latter builds the basis for most implementations in current \ac{RL} software frameworks.

\section{Model Predictive Control}
\label{sec:mpc}

This section introduces \ac{MPC}, a commonly used framework to obtain policies for continuous \acp{MDP}.
\Ac{MPC} utilizes a -- typically deterministic -- model that approximates the true stochastic environment~\cite{rawlings_model_2017, grune_nonlinear_2017}.
Starting from the current environment state, \ac{MPC} uses the internal model to predict how different choices of the planned control trajectory would affect the state trajectory and evaluates the cost associated with this prediction.
Usually, evaluating the \ac{MPC} policy involves solving an optimization problem online to obtain the control input.
In this section, we will first give an overview of \ac{MPC} problem formulations and the relevant considerations, followed by a discussion of algorithms used for finding their solution.

\subsection{MPC Problem Formulations}
\label{sec:mpc_prob}
Solving the \ac{MDP} optimization problem~\eqref{eq:general_problem} is, in general, intractable due to several reasons, including the infinite horizon, the optimization over the space of policy functions, and the expectation over nonlinear transformations of stochastic variables.
\ac{MPC} leverages several approximations of~\eqref{eq:general_problem} in order to derive a computationally tractable optimization problem.

As a first step, the optimal policy is computed only for the current state~$s$ and the infinite horizon in~\eqref{eq:general_expected_return} and~\eqref{eq:general_problem} is approximated by a finite horizon, resulting in the optimization problem
\begin{align}\label{eq:soc_finhorz} 
	\min_\pi \;&\E \left[\gamma^N \bar{V}^\pi(S_N)+ \sum_{k=0}^{N-1} \gamma^k l(S_k, A_k)    \right],
	\\
	\mathrm{where}\quad&S_0 = s, \;S_{k+1} \sim P(\cdot \mid S_k, A_k),\; A_{k} \sim \pi(\cdot \mid S_k),  \nonumber
\end{align}
where the terminal cost function~$\bar{V}^\pi$ is an approximation of the exact value (or cost-to-go) function~$V^\pi$.
In the second step, the true stochastic state transition~$S_{k+1} \sim P(S_k, A_k)$ is approximated by a simplified model.
The most commonly used formulation is nominal \ac{MPC}, in which a deterministic model is used, i.e., $x_{k+1} = f^\mpcShrt(x_k, a_k)$. Hence, uncertainty is not explicitly considered.
Here, we introduced~$x_k\in\mathcal{S}$ to denote predictions of the state within the MPC problem.
Since a deterministic model does not capture the possibility of deviations from the prediction, the planned action trajectory is a trajectory of fixed actions~$(u_0, \dots, u_{N-1})$, $u_k \in \mathcal{A}$, as opposed to a policy function~$\pi$.
The resulting deterministic optimal control problem is given by
\begin{align}
	\label{eq:mpc_from_mdp}
\begin{split}
\min_{u_0,\ldots,u_{N-1}}&\;\!\gamma^N \bar{V}^\star(x_{N}) +\sum_{k=0}^{N-1} \gamma^k l(x_k, u_k)  ,\\
\mathrm{where}\quad
&  x_0 = s,\; x_{k+1}= f^\mpcShrt(x_k,u_k),
\end{split}
\end{align}
where $\bar{V}^\star$ is an approximation of the optimal terminal value function.

In the MDP framework~\eqref{eq:mdp}, the stage cost~$l(x_k, u_k)$ may assign some regions of the state space an infinite cost in order to prohibit them.
Similarly, due to actuator limitations, the action space~$\mathcal{A}$ is often a compact subset of~$\R^{n_u}$.
In numerical optimization, this is typically handled by explicitly considering constraints~$h^\mpcShrt(x_k, u_k)$ and the terminal safe set~$h^\mpcShrt_N(x_k, u_k)$ as part of the problem formulation.
This results in a constrained \ac{NLP},
associated with the MPC value function~$V^\mathrm{MPC}(s)$ and the terminal value function~$\bar{V}^\mathrm{MPC}(s)$ within the NLP formulation, and can be stated as
\begin{mini!}|s|
	{z}
	{
		\!\!\bar{V}^\mpcShrt(x_{N})+\sum_{k=0}^{N-1} l^\mpcShrt(x_k, u_k) 
		\label{eq:mpc_cost}
	}
	{\label{eq:mpc-problem}}
	{\!\!V^\mpcShrt(s)\!=\!\!}
	\addConstraint{x_0}{= s \label{eq:mpc_main_c_initial}}
	\addConstraint{x_{k+1}}{= f^\mpcShrt(x_k, u_k),\;}{0\leq k < N \label{eq:mpc_main_c_dyn}}
	\addConstraint{0}{\leq h^\mpcShrt(x_k, u_k),}{1\leq k < N \label{eq:mpc_main_c_const}}
	\addConstraint{0}{\leq h_N^\mpcShrt(x_N), \label{eq:mpc_main_c_const_term}}{}
\end{mini!}
using the vector of decision variables~$z=(x_0,\ldots x_N, u_0, \ldots, u_{N-1})\in\R^{n_z}$.
In the above formulation, we introduced the state trajectory~$(x_0, \dots, x_{N})$ as additional decision variables, which are constrained to start at the given value of the current state~\eqref{eq:mpc_main_c_initial}, and to follow the system dynamics~\eqref{eq:mpc_main_c_dyn}.
This is in contrast to formulation~\eqref{eq:mpc_from_mdp}, where the state trajectory is considered as an explicit function of the action trajectory.
Both approaches are equivalent in terms of the solutions they admit. However, the iterations of numerical optimization algorithms may differ depending on the formulation.
The formulation in~\eqref{eq:mpc_from_mdp} is referred to as a single shooting or sequential formulation, whereas~\eqref{eq:mpc-problem} is a multiple shooting or simultaneous formulation.
For nonlinear unstable systems, the latter is typically preferable.

When deploying \ac{MPC}~\eqref{eq:mpc-problem}, the NLP is solved online at every discrete time instant, based on a specified value of the current state~$s$.
This yields an optimal trajectory of actions, $(u_0^\star, \dots, u_{N-1}^\star)$, of which only the first one, $u_0^\star$, is applied to the environment.
The resulting new state is then used as the initial state for the next optimization problem, and a new input trajectory is computed. Hence \ac{MPC}~\eqref{eq:mpc-problem} defines a policy.

Similarly to the definition of the value function in~\eqref{eq:mpc-problem}, we can define the corresponding Q-function by additionally fixing the initial action vector~$u_0$ to the given value, 

\begin{mini!}|s|
	{z }
	{
		\!\!\sum_{k=0}^{N-1} l^\mpcShrt(x_k, u_k)  \!+\! \bar{V}^\mpcShrt(x_{N})
	}
	{\label{eq:mpc-q-problem}}
	{\!\!Q^\mpcShrt(s, a)\!=\!\!}
	\addConstraint{x_0}{= s}
	\addConstraint{u_0}{= a}
	\addConstraint{x_{k+1}}{= f^\mpcShrt(x_k, u_k),\,\,}{0\leq k < N}
	\addConstraint{0}{\leq h^\mpcShrt(x_k, u_k),}{1\leq k < N }
	\addConstraint{0}{\leq h_N^\mpcShrt(x_N).}{}
\end{mini!}

Based on this Q-function and assuming a unique minimizer, the \ac{MPC} policy is 
\begin{align}
	\mu^\mpcShrt(s)=\arg \min_a ~Q^\mpcShrt(s, a),
\end{align}
where $\mu$ instead of $\pi$ is used to denote that the \ac{MPC} policy is deterministic.

We now take a closer look at the components of the \ac{MPC} problem~\eqref{eq:mpc-problem}, followed by various other relevant considerations. \\

\subsubsection{Dynamics model}
The central component is the dynamics constraint~\eqref{eq:mpc_main_c_dyn}, in which~$f^\mpcShrt(x,u)$ is a model of the environment~\eqref{eq:general_model}.
This model can be derived from first principles or identified from data.
In the case of nominal \ac{MPC}, it is deterministic and does not consider the stochasticity of~\eqref{eq:general_model}.\\

\subsubsection{Objective}
The stage cost~$l^\mpcShrt(x,u)$ in the \ac{MPC} objective~\eqref{eq:mpc_cost} commonly corresponds to the stage cost of the \ac{MDP}~\eqref{eq:mdp}.
Note that the stage cost may be implicitly time-varying by including an augmented clock state. Thus, it may implicitly include a discounting factor~$\gamma$.
Crucially, the \ac{MPC} optimization problem needs to be numerically tractable, and the ultimate goal is to provide a policy that achieves a sufficient closed-loop performance when applied to the real environment. Therefore, the stage cost may also be approximated by a function with favorable numerical properties.
Often, the stage cost imposes a convex penalty on the deviation of the action and state trajectories from a reference point or trajectory. This is referred to as regulatory \ac{MPC} or tracking \ac{MPC}. The focus in regulatory MPC problems is mostly the stabilization of systems. Historically, the ability of MPC to incorporate inequality constraints and to handle multiple inputs simultaneously justified the additional computational complexity. 
In contrast, economic \ac{MPC} uses a cost that directly expresses a quantity of interest to optimize, e.g., time, energy use, financial cost, or yield of a production process. Therefore, economic MPC is closely related to solving \acp{MDP} beyond stabilization.

The terminal cost~$\bar{V}^\mpcShrt(x)$
should ideally capture the cost-to-go at the end of the \ac{MPC} horizon, cf. \eqref{eq:soc_finhorz}.
The choice of horizon length and terminal cost is often crucial for the performance of the resulting \ac{MPC} policy.
Indeed, a close-to-optimal cost-to-go approximation via the terminal cost allows for shorter prediction horizons.
The approximation quality becomes less important as the horizon length grows, and $\bar{V}^\mpcShrt(x_{N}) \equiv 0$ is a widely used choice \cite{grune_economic_2013}. 
However, the NLP~\eqref{eq:mpc-problem} becomes computationally more expensive for longer horizons.
Thus, the horizon length is typically limited by the available computational resources.
In practice, the terminal cost function is often chosen heuristically or based on stability considerations. Still, there is also research on how to explicitly select it as an approximation of the infinite-horizon cost (with respect to the MPC cost), e.g., by simulating forward a pre-selected simple feedback law \cite{ nicolao_stabilizing_1998, nicolao_stabilizing_1996, diehl_online_2003, diehl_efficient_2004}.
When stabilizing a system at a steady state using a locally smooth convex stage cost, a straightforward choice is the infinite horizon \ac{LQR} \cite{anderson_optimal_1990} cost computed for the system resulting from the linearization of the model~\eqref{eq:mpc_main_c_dyn} at that steady state.\\

\subsubsection{Constraints}
The final components are the stage and terminal constraints.
Stage constraints~\eqref{eq:mpc_main_c_const} can be used to avoid prohibited regions of the state space and to take into account actuator constraints.
Terminal constraints~\eqref{eq:mpc_main_c_const_term} can be used to ensure stability and recursive feasibility of the resulting \ac{MPC} policy. 
For the terminal constraints, considerations similar to the terminal cost apply.
In principle, they should capture the system's future behavior over the infinite horizon and ensure that the \ac{MPC} plan will not be too short-sighted regarding the stability and recursive feasibility of the resulting \ac{MPC} policy.

Constraints can lead to situations in which~\eqref{eq:mpc-problem} is infeasible for the current initial state value~$s$, i.e., there exists no value for the decision variables such that all constraints are satisfied.
Consequently, the solver would not be able to return a solution, and the evaluation of the MPC policy would fail.
Thus, for practical \ac{MPC} implementations, it can often be helpful not to enforce the constraints strictly but to penalize their violation.
For exact penalties with sufficiently high penalty weight, constraint satisfaction is guaranteed if the original problem
is feasible \cite{nocedal_numerical_2006, kerrigan_robust_2000}.
Otherwise, a solution that minimizes the constraint violation is returned.\\

\subsubsection{Uncertainty-aware MPC}
In contrast to the deterministic model in~\eqref{eq:mpc_main_c_dyn}, stochastic and robust \ac{MPC} formulations \cite{rawlings_model_2017-1, mesbah_stochastic_2016, kouvaritakis_model_2015} explicitly take into account the uncertainty of the model prediction.
The former models the uncertainty as a probability distribution, whereas the latter predicts bounded sets of all possible realizations (respective tractable outer approximations).
Both can be separated into scenario \cite{shapiro_lectures_2009, calafiore_scenario_2006} and tube \cite{langson_robust_2004, mayne_tube-based_2011} approaches.
Scenario approaches consider discrete distributions, which may be obtained by sampling from a continuous distribution.
This can take the form of sampling several disturbance trajectories separately \cite{nagabandi_neural_2018} and planning the corresponding state trajectories in parallel or of constructing a tree of scenarios that are branched at every time step \cite{kouzoupis_dual_2019}.
Tube approaches predict parameterized approximations of the state distribution respective uncertainty set trajectories.
Typical parametrizations are, e.g., normal distributions, ellipsoids \cite{kurzhanski_ellipsoidal_1997,houska_robust_2011,gillis_positive_2013,villanueva_robust_2017} or polytopes \cite{rakovic_parameterized_2012, rakovic_robust_2019, fleming_robust_2015, rakovic_homothetic_2022, villanueva_configuration-constrained_2024}.
This allows for a finite-dimensional representation of the uncertainty, such that a tractable \ac{NLP} is obtained.

Both scenario and tube approaches can also be classified into open-loop and closed-loop formulations.
Open-loop formulations plan only one fixed trajectory of actions.
This can quickly lead to unrealistically conservative uncertainty predictions because they do not encode that the noise will be counteracted in the real environment by feedback~\cite{ben-tal_adjustable_2004}.
Closed-loop predictions consider future feedback, leading to more realistic predictions.
Using scenario trees, this can be achieved by planning a distinct action for every tree node.
This implicitly corresponds to planning over policies with respect to the discretized disturbance space because the actions depend on past disturbances. Here, nonanticipativity with respect to the causality of the policy should be carefully considered.
Closed-loop tube approaches usually consider explicitly parameterized simple feedback laws, e.g., linear feedback, which reacts to state deviations from the tube center. These feedback laws can be precomputed \cite{mayne_tube-based_2011, parisini_receding-horizon_1995} or optimized \cite{nagy_open-loop_2004, messerer_efficient_2021, villanueva_configuration-constrained_2024}.
While the optimization of state feedback gains is highly nonconvex even for linear systems, the optimization over affine disturbance feedback leads to equivalent convex, but also higher-dimensional,  optimization problems \cite{ben-tal_adjustable_2004,goulart_optimization_2006}. 
In the context of polyhedral sets, it is, under some assumptions, sufficient to consider only their vertices, which results in tree-structured formulations \cite{scokaert_min-max_1998, diehl_formulation_2007}.\\

\subsubsection{Optimization problem classification}
Depending on the mathematical form of the functions in~\eqref{eq:mpc-problem}, the optimization problem can be classified differently.
This is relevant, as it informs both the choice of solution algorithm and the theory regarding the resulting policy.
In \ac{LMPC}, the model is linear, the constraint functions are affine, and the cost functions are convex quadratic, resulting in a \ac{QP}. \Ac{LMPC} problems can be solved reliably and efficiently.
This is often used in contrast to \ac{NMPC}, where typically the model is nonlinear, and the resulting optimization problem is an \ac{NLP}~\eqref{eq:mpc-problem}. 
When the action space is discrete, this corresponds to an additional restriction of the action variables to the space of integers, resulting in a \ac{MINLP} or \ac{MIQP}~\cite{reiter_mixed-integer_2021}.
If the dynamics contain nonsmooth or discontinuous events, such as contact physics, this leads to \acp{MPCC}~\cite{nurkanovic_numerical_2023}.\\

\subsubsection{Suboptimality and control theory}
As discussed, the \ac{MPC} problem~\eqref{eq:mpc-problem} leverages several forms of approximations of the \ac{MDP}~\eqref{eq:general_problem}.
This opens the question of how the resulting \ac{MPC} policy behaves with respect to the MDP or when applied to a real system.
Since it solves the \ac{MDP} only approximately, it can be considered a form of suboptimal control.
An overview of several sources of suboptimality can be found in \cite{bertsekas_dynamic_2005}.
The suboptimality from the finite horizon approximation is analyzed in \cite{grune_infinite_2008, karapetyan_finite-time_2023, li_performance_2023}.
The consequences of approximating the expected value via sampling are addressed in the stochastic programming literature in the context of the sample average approximation \cite{shapiro_lectures_2009}.
In \cite{hadjiyiannis_efficient_2011}, the authors analyze the suboptimality resulting from affine feedback parametrization in a robust problem formulation.
The suboptimality of nominal MPC in a stochastic environment is analyzed in \cite{fleming_stochastic_1971,messerer_fourth-order_2024}, and regret bounds in \cite{lin_bounded-regret_2022}.

The control theory literature often asks a different, though closely related, question, see, e.g., \cite{rawlings_model_2017, grune_nonlinear_2017}.
It investigates under which conditions the \ac{MPC} policy exhibits desirable behavior.
This includes system theoretical properties of the resulting closed-loop system, such as stability \cite{chen_quasi-infinite_1998, mayne_constrained_2000, limon_input--state_2009}, and properties like recursive feasibility \cite{grune_nmpc_2012, lofberg_oops_2012, mayne_apologia_2013}, i.e., the controller should not maneuver itself into a state in which the \ac{MPC} problem~\eqref{eq:mpc-problem} becomes infeasible.
\Ac{MPC} should also be able to work under model-plant mismatch and reject disturbances. This is referred to as inherent robustness, and the theory covers nominal \ac{MPC}, which uses no uncertainty model \cite{scokaert_discrete-time_1997,nicolao_robustness_1996,magni_stability_1997, mcallister_inherent_2022,grimm_examples_2004,yu_inherent_2014}, but can also be extended to stochastic MPC \cite{mcallister_inherent_2022}, even if it is designed with respect to a wrong disturbance model \cite{mcallister_inherent_2024}.
Additionally, fast-paced applications may only allow for a suboptimal solution to~\eqref{eq:mpc-problem} in the assigned computational budget.
Stability \cite{scokaert_suboptimal_1999, diehl_nominal_2005} and inherent robustness results  \cite{pannocchia_conditions_2011,allan_inherent_2017} also exist for this case.
Further, while stability results are usually derived for continuous action spaces, they can also be generalized to discrete actuators \cite{rawlings_model_2017-1}.

\subsection{Numerical Methods for MPC Problems}
\label{sec:mpc_algo}

We distinguish two common and fundamentally different approaches to -- possibly approximately -- solving finite-horizon \ac{OCP} formulations:
Sampling-based methods and methods leveraging derivative-based numerical optimization.
In the following, we briefly introduce and discuss both approaches, focusing on algorithms that address the nominal OCP formulation.\\

\subsubsection{Sampling-Based Methods}

As a first class of methods, we consider sampling-based approaches that aim at finding an approximate solution to the stochastic open-loop or nominal OCP by sampling control trajectories and evaluating the associated cost via forward simulation.

The simplest sampling-based method, also known as random shooting and used, e.g., in \cite{piovesan_randomized_2009, nagabandi_neural_2018}, considers a finite number of independently sampled action sequences and the corresponding state trajectories, which are obtained via forward simulation.
The algorithm then chooses the open-loop action trajectory associated with the lowest cost as an approximate solution.

A second, more sophisticated, sampling-based approach is the \ac{CEM}~\cite{rubinstein_optimization_1997}, where the probability distribution generating the open-loop action trajectory samples is iteratively refined based on previously sampled actions yielding low costs.
In particular, \ac{CEM} samples a finite number of action sequences and evaluates their associated costs via forward simulation.
The action sequences yielding the lowest cost trajectories are used to adapt the probability distribution from which new action samples are generated.
Typically, a Gaussian distribution or a Gaussian mixture model is used, in which the mean and covariance are adapted at each step.
The method has been successfully implemented for \ac{MPC} in \cite{kobilarov_cross-entropy_2012, chua_deep_2018}.


As a third sampling approach, we consider \mppi{}, which provides a framework \cite{kappen_linear_2005, theodorou_relative_2012, theodorou_nonlinear_2015, williams_model_2017} for solving \ac{MPC} problems via trajectory sampling.
In particular, the approach directly addresses the stochastic formulation without resorting to the nominal problem~\eqref{eq:mpc-problem}.
As the method is derived based on a continuous-time model, we refer the interested reader to \cite{williams_model_2017} for an in-depth derivation.
Crucially for this survey, the method poses some strong restrictions on the structure of the model and cost:
First, the stochasticity of the dynamics needs to enter via the actions, i.e., $f_{\textsc{MPPI}}(s_k, a_k+w_k)$, where~$w_k$ is a random variable in the dimension of the actions.
Secondly, the cost~$l^\mpc$ is assumed to be separable in states and actions, as well as quadratic in the actions with a weighting matrix that is inversely proportional to the noise variance.
Furthermore, the main theoretical results and optimality (in the limit of infinite samples) only hold if the dynamics are affine in controls and noise \cite{theodorou_nonlinear_2015}. 
Similar to \ac{CEM}, \ac{MPPI} adaptively updates the probability distribution from which action trajectories are sampled.
\ac{MPPI} uses a weighted average to update the mean of this distribution where the weights are based on the associated costs.
Since the underlying algorithm of \ac{MPPI} requires limited implementation efforts, many papers use custom implementations~\cite{goldfain_autorally_2019}. However, recently an efficient implementation as part of \texttt{TorchRL}~\cite{bou_torchrl_2023} and a CUDA-based parallel computation framework was published~\cite{vlahov_mppi-generic_2024}.

Sampling-based approaches naturally allow for stochastic models \cite{kantas_sequential_2009, chua_deep_2018} and can thus directly tackle the stochastic open-loop \ac{OCP}.
Furthermore, sampling-based methods can be applied to problems with highly nonlinear, or even non-smooth, costs or dynamics as long as a simulator is available.
While sampling-based optimization methods are typically straightforward to implement and benefit from parallelization, they scale poorly with the dimension of both the planning horizon and the dimension of the action space, i.e., they severely suffer from the curse of dimensionality.
Furthermore, state constraints can be addressed only via penalty reformulations or the rejection of infeasible samples.
For highly constrained systems, the rejection method further increases the sampling complexity.\\


\subsubsection{Derivative-Based Numerical Optimization}
\label{sec:mpc_algo_derivative}

As a second class of methods, we discuss derivative-based numerical optimization.
On this account, approaches derived from a continuous-time \ac{OCP} formulation, such as, e.g., indirect methods, collocation, and pseudospectral methods, as well as numerical simulation methods, fall into this class, but are outside of the scope of this survey.
We refer the interested reader to \cite{rao_survey_2009, binder_introduction_2001} for a survey on numerical methods starting from a continuous-time formulation and to \cite[Chap.~8]{rawlings_model_2017} for a textbook overview of direct methods.

Assuming that all problem functions in~\eqref{eq:mpc-problem} are sufficiently smooth, \ac{OCP}~\eqref{eq:mpc-problem}, which is typically referred to as multiple shooting formulation~\cite{bock_multiple_1984}, can directly be addressed with standard numerical methods for constrained nonlinear optimization.
While multiple shooting formulation~\eqref{eq:mpc-problem} keeps both states and actions as optimization variables, one may alternatively eliminate the states from~\eqref{eq:mpc-problem} via the equality constraints yielding the single shooting formulation as introduced at the beginning of this survey and given in~\eqref{eq:mpc_from_mdp}.
The single-shooting formulation typically results in dense subproblems with few optimization variables and can thus be efficiently solved by general-purpose nonlinear solvers.
In the following, we focus on the multiple shooting problem, given in~\eqref{eq:mpc-problem}, and tailored numerical methods addressing the particular problem structure arising from this formulation.

First, we distinguish between numerical methods that use first-order derivative information versus approaches that use second-order derivative information. First-order methods are less computationally complex but may require many more iterations. Second-order methods require less but computationally more complex iterations.
Due to their low complexity, first-order methods have been considered for solving \acp{OCP} in particular in the context of embedded applications \cite{patrinos_accelerated_2013, jerez_embedded_2014, kapernick_gradient_2014, englert_software_2019, kouvaritakis_model_2015}.
Furthermore, a tailored method for the scenario-based \ac{OCP} formulations based on \ac{ADMM} has been developed in \cite{kang_decomposition_2015}.

Second-order methods include nonlinear \ac{IP} methods and \ac{SQP}, two widely used classes of methods for numerical optimization.
We will briefly discuss both of them in the following.
For a more detailed overview of second-order numerical methods for optimal control, we refer to~\cite{magni_efficient_2009}.

Nonlinear interior point methods tackle the non-smooth \ac{KKT} conditions associated with the constrained nonlinear optimization problem by formulating an approximate but smooth root-finding problem parametrized by a homotopy parameter, which is iteratively lowered towards zero -- in the limit recovering the nonsmooth optimality conditions.
The intermediate root-finding problems are solved via Newton-type iterations.
Nonlinear interior point methods tailored to the \ac{OCP} structure are discussed in \cite{rao_application_1998, frey_active-set_2020, frison_hpipm_2020, vanroye_fatrop_2023}.

Within the \ac{SQP} framework, a sequence of quadratic approximations of the nonlinear \ac{OCP} is solved.
The quadratic subproblems arising in \ac{SQP} might, in turn, be solved using an interior point method or an active-set solver.
\ac{SQP} methods can typically be warm-started, rendering them particularly attractive for \ac{MPC} applications where the solutions to subsequent problem instances are expected to be similar.
For further implementation details tailored to \ac{SQP} methods for MPC, such as full and partial condensing, we refer to the survey in \cite{kouzoupis_recent_2018}.
\ac{SQP}-type solvers tailored to \ac{OCP} are implemented in \cite{houska_acado_2011, verschueren_acadosmodular_2022}.
A widely used approximate approach closely tied to \ac{SQP} is the \ac{RTI} \cite{diehl_real-time_2002, bock_constrained_2007, gros_linear_2016}.
Within the \ac{RTI} framework, a single iteration of an \ac{SQP} method is performed, i.e., a single \ac{QP} approximation to the nonlinear \ac{OCP} is solved, in order to obtain an approximate solution drastically reducing the computation time per iteration.

Both \ac{IP} and \ac{SQP} methods require the computation of (an approximation of) the Hessian of the Lagrangian associated with the \ac{OCP}.
While the exact Hessian yields locally quadratic convergence to the solution \cite{nocedal_numerical_2006}, its computation is typically costly.
This motivates the use of Hessian approximations, which are cheaper to compute.
A common choice for OCP is the Gauss-Newton Hessian \cite{gauss_theoria_1809, bock_recent_1983}, which is applicable to nonlinear least squares objectives, and typically very cheap to compute (or even for free, in the case of quadratic objectives).
As an additional advantage, the resulting subproblems are convex by construction, unlike the exact Hessian, which may yield nonconvex subproblems.
However, due to the neglected curvature, in general only a linear convergence rate is achieved.
The core principle of Gauss-Newton can also be extended to problem classes beyond nonlinear least squares, cf. \cite{messerer_survey_2021} for an overview.
Quasi-Newton methods define alternative choices of Hessian approximation. Most prominently, these include the BFGS Hessian, which, with each iteration, converges towards the exact Hessian, yielding a superlinear convergence rate \cite{nocedal_numerical_2006}.


Independently of the particular choice of Hessian approximation, the subproblems encountered in both \ac{IP} and \ac{SQP} methods applied to the multiple shooting formulation~\eqref{eq:mpc-problem} exhibit a particular sparsity pattern, which is typically exploited by tailored solvers.


Another second-order method, which can not directly be interpreted within the Newton-type framework, is \ac{DDP}, originally proposed in~\cite{mayne_second-order_1966}.
\ac{DDP} is also based on solving \ac{QP} subproblems via a Riccati recursion, followed by a nonlinear forward sweep of the system, which employs the linear feedback law returned by the Riccati recursion.
The DDP variant using a Gauss-Newton Hessian approximation, which is more commonly referred to as~\ac{iLQR}, especially within the robotics community \cite{tassa_control-limited_2014}, has been introduced in \cite{li_iterative_2004, todorov_generalized_2005}.
In their standard form, \ac{DDP} and \ac{iLQR} cannot directly handle additional constraints, but extensions to \acp{OCP} with input bounds have been proposed e.g. in \cite{tassa_control-limited_2014, marti-saumell_squash-box_2020}.
\Ac{SLQ}~\cite{sideris_efficient_2005} is often mentioned in the context of \ac{DDP}, although it can more precisely be classified as an \ac{SQP} method with a Gauss-Newton Hessian approximation applied to the single shooting \ac{OCP} \eqref{eq:mpc_from_mdp} for which each of the \ac{QP} subproblems is solved in a sparsity exploiting manner, i.e., by a Riccati recursion, cf. also \cite{baumgartner_unified_2023}.

Crucially, all discussed methods will generally converge to a local optimum and thus require a sufficiently good initialization.
The local rate of convergence is determined by the accuracy of the Jacobian and Hessian approximation \cite{nocedal_numerical_2006}.


Only in the particular case of a convex NLP, such as for \acp{LMPC}, convergence to a global optimum can be guaranteed.
Under these assumptions, the solution map can be shown to be continuous and piecewise affine.
This fact is leveraged in explicit MPC where the optimal feedback law is precomputed offline in order to minimize online computation \cite{bemporad_explicit_2000}.
Explicit MPC is usually limited to small state dimensions, few inequality constraints, or short horizons. Explicit NMPC was investigated e.g. in \cite{grancharova_explicit_2012}

In addition to algorithms and software focusing on nominal \ac{OCP} formulations, numerical methods tailored to tree-structured problems arising in open-loop as well as closed-loop stochastic formulations have been developed 
\cite{steinbach_tree-sparse_2002, klintberg_improved_2016, frison_high-performance_2017, kouzoupis_recent_2018, kouzoupis_structure-exploiting_2019}.
Numerical methods addressing the tube-based open-loop and closed-loop stochastic \ac{OCP} formulation are presented in
\cite{patil_scaling_2015, feng_inexact_2020, zanelli_zero-order_2021, frey_efficient_2024} and 
\cite{goulart_efficient_2008, messerer_efficient_2021, leeman_fast_2024}, respectively.

In contrast to the exponential scaling of sampling-based approaches, e.g., shown for linear unstable systems in~\cite{yoon_sampling_2022}, solving the nonlinear \ac{OCP} via numerical optimization alleviates the curse of dimensionality, as the computational complexity of the nominal problem typically scales linearly with the horizon and polynomially with the state and control dimension.
On the other hand, the sub-problems need to be optimization-friendly and require the availability of derivative information. In comparison, sampling-based methods only require (fast) forward simulation.


\section{Comparison of MPC and RL}
\label{sec:comparison}

Starting from the general problem of solving \acp{MDP}, we have shown how \ac{RL} and \ac{MPC} are different techniques that aim at deriving optimal policies, cf., Sect.~\ref{sec:rl} and Sect.~\ref{sec:mpc}, respectively. 
\Ac{MPC}, for instance, emerged from the problem of solving multivariable constrained control problems, initially with the goal of setpoint stabilization~\cite{morari_model_1999}.
In contrast, \ac{RL} methods aim at maximizing closed-loop performance without necessarily relying on a model of the real environment.
Their discrepancy is unsurprising since both approaches were developed in parallel communities with different focuses. 
This is further substantiated in the nearly orthogonal properties of \ac{MPC} and \ac{RL}, which were reviewed and emphasized in a case study for a specific linear system in~\cite{gorges_relations_2017}.
The more recent adaptions towards economic \ac{MPC}, e.g.~\cite{angeli_average_2012} fit the paradigm of solving \acp{MDP} and, hence, bring the goals of both communities closer together.

Considering the practicalities of \ac{MPC} and \ac{RL} of solving \acp{MDP}, further particularities appear, which we compare in the following. The comparison is concluded in Tab.~\ref{tab:comparison_checkbox} that shows an overview of relevant properties, similar to~\cite{gorges_relations_2017} and Tab.~\ref{tab:literature_comparison}, from work that explicitly compares~\ac{MPC} and \ac{RL} in practical applications.
We compare requirements on the state space representation and the properties of the mathematical model, such as smoothness or continuity required by MPC.
We refer to~\cite{bertsekas_reinforcement_2019, bertsekas_dynamic_2005} for further theoretical comparisons.\\

\begin{table}
\caption{Comparison of practical properties between MPC and RL. The evaluation is simplified and conceptual. Exemptions may exist.}
	\centering
	\ra{1.2}
	\resizebox{\linewidth}{!}{%
	\begin{tabular}{@{}lrr@{}}
		\addlinespace
		\toprule
		Property & MPC& RL\\
		\midrule
		state-space & model specific \xmark& (quite) arbitrary \cmark\\
		model requirements & \makecell[r]{differentiability/\\ online simulation} \xmark& offline simulation \cmark\\
		uncertainty & \makecell[r]{guarantees with\\ known uncertainty} \cmark & \makecell[r]{probabilistic\\ guarantees} \cmark\\
		stability  & strong theory \cmark & minor theory \xmark \\
		constraint handling  & inherent \cmark & hard to achieve \xmark \\
		online computation time  & high \cmark & low \xmark \\
		offline computation time  & low \xmark & high \cmark \\
		adaptability  & inherent \cmark & needs retraining \xmark \\
		generalization  & inherent \cmark & \makecell[r]{poor when \\ out-of-distribution} \xmark \\
		\bottomrule
	\end{tabular}}
	\label{tab:comparison_checkbox}
\end{table}

In the following, the main practical and conceptual differences between MPC and RL are stated. \\

\subsubsection{State Space}
Real environments can only be approximately described by a state, which is usually not always directly measurable, leading to the concept of \acp{POMDP}. So far, we omitted a discussion about \acp{POMDP} and only mentioned some points required for a high-level discussion on the state space.
Considering an environment where the state~$s$ is unknown and only observations~$O_k\in\R^{n_o}$ are made at step~$k$, the most general concept of a state space would involve the collection of all observations~$O_0,O_1,\ldots$ and applied actions~$A_0, A_1, \ldots$. Even this very general concept may not allow to fully describe the real environment due to partial observability. However, often these observations, or even the $M$ most recent observations~$\mathcal{O}_M=\big(O_{k-M},A_{k-M},\ldots,A_{k-1},O_{k}\big)$ at step~$k$ are sufficient to estimate the relevant states of an environment~\cite{bakker_reinforcement_2001}.  

In \ac{MPC} and \ac{RL} algorithms, the state space is interpreted conceptually differently.
In derivative-based \ac{MPC}, the state space is usually constructed by relating it to an optimization-friendly model, usually having a physical interpretation following differential equations.
Yet, alternative approaches also consider a sequence of measurements as state~\cite{coulson_regularized_2019}, cf. Sect.~\ref{sec:mpc_algo}.
States used within derivative-based \ac{MPC} do not necessarily correspond to the measured sensor outputs and, thus, are often estimated by a state observer that converts a series of observations~$\mathcal{O}_M$ into a state estimate.

\ac{RL} methods and sampling-based \ac{MPC} methods may use states based on a corresponding physics-based model and the related estimators \cite{ceusters_model-predictive_2021,mamedov_safe_2024,brandi_comparison_2022} or a state observer. However, the state may also be kept as a recent history of raw sensor data, which could be based on a variety of different input modalities, such as images or text.
Often, end-to-end learning is used, where the raw sensor input, such as images, is fed directly to \acp{NN}. The \ac{NN} outputs subsequently the controls, e.g.,~\cite{kaufmann_champion-level_2023}.
It is common to make the learned dynamics model or policy dependent on a window of the most recent history of observations and actions~$\mathcal{O}_M$, cf., \cite{puterman_markov_2005}, which is particularly useful for \acp{POMDP}.
While an \ac{NN} architecture does not explicitly model states, the stacking of layers, the related transformations, the exploitation of equi/invariances, and the condensing of information can be interpreted as modeling of hidden states, cf.~\cite{lambrechts_recurrent_2022}. Subsequent layers can be interpreted as the policy based on these hidden states. Remarkably, this structure is not enforced explicitly. Also, recent approaches of the more classical observer/controller architecture propose to tune MPC and the state estimator together~\cite{simpson_efficient_2023}.
\\

\subsubsection{Model and Application}
Highly related to the state space are the characteristics of a potentially used model.
As explained in Sect.~\ref{sec:mpc}, \ac{MPC} requires a model to simulate the dynamics. Whereas sampling-based MPC only requires the forward simulation of a model, derivative-based MPC involves the computation of gradients through the model.
Therefore, derivative-based \Ac{MPC} requires an optimization-friendly model with stark limitations to its structure since the model is part of the optimization problem. 
The MPC optimization problem becomes particularly challenging if the model is non-smooth~\cite{nurkanovic_numerical_2023}, stochastic, or contains integer variables~\cite{ceusters_model-predictive_2021}.

A physically motivated prediction model, which is often used in MPC, has the advantage of better understanding and explaining the environment behavior, often referred to as explainability. The possibility of predicting interpretable model states allows for the straightforward definition of constraints. 
For example, a certain velocity must not be exceeded in a vehicle control problem. This constraint can be readily formulated by a model that predicts the system's velocity accurately.
Besides physically motivated models, more general models such as \acp{NN} or nonparametric models such as Gaussian processes~\cite{hewing_cautious_2020} can be used. These models still have the advantage of predicting the environment. However, the explainability of the states can be lost.  

Not requiring the explicit modeling of the environment is a major claim of \ac{RL}.
However, that statement needs some additional framing.
In fact, many successful model-free \ac{RL} applications use models, at least for training the policy~\cite{singh_reinforcement_2022}. 
One fundamental difference between models used for \ac{RL} or \ac{MPC} is that \ac{RL} models are often used for offline simulation but not during deployment of the final policy.
This allows an abundance of complex computations involved in the simulation, which could comprise logical statements or complex high-fidelity models.
Even though most \ac{RL} methods still require simulation models, real-world \ac{RL} is an active research field \cite{dulac-arnold_challenges_2021} making significant progress \cite{smith_demonstrating_2023}.



The low inference time of NNs makes it appealing to increase the sampling frequency of \ac{RL} algorithms for faster feedback loops.
However, a too-small discretization time step in \ac{RL} can result in the function approximation error exceeding the value difference of actions, rendering deep \ac{RL} methods that rely on function approximation useless~\cite{tallec_making_2019}.
Standard offline system identification techniques used to identify the MPC model also involve problems with too high sampling frequencies, e.g., a decreased signal-to-noise ratio and an ill-conditioned model~\cite{ljung_system_1999}.
However, in derivative-based MPC, the choice of the sampling frequency is often limited by the solution time of the underlying optimization problem.\\ 

\subsubsection{Intrinsic stochasticity}

As introduced in \eqref{eq:general_model}, the environment is typically assumed to be intrinsically stochastic.
Thus, even if the environment is perfectly known, it is not possible to precisely predict future state trajectories.
Both robust and stochastic \ac{MPC} explicitly take into account this uncertainty, which is in this context typically referred to as noise or disturbance.
Robust \ac{MPC} is, in a sense, agnostic to the question of whether the uncertainty is intrinsic or due to a systematic modeling error: it considers all trajectories possible under the given assumptions.
For stochastic \ac{MPC}, on the other hand, it is important to be aware of whether the prediction errors correlate over time since this affects the predicted state distribution. 
Intrinsically stochastic noise is often assumed to be independent, i.e., noncorrelating.

As stochasticity is inherently part of the \ac{MDP} framework, \ac{RL} algorithms naturally consider stochastic environments. The policies can be trained directly on the real-world environment to account for the real-world uncertainty, cf., Sect.~\ref{sec:rl}. 
A form of stochasticity particularly challenging for \ac{RL} algorithms are rare events, i.e., large but rare deviations from the average obtained cost. Learning value estimates with rare events is particularly difficult~\cite{frank_reinforcement_2008}.\\

\subsubsection{Model mismatch}
%


In practice, the environment for which a policy is designed or trained will often differ from the one it is deployed on.
Thus, it needs to be ensured that the policy will still perform well during deployment.
Standard \ac{RL} algorithms can be biased towards the specific model used during training.
Different strategies like domain randomization, robust \ac{RL}~\cite{morimoto_robust_2000,moos_robust_2022}, and meta \ac{RL} tackle model uncertainty. 
In all of these approaches, the agent is trained not only on a single model but also on a potentially adaptive distribution of models, which can improve the generalization to new models~\cite{finn_model-agnostic_2017}.
The underlying assumption is that the real-world environment lies in the distribution of the training models~\cite{finn_model-agnostic_2017}.
Optimization-based meta \ac{RL} \cite{finn_model-agnostic_2017} learns weights that can quickly adapt to new models. In contrast, in-context meta \ac{RL} uses history-dependent policies to infer the current dynamics of the environment~\cite{brown_language_2020}.

Like in the case of intrinsic stochasticity, both stochastic and robust \ac{MPC} can explicitly take model mismatch into account. For stochastic \ac{MPC}, it is important to consider that predictive errors due to model uncertainty are typically strongly correlated across time.
Note that the model used in a stochastic \ac{MPC} formulation is typically a simplification or approximation of the environment, leading to a mismatch of the stochastic \ac{MPC} model with respect to the environment.
The field of distributionally robust \ac{MPC} aims to robustify against mismatches in the distribution model \cite{van_parys_distributionally_2016}.
Even without considering the model mismatch explicitly, \ac{MPC} can perform remarkably well. This is the property of inherent robustness of \ac{MPC}, as explained in more detail in Sect.~\ref{sec:mpc}.\\

\subsubsection{Stability}
Stability theory is usually not a major concern of RL algorithms since the main objective is closed-loop performance and practical stability instead. However, as denoted in Sect.~\ref{sec:mpc}, asymptotic stability and constraint satisfaction are, or were at least historically, the main focus of \ac{MPC} algorithms and applications.
A widely used tool for analyzing the stability of control problems involves the construction of Lyapunov functions.
If a Lyapunov function exists for a controlled deterministic MDP, it is said to be asymptotically stable, i.e., trajectories converge. If the MDP is stochastic, the concept of input-to-state-stability can be applied, which requires the trajectories of the controlled system to converge to a region around the origin whose magnitude depends on the maximum norm of the noise~\cite{rawlings_model_2017}.

Input-to-state stability and asymptotic stability may not be applied to general \acp{MDP} since they require certain properties of the cost function, such as a feasible origin. Therefore, the authors in~\cite{gros_economic_2022} propose a stability concept named D-stability that applies to \acp{MDP} and generalizes the dissipativity theory of economic MPC.\\

\subsubsection{Constraints}
The ability to account for constraint satisfaction and stability are some of the main reasons MPC is outstanding compared to other control techniques.
In recent years, the \ac{RL} community also has increasingly focused on providing safety guarantees~\cite{brunke_safe_2022}.
A widely adopted framework in \ac{RL} involves modeling constraints using constrained MDPs~\cite{altman_constrained_1999}.
Following~\cite{brunke_safe_2022}, these include: Hard constraints, the constraints are always fullfilled, chance constraints, the constraints are fullfilled with high probability, and constraint violations transformed as accumulated costs.
The latter can be either formulated such that the accumulated costs can not exceed a safety budget or as a penalty integrated into the task objective~\cite{brunke_safe_2022}.
The choice of formulation dictates the strategies employed to address these constraints.
Common approaches include deriving safe action sets~\cite{kalweit_deep_2020}, optimizing a Lagrange formulation with dual gradient descent~\cite{ray_benchmarking_2019}, using trust-region optimization~\cite{achiam_constrained_2017} or control-barrier functions~\cite{cheng_end--end_2019}.
Additionally, using MPC~\cite{wabersich_predictive_2021} to provide safety guarantees related to a nominal model was proposed.
Further work suggests augmenting the \ac{RL} state with Lagrange multipliers \cite{calvo-fullana_state_2024} or with the currently used safety budget \cite{sootla_saute_2022}.
Still, a common problem is that safe \ac{RL} policies become either too conservative or, otherwise, may violate constraints~\cite{zhang_constrained_2024}.
Thus, combining~\ac{MPC} and~\ac{RL} is highly desirable for constraint satisfaction.\\

\subsubsection{Online Computation Time}
A major concern in embedded applications is the maximum online computation time of an algorithm on embedded hardware. The maximum inference time of many \ac{NN} architectures, which are often explicit functions, can usually be tightly bounded. However, the computation time of optimization solvers for \ac{NMPC} problems is often unbounded. In fact, it cannot be guaranteed in general that a meaningful solution, i.e., at least a feasible solution, is returned by the optimization algorithm outside of particular optimization problem classes such as convex \acp{QP}~\cite{richter_computational_2011, arnstrom_real-time_2023}. For NMPC, the primal and, possibly, dual variable initialization of the optimization algorithm is essential for fast online computations. If the variables are sufficiently close to the optimal solution, the local convergence rate is fast. In fact, it can be quadratic or superlinear, depending on different numerical algorithms, cf.~\secref{sec:mpc}. \\

\subsubsection{Offline Computation, Engineering, and Maintenance}
In contrast to their fast inference time, RL algorithms usually require a tremendous amount of training samples, and therefore, training time, even for small-sized~\acp {MDP}. The training may be performed without human intervention. Yet in practice, fine-tuning on the RL hyperparameters may be required~\cite{zhang_importance_2021}. Besides the initial system identification, standard MPC does not require further training time. 
System identification differs from RL training by the target of predicting relevant outputs and possibly states of a system. In contrast, the RL target and the potential internal model focus on closed-loop performance.
In the RL approach, the possible indirect internal model of the environment is learned only for the particular task described by the MDP.
The classical system identification may also use models based on physical models, which makes it possible to adapt the controller to changing environments or requirements. RL may require a whole new training data set for changes in the cost, model changes, or changes in the distribution of the states-space, e.g., an adjusted operating range in the environment.\\

\subsubsection{Generalization}
To some extent, \ac{RL} policies can generalize out of their training distribution but with hardly any predictable behavior and guarantees on the closed-loop performance \cite{nasvytis_rethinking_2024}.
Thus, practitioners often need to ensure that the support of the training distribution covers the state and transition distribution encountered during deployment~\cite{ross_reduction_2011}.
An alternative is to further train the RL policy during deployment ~\cite{song_rapidly_2020}.
If the model used within \ac{MPC} can approximate the real environment well on the whole state space, MPC generalizes well, and safety and performance guarantees can be found. Arguably, the model may generalize well based on available knowledge, starting from first principles and physical or mathematical insight. The knowledge about the model results in a more interpretable generalization and is among the main motivations for MPC in general, learning-based MPC or model-based RL. It may not be possible to evaluate whether a physical model approximates the real environment well on the full state space of the environment. However, the physical explanation may also provide insights into its limitations. \\

\subsubsection{Performance in Practice}
In practice, the expected performance difference between \ac{MPC} and \ac{RL} depends on the environment's specific characteristics, computational resource availability, and the model or data quality.
Both approaches have their strengths and weaknesses, depending on the particular requirements and constraints of the control problem. Several works compare both approaches on specific applications; see Tab.~\ref{tab:literature_comparison}.
The problems differ vastly, from drone racing~\cite{kaufmann_champion-level_2023} to multi-energy environments that involve integer variables~\cite{ceusters_model-predictive_2021}.
The MPC formulations vary depending on the applications.
For high sampling times, fast solvers such as \acados{}~\cite{verschueren_acadosmodular_2022} are used. Instead, for combinatorial problems, computationally highly demanding mixed integer solvers, such as \gurobi{}~\cite{gurobi_optimization_llc_gurobi_2023}, are required.
Most of the authors use the same state space for MPC and RL, despite the RL's ability to cope with arbitrary information inputs, which is exploited only in~\cite{kaufmann_champion-level_2023,oh_quantitative_2024}.
For a known model, most authors report superiority of \ac{MPC} w.r.t. the closed-loop performance, with less than~$4\%$ cost reduction in~\cite{ernst_reinforcement_2009, brandi_comparison_2022} and~$16\%$ in \cite{wang_comparison_2023} when compared to RL. In~\cite{mamedov_safe_2024}, the authors report that MPC outperforms RL on a flexible robot manipulation task with a rather high-dimensional state space. RL was claimed to be superior to MPC for environments with poor models, e.g., in~\cite{hasankhani_comparison_2021, ceusters_model-predictive_2021}, yet robust control techniques are not implemented. The authors in~\cite{kaufmann_champion-level_2023} claimed superiority of a particular choice of RL algorithms against MPC in \ac{RWE} of racing drones.

\begin{table*}
\caption{Literature that compares MPC with \ac{RL} on practical applications. The table lists whether \acf{RWE} were performed and which \ac{RL} and \ac{MPC} algorithms were used for the experiments.}
	\centering
	\resizebox{\textwidth}{!}{%
	\begin{tabular}{@{}lllllll@{}}
		\toprule
		\multicolumn{7}{c}{Application-oriented comparison between \ac{MPC} and \ac{RL}}\\
		\midrule
		Ref. & Authors &Year & Application & \ac{RWE} & \makecell[l]{\ac{MPC} Formulation\\ (Algorithm, Solver)} & \ac{RL} Algorithm\\
		\midrule
\cite{ernst_reinforcement_2009} & \etal{Ernst} &2009 & power system & No & NMPC (\ac{IP}, \na) & fitted Q iteration~\cite{ernst_iteratively_2003}   \\
\cite{ceusters_model-predictive_2021} & \etal{Ceusters} &2021 & multi-energy system & No & MILP-MPC (\texttt{cplex}~\cite{cplex_v12_2009}) & TD3, \ac{PPO}   \\
\cite{hasankhani_comparison_2021} & \etal{Hasankhani} &2021 & ocean current turbine & Yes & \na{}& \ac{DQN} \\
\cite{lin_comparison_2021} & \etal{Lin} &2021 & cruise control & No & \ac{NMPC} (IP, \ipopt{}~\cite{wachter_implementation_2006}) & \ac{DDPG} \\
\cite{brandi_comparison_2022} & \etal{Brandi} &2022 & thermal energy system & Yes & MILP (\na{}) & \ac{SAC}   \\
\cite{di_natale_lessons_2022} & \etal{Di Natale} &2022 & building temperature control & Yes&  \makecell[l]{GP-MPC~\cite{hewing_cautious_2020},\\Bilevel DeePC~\cite{coulson_data-enabled_2019},\\(custom)} & TD3    \\
\cite{dobriborsci_experimental_2022} & \etal{Dobriborsci} & 2022 & mobile robot & Yes & NMPC  (SLSQP~\cite{noauthor_scipy_nodate}) & \ac{DQN}   \\
\cite{byravan_evaluating_2022} & \etal{Byravan} & 2022 & MuJoCo locomotion & No & \makecell[l]{Sequential Monte \\Carlo~\cite{piche_probabilistic_2018}, CEM} & MPO~\cite{abdolmaleki_maximum_2018}   \\
\cite{wang_comparison_2023} & \etal{Wang} &2023 & building temperature control & No &  \ac{NMPC} (IP, \ipopt{}~\cite{wachter_implementation_2006}) & \ac{SAC}, \ac{DDPG}, DDQN  \\
\cite{song_reaching_2023} & \etal{Song} &2023 & drone racing & Yes &  \ac{NMPC} (SQP, \acados{}\cite{verschueren_acadosmodular_2022}) & \ac{PPO}\\
\cite{shi_model-based_2023} & \etal{Shi} &2023 & vehicle parking & No &  \ac{NMPC} (\na) & \ac{PPO}\\
\cite{imran_comparison_2023} & \etal{Imran} &2023 & traffic control & No & MIQP-MPC (\gurobi{}~\cite{gurobi_optimization_llc_gurobi_2023}) & \ac{DQN} \\%
\cite{reiter_hierarchical_2023} & \etal{Reiter} &2023 & autonomous racing & No & \ac{NMPC} (SQP, \acados{}\cite{verschueren_acadosmodular_2022}) & \ac{SAC}\\ 
\cite{morcego_reinforcement_2023} & \etal{Morcego} &2023 & greenhouse climate control & No & \ac{NMPC} (IP, \ipopt{}~\cite{wachter_implementation_2006})  & \ac{DDPG} \\
\cite{mamedov_safe_2024} & \etal{Mamedov}&2024 & flexible robot manipulation & No & \ac{NMPC} (SQP, \acados{}\cite{verschueren_acadosmodular_2022}) & \ac{PPO}, \ac{SAC}  \\
\cite{hoffmann_comparison_2024} & \etal{Hoffmann} &2024 & vehicle lane change & No &  \na{} & \ac{PPO}, Q-learning  \\
\cite{oh_quantitative_2024} & Oh &2024 & chemical and biological processes & No &  \makecell[l]{LMPC (\na{}),\\ \ac{NMPC} (IP, \ipopt{}~\cite{wachter_implementation_2006})} & \ac{SAC}, \ac{DDPG}, TD3  \\
		\bottomrule
	\end{tabular}}
	\label{tab:literature_comparison}
\end{table*}

\section{Combination Approaches}
\label{sec:comb_arch}

As pointed out in the previous sections, \ac{RL} is a collection of algorithms to learn an optimal policy for \acp{MDP} by interacting with the environments. \Ac{MPC}, in contrast, refers to a mathematical program that implicitly approximates the \ac{MDP}. 
The two approaches exhibit different advantages. Therefore, a combination of both is appealing.

We propose a categorization of combination approaches that distinguishes in which algorithmic part of \ac{RL} the \ac{MPC} framework is used. The related general \ac{RL} building blocks are the actor, the critic, and possibly an expert actor, c.f., Fig.~\ref{fig:structure1}. Accordingly, we propose the categories
\begin{enumerate}
    \item MPC as an expert actor, cf. Sect.~\ref{sec:comb_expert},
    \item MPC within the deployed policy, cf. Sect.~\ref{sec:mpc_as_actor},
    \item MPC as part of the critic, cf. Sect.~\ref{sec:comb_critic}.
\end{enumerate}
In the following, we shortly outline the different categories.\\

\begin{figure}
	\includegraphics[width=.49\textwidth]{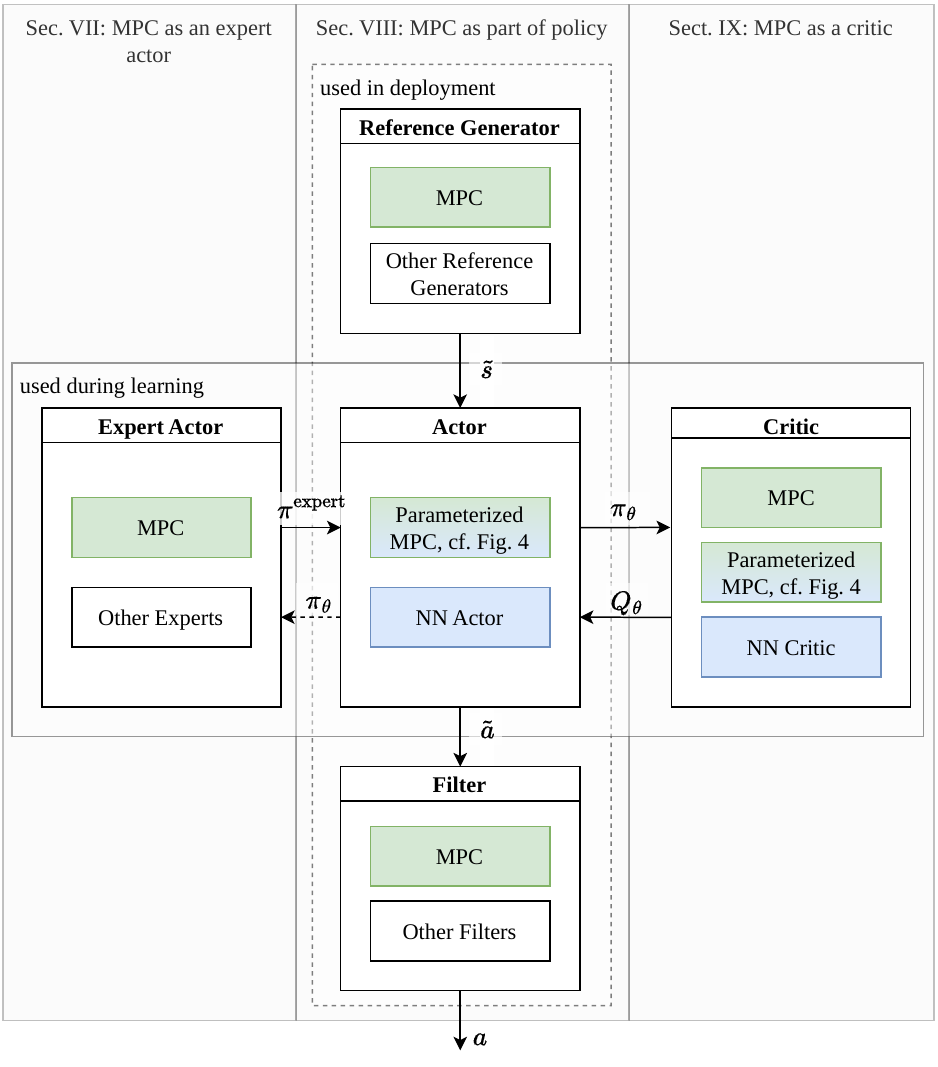}
	\caption{Modular view on combinations of MPC and RL. The combinations are aligned with the sections of this survey.
    The horizontal separation corresponds to using MPC as part of the expert actor, the deployed policy, and the RL critic. This overview highlights the possibility of using several instances of MPC with different roles in various parts of an RL algorithm and a deployed policy.
    \ac{MPC} is used with fixed expert parameters within the expert actor, as a reference generator, as a postprocessing filter, or, possibly, in the critic. Learned \ac{MPC}, i.e., an MPC structure involving learned parameters, can be used within the learned actor or possibly within the learned critic.}
	\label{fig:structure1}
\end{figure}

\textbf{MPC as an expert actor.}
An \ac{MPC} with fixed parameters and the desired behavior can be used as an expert to learn policies. 
The expert behavior can either be used by an \ac{IL} algorithm that tries to purely mimic the behavior or to guide the exploration process during \ac{RL} training~\cite{levine_guided_2013}. MPC used as an expert is elaborated in Sect.~\ref{sec:comb_expert}.\\

\textbf{MPC within the deployed policy.} Another common variant of how \ac{MPC} and \ac{RL} are combined is using MPC within the deployed policy. A parameterized \ac{MPC} can be used as part of the learned actor, as, for instance, proposed by~\cite{gros_economic_2022}. Closely related are concepts that use MPC as a posterior filter or reference provider as an element of the deployed policy but are not used during learning. For instance, the authors in \cite{wabersich_predictive_2021} use MPC as a safety filter and~\cite{jenelten_dtc_2024} use MPC as a reference provider, cf. Fig.~\ref{fig:structure1}. All concepts that use MPC within the deployed control policy are elaborated in Sect.~\ref{sec:mpc_as_actor}.\\

\textbf{MPC as part of the critic.} An \Ac{MPC}, possibly parameterized with learnable parameters, can further be used to evaluate the value function at a given state, i.e., the critic uses an \ac{MPC} variant. For instance, the in~\cite{ghezzi_imitation_2023} uses an MPC with fixed parameters to compute the value at a given state. This structure is reviewed in Sect.~\ref{sec:comb_critic}. \\

Sections \ref{sec:comb_expert} to \ref{sec:comb_critic} describe combination approaches of MPC and RL conceptually, leaving out a detailed theoretical discussion. Nonetheless, the additional theory Sect.~\ref{sec:theory} highlights some important findings and relevant literature when combining the two approaches.\\

To the best of the authors' knowledge, in the following review, the most important works that combine \ac{MPC} and \ac{RL} can be classified according to the proposed categories in Fig.~\ref{fig:structure1}. 
The works are further compared related to their application, whether \acp{RWE} on embedded hardware was performed, which \ac{MPC} type and solver the authors used, and which \ac{RL} algorithm the authors applied.
While the MPC algorithm classification is more distinct, the classification of the RL algorithm can be ambiguous and intertwined with the overall presented algorithm. 
Besides widespread RL algorithms such as SAC and PPO, several forms of \ac{DP}, such as \ac{PI} or \ac{VI}, forms of \ac{IL} such as \ac{BC}, and even supervised learning for learning value functions, such as \ac{MLE}, are denoted as RL algorithm in the tables.
Applications include \acf{UGV}, temperature control, energy systems, chemical processes, traffic management, autonomous racing with vehicles or drones, and \acf{AD}.

\subsection{Architectures of Parameterized MPC}
\label{sec:architectures}

Before diving into the particular approaches of how MPC is used as part of RL or IL algorithms, a short classification of parameterized MPC architectures that use \acp{NN} is given.
This distinction is particularly important for designing combination approaches as it centrally governs the desired properties and choices of algorithms.

We consider an MPC optimization layer that can be parameterized by a parameter~$\phi\in\R^{n_\phi}$ and may depend on the decision variables~$z$ of the MPC optimization problem and the current state~$s$  through a highly nonlinear \ac{FA}~$\phi=\varphi_\theta(s,z)$, e.g., an \acf{NN} with the (very) high dimensional parameter vector~$\theta\in\R^{n_\theta}$. 
Since \acp{NN} are the most common \acp{FA}, we consecutively mainly write \ac{NN} but also comprise other forms for \acp{FA}.
When fixing the parameters~$\phi$, the MPC optimization problem is assumed to contain only optimization-friendly objective, model, and constraint functions.

A general form of the parameterized MPC, based on~\eqref{eq:mpc-problem}, can be written as
\begin{mini!}|s|
	{z }
	{
		L_\phi(z)\label{eq:mpc_main_cost}
	}
	{\label{eq:mpc_hierarchical}}
	{}
	\addConstraint{x_0}{= s \label{eq:mpc_main_c_initial1}}
	\addConstraint{x_{k+1}}{= f^\mpcShrt_\phi\big(x_k, u_k\big),\,\;}{0 \leq k < N\label{eq:mpc_main_c_dyn1}}
	\addConstraint{0}{\leq h^\mpcShrt_\phi\big(x_k, u_k\big),\,\;}{0 \leq k < N \label{eq:mpc_main_c_const1}}
	\addConstraint{0}{\leq \tilde{h}_\phi^\mpcShrt\big(x_N\big),}{}
\end{mini!}
and the parameterized objective is
\begin{equation*}
	L_\phi(z):=\bar{V}_\phi^\mpcShrt(x_{N}) +\sum_{k=0}^{N-1} l_\phi^\mpcShrt\big(x_k, u_k\big)  .
\end{equation*}

For easing the notation, all constraints of~\eqref{eq:mpc_hierarchical} are summarized by~$g_\phi(z;s)\geq0$. Thus, the optimization problem~\eqref{eq:mpc_hierarchical} can be concisely written as
\begin{equation}\label{eq:integrated architecture}
	\min_{z} L_\phi(z)\quad\text{s.t.}\quad g_\phi(z;s)\geq0.
\end{equation}
Note that we use the notation $g_\phi(z;s)$ to indicate that~$z$ are decision variables of the optimization problem and~$s$ are parameters.
The following architectures are proposed within this context, starting with the most general one, i.e., the integrated architecture, cf. Fig~\ref{fig:structure2}.\\

\begin{figure}
	\includegraphics[width=\linewidth,trim={0 0 0 0},clip]{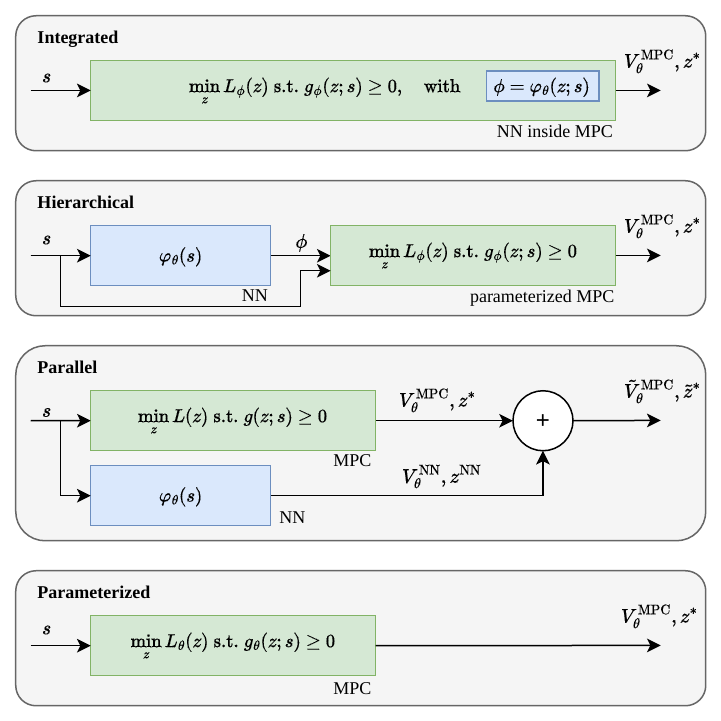}
	\caption{Parameterized MPC architectures: Proposed architectures of actors (or potentially critics) used in \ac{RL} utilizing \acp{MPC} and \acp{NN}/simple parameters. In the \integrated architecture, the \ac{NN} is part of the optimization layer and depends on the decision variables. Suppose an \ac{NN} can be evaluated separately from the optimization routine but provides the parameters to an MPC optimization layer. In that case, the architecture is referred to as \hierarchical.
    If the MPC problem is solved in parallel to the NN, we refer to a parallel architecture.
    If the learned parameters do not depend on the current state~$s$, the architecture is referred to as \paramterized{}}.
	\label{fig:structure2}
\end{figure}

\subsubsection{Integrated Architecture}
In the \integrated{} setting, the parameters depend on both the state~$s$ and the optimization variables~$z$ via the function~$\nnfun_{\theta}(s,z)$, as, for instance, within the algorithm proposed by~\cite{lowrey_plan_2019}.
Therefore, the \ac{NN} is part of the optimization problem and can not be evaluated separately in the inference path, leading to a highly nonlinear and, possibly, nonconvex optimization problem, which may be computationally challenging. 
Particularly, when using derivative-based solvers, this involves differentiating the highly nonlinear parameterized function~$\nnfun_{\theta}(z;s)$ w.r.t. the decision variables~$z$ by $\nabla_z\nnfun_{\theta}(z;s)$ as part of solving the MPC optimization problem.
Note that the optimization layer may provide any of the optimal decision variables~$z^\star$ or the MPC objective function~$V^\mathrm{MPC}_\theta$ of the MPC optimization problem.\\

\subsubsection{Hierarchical Architecture}
In the \hierarchical{} architecture, the neural network provides an input to the MPC. Therefore, the parameter~$\phi$ depends on the current states~$s$ via the highly nonlinear function~$\phi=\nnfun_{\theta}(s)$.
For example, a reference trajectory parameterized by~$\phi$ could be provided in case this architecture is used as a policy, such as in~\cite{brito_where_2021, reiter_hierarchical_2023}, or $\phi$ could parameterize constraints in the \ac{OCP}, similar to~\cite{jacquet_n-mpc_2024}.
Notably, the function~$\nnfun_{\theta}(s)$ can be evaluated before solving the optimization problem. In this case, the potential nonlinearity of~$\nnfun_{\theta}(s)$ does not influence the numerical optimization problem structure, i.e., the optimization problem does not get harder to solve, despite the changing parameters. \\

\subsubsection{Parallel Architecture}
In the parallel architecture, the MPC problem is usually not parameterized. However, an NN is evaluated in parallel to the MPC optimization problem, and its output is used to correct the optimal solution for decision variables or the value function of the MPC. This architecture holds the advantage of not requiring differentiating the optimization problem and the disadvantage of potentially unsafe actions due to the perturbation of the MPC actions. An example of this architecture used to approximate the value function can be found in~\cite{bhardwaj_blending_2021}.\\

\subsubsection{Parameterized Architecture}
In the parameterized MPC architecture, the parameters~$\phi$ are constant and independent of decision variables~$z$ or the state~$s$. 
Conceptually, a parameterized optimization-friendly MPC problem is formulated that does not require the evaluation of highly nonlinear functions to obtain the paramter~$\phi$.
For instance, such a parameterization is proposed in~\cite{gros_data-driven_2020,moradimaryamnegari_model_2022} and \cite{liu_learning_2023}.\\

Above, we introduced a number of architectures for parameterizing MPC with potentially highly nonlinear functions such as \acp{NN}. These architectures may be used in the following MPC and RL combinations, whereby some parameterized MPC architectures are more or less favored in particular MPC and RL combinations.

\section{MPC as an Expert Actor}
\label{sec:comb_expert}
For some problems, it is possible to design an expert \ac{MPC} that achieves good closed-loop performance.
A thorough collection of relevant literature can be found in Tab.~\ref{tab:literature_expert} and a sketch in Fig.~\ref{fig:sketches_expert_actor}.
\begin{figure}
	\includegraphics[width=\linewidth]{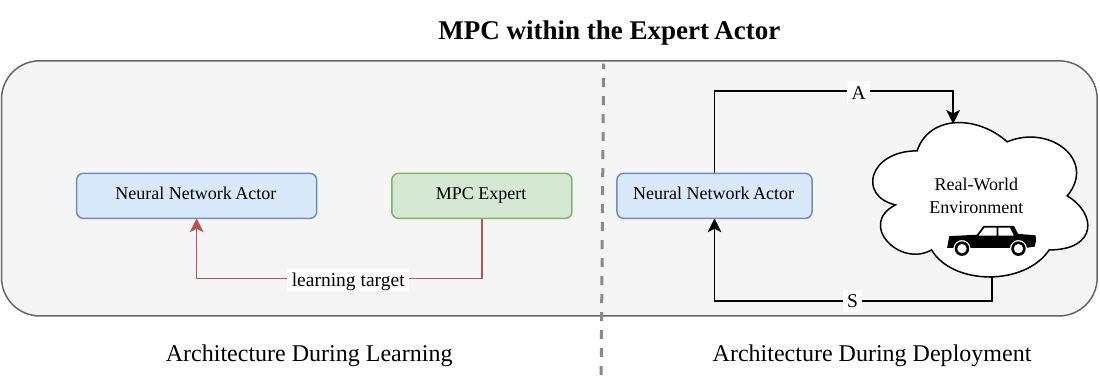}
	\caption{Combinations: MPC as an expert actor. The plot is split into the learning and deployment phases. Blue boxes indicate Neural Networks (NNs), and green boxes are used for MPCs.}
	\label{fig:sketches_expert_actor}
\end{figure}
There are multiple ways to exploit such an expert motivating the structure of the following section:

\begin{enumerate}[label=\Alph*.]
    \item \textbf{Imitation learning from \ac{MPC}}, Sect.~\ref{sec:imitation}. Here, the goal is to replace the expert \ac{MPC} with a \ac{NN} to achieve faster computation time using imitation learning.  
    \item \textbf{Guided Policy Search using \ac{MPC}}, Sect.~\ref{sec:gps}. In this paradigm, the expert \ac{MPC} policy improves the exploration process during the \ac{RL} training guiding the learned policy to low-cost regions.
\end{enumerate}
Besides the explored application, the table specifies whether the algorithm is related to \acf{IL} or \ac{GPS} and which particular MPC, RL, and IL algorithms are used as a basis.

\begin{table*}
	\caption{Literature that uses \ac{MPC} as an expert actor for \ac{RL}. The table lists the applications and whether \acf{RWE} were performed. Additionally, the \ac{IL}/\ac{RL} and \ac{MPC} algorithms are stated. note that \ac{RL} algorithms can be used to perform \ac{IL}, e.g., \cite{kang_rl_2023}. The table further differentiates whether the expert adapts to the approximation error of the learned policy via \acf{GPS}}
	\centering
	\resizebox{\textwidth}{!}{%
	\begin{tabular}{@{}llllllll@{}}
		\toprule
		\multicolumn{7}{c}{\ac{MPC} as an Expert Actor}\\
		\midrule
		Ref. & Authors & Year & Application & \ac{RWE} & \makecell[l]{\ac{MPC} Formulation\\ (Algorithm, Solver)} & \ac{IL}/\ac{RL} Algorithm & \ac{GPS} \\
		\midrule
\cite{levine_guided_2013} & \etal{Levine} & 2013 & MuJoCo locomotion & no & NMPC (custom \ilqr{}) & Off-policy \ac{PG} & yes\\
\cite{levine_variational_2013} & \etal{Levine} & 2013 & MuJoCo locomotion & no & NMPC (custom \ilqr{})& Var. Policy Search \cite{neumann_variational_2011} & yes \\
\cite{mordatch_combining_2014} & \etal{Mordatch} & 2014 & MuJoCo locomotion & no & NMPC (custom \ilqr{}) & \ac{BC}/\ac{ADMM} & yes \\
\cite{levine_end--end_2016} & \etal{Levine} & 2016 & \makecell[l]{PR2 robot arm\\ robot manipulation} & yes & NMPC (custom \ilqr{}) & \ac{BC}/Bregman \ac{ADMM} \cite{wang_bregman_2014} & yes \\
\cite{sun_fast_2018} & \etal{Sun} & 2018 & autonomous driving & no & NMPC (IP, \ipopt{}~\cite{wachter_implementation_2006}) & DAgger~\cite{ross_reduction_2011} & no \\
\cite{nagabandi_neural_2018}          &\etal{Nagabandi} & 2018 & MuJoCo locomotion & no & \makecell[l]{NMPC, random\\ sampling (custom)} & DAgger~\cite{ross_reduction_2011}, TRPO & yes \\
\cite{wang_exploring_2019} & \etal{Wang} & 2019 & MuJoCo locomotion & no & \ac{CEM} & \ac{BC}, \ac{IL}& no \\
\cite{pinneri_extracting_2021} & \etal{Pinneri} & 2021 & MuJoCo locomotion & no & \ac{CEM} & DAgger~\cite{ross_reduction_2011} & no\\
\cite{sacks_learning_2022} &\etal{Sacks}& 2022 & robot arm & no &  \ac{MPPI} (custom) &  DAgger~\cite{ross_reduction_2011} & no\\
\cite{dawood_handling_2023}  & \etal{Dawood} & 2023 & Kuboki Turtlebot 2 & yes &
NMPC (SQP, \acados{}\cite{verschueren_acadosmodular_2022}) & \ac{SAC}, \ac{TD3}& no\\
\cite{kang_rl_2023} & \etal{Kang} & 2023 & \makecell[l]{Unitree Go1 and \\Aliengo (quadruped)} & yes & NMPC (\na) & \ac{PPO}& no\\
\cite{ahn_model_2023} & \etal{Ahn} & 2023 & random linear system & no & LMPC (\na) & DAgger~\cite{ross_reduction_2011}& no \\
\cite{lidec_enforcing_2023} & \etal{Le Lidec} & 2023 & robot arm & no & NMPC (FDDP~\cite{mastalli_crocoddyl_2020}) & \ac{BC}/\ac{ADMM}& yes \\
\cite{mamedov_safe_2024} & \etal{Mamedov} & 2024 & flexible robot arm & no & NMPC (SQP, \acados{}~\cite{verschueren_acadosmodular_2022})& \makecell[l]{BC, DAgger~\cite{ross_reduction_2011},\\AIRL~\cite{fu_learning_2018}, GAIL~\cite{ho_generative_2016}}& no\\
\cite{hoffmann_plannetx_2024} & \etal{Hoffmann} & 2024 & autonomous driving & no & LMPC (SQP, \acados{}~\cite{verschueren_acadosmodular_2022}) & custom \ac{IL} & no \\
\cite{schulz_learning_2024} & \etal{Schulz} & 2024 & cart pole, spacecraft & no & NMPC (SQP, \acados{}~\cite{verschueren_acadosmodular_2022}) & \ac{TD3} & yes \\
\bottomrule
	\end{tabular}}
 \label{tab:literature_expert}.
\end{table*}

\subsection{Imitation Learning from MPC}
\label{sec:imitation}
In a problem setting where MPC achieves a sufficient closed-loop performance burdened only by its online computation time, a trained \ac{NN} may replace the \ac{MPC}. The typically fast inference of the NN may drastically decrease the online computation time by omitting the time-consuming online optimization.
In fact, in many real-world applications~\cite{mamedov_safe_2024,dawood_handling_2023,kang_rl_2023}, the computation time of the \ac{MPC} solver is a significant limitation and potentially even unbounded, particularly for more complex structures, e.g., involving nonlinear systems, stochasticity or discrete decisions, c.f.~Sect.~\ref{sec:mpc_algo}.
Two research directions try to alleviate the problem of large online computation times. 

First, for linear systems, linear constraints, and convex quadratic costs, the optimal feedback control law is piece-wise linear within a polytope of the state space \cite{bemporad_explicit_1999}. 
Within explicit \ac{MPC} \cite{bemporad_model_2002,grancharova_explicit_2012}, these polytopes and their control laws are computed offline and switched during deployment by determining the active region.
For large state spaces or a large number of constraints, explicit \ac{MPC} becomes quickly intractable.
Thus, approximate explicit \ac{MPC} only approximately stores the control law of the \ac{MPC} policy \cite{johansen_approximate_2003}.

A learning-based approach to approximate explicit \ac{MPC} is using \ac{IL} to learn the \ac{MPC} expert.
Methods can learn the whole trajectory predictions~\cite{vaupel_accelerating_2020} or learn the first applied action~\cite{hertneck_learning_2018}.
Early works \cite{akesson_neural_2006, parisini_receding-horizon_1995} and several follow-up works \cite{hertneck_learning_2018, cao_deep_2020, moraes_neural_2020, chen_approximating_2018, karg_efficient_2020, pinneri_extracting_2021} are aiming to imitate a complex \ac{MPC} with a neural network, by training the \ac{NN} with supervised learning methods.
Further aspects of the learned \ac{NN} policy where analyzed regarding safety \cite{cosner_end--end_2022, mamedov_safe_2024}, stability \cite{schwan_stability_2023} and robustness \cite{hertneck_learning_2018}.
Another important aspect is the verification of the learned \ac{NN}, where the object of interest is the worst-case approximation error to the \ac{MPC} expert.
This can be done in a probabilistic fashion using Hoeffdings inequality, such as in \cite{hertneck_learning_2018, drgona_learning_2024}, where confidence bounds are derived for an estimate of how often an approximation error threshold might be exceeded.
Alternatively, verification techniques like in \cite{schwan_stability_2023} solve an \ac{MIQP} to bound the worst-case approximation error.
For a general overview of \ac{NN} verification, see \cite{everett_neural_2021}.

Often, a surrogate loss function as in standard \ac{BC} minimizes the squared distances between the predicted action of the \ac{NN} and the action of the \ac{MPC} expert. However, it disregards the costs and structure of the underlying \ac{OCP} while learning the policy.
Thus, a natural replacement is to use instead the Q-function of the \ac{MPC}, cf. \eqref{eq:mpc-q-problem} as a learning objective \cite{carius_mpc-net_2020, reske_imitation_2021, ghezzi_imitation_2023}.
Since the $\ac{MPC}$ provides a Q-value function in addition to the expert policy, cf. Fig.~\ref{fig:structure1}, we categorize such methods in a class named \ac{MPC} as a critic, as further discussed in Sect.~\ref{sec:comb_critic}.

\subsection{Guided Policy Search using MPC}
\label{sec:gps}
In the previously mentioned \ac{IL} methods, the \ac{MPC} expert does not consider the progress of the learned policy during training.
In the \acf{GPS} regime, the expert gradually guides the learned policy to better trajectories \cite{levine_guided_2013, levine_variational_2013, mordatch_combining_2014, levine_end--end_2016}.
This is ensured by coupling the trajectory optimization and policy learning by requiring that the \ac{MPC} expert does not deviate too far from the current predictions of the learned policy \cite{mordatch_combining_2014}.
The final learned policies can generalize even when the \ac{MPC} expert fails to find a solution~\cite{levine_end--end_2016}.
Further, the authors in~\cite{mordatch_combining_2014} have shown superior performance over policies learned from a fixed dataset of expert trajectories.

Another possibility to guide the exploration process for \ac{RL} was proposed in \cite{schulz_learning_2024} and is related to the options framework \cite{sutton_between_1999}.
A learned high-level exploration policy decides between using an \ac{MPC} expert policy or the currently learned low-level \ac{RL} policy.
The usage of the \ac{MPC} expert is regularized over time to enforce that the learned low-level \ac{RL} policy works as a stand-alone policy during deployment.
A benefit over \ac{GPS} is that the high-level policy can avoid using the \ac{MPC} expert for state regions, where the \ac{MPC} expert guidance is suboptimal.

\section{\Ac{MPC} within the Deployed Policy}
\label{sec:mpc_as_actor}
This section discusses methods that use \ac{MPC} during deployment in the real environment.
A primary distinction is drawn on whether parameterized MPC is trained by RL algorithms or if MPC is used for pre/postprocessing after the RL training. 
When learning a parameterized MPC, we furthermore distinguish between approaches that aim to align the MPC formulation with the MDP structure, e.g., to learn an internal MPC model that aims to approximate the real environment as closely as possible or methods that primarily focus on closed-loop optimality. 
Accordingly, we structure this section as follows
\begin{enumerate}[label=\Alph*.]
\item \textbf{Aligned-learning}, Sect.~\ref{sec:mpc_as_actor_aligned}. In the paradigm of aligned learning, the \ac{MPC} structure~\eqref{eq:mpc_from_mdp} approximates the real-world \ac{MDP} structure~\eqref{eq:mdp}, i.e., it uses a transition model, costs, constraints and a terminal value function that individually approximate the \ac{MDP} and the optimal value function~$V^\star$, respectively.
\item \textbf{Closed-loop learning}, Sect.~\ref{sec:mpc_as_actor_closed_loop_learning}. In the paradigm of closed-loop learning, an MPC is used as an optimization layer within the actor policy and trained within an RL algorithm for closed-loop optimality, which, in general, does not require the individual parts of the MPC to be aligned with the MDP. For instance, the MPC model should not necessarily fit the real system expected or most likely transition to provide closed-loop optimality~\cite{kordabad_equivalence_2024}.
\item \textbf{MPC for pre/postprocessing}, Sect.~\ref{sec:prepostprocessing}. Using MPC for pre/postprocessing does not involve MPC during learning but as part of the deployed policy. This section is further split into reference generation and filtering.
\end{enumerate}

Different structures that include \ac{MPC} in the deployed policy are proposed; see Fig.~\ref{fig:structure1} and Fig.~\ref{fig:sketches_in_policy}.
\begin{figure}
	\includegraphics[width=\linewidth]{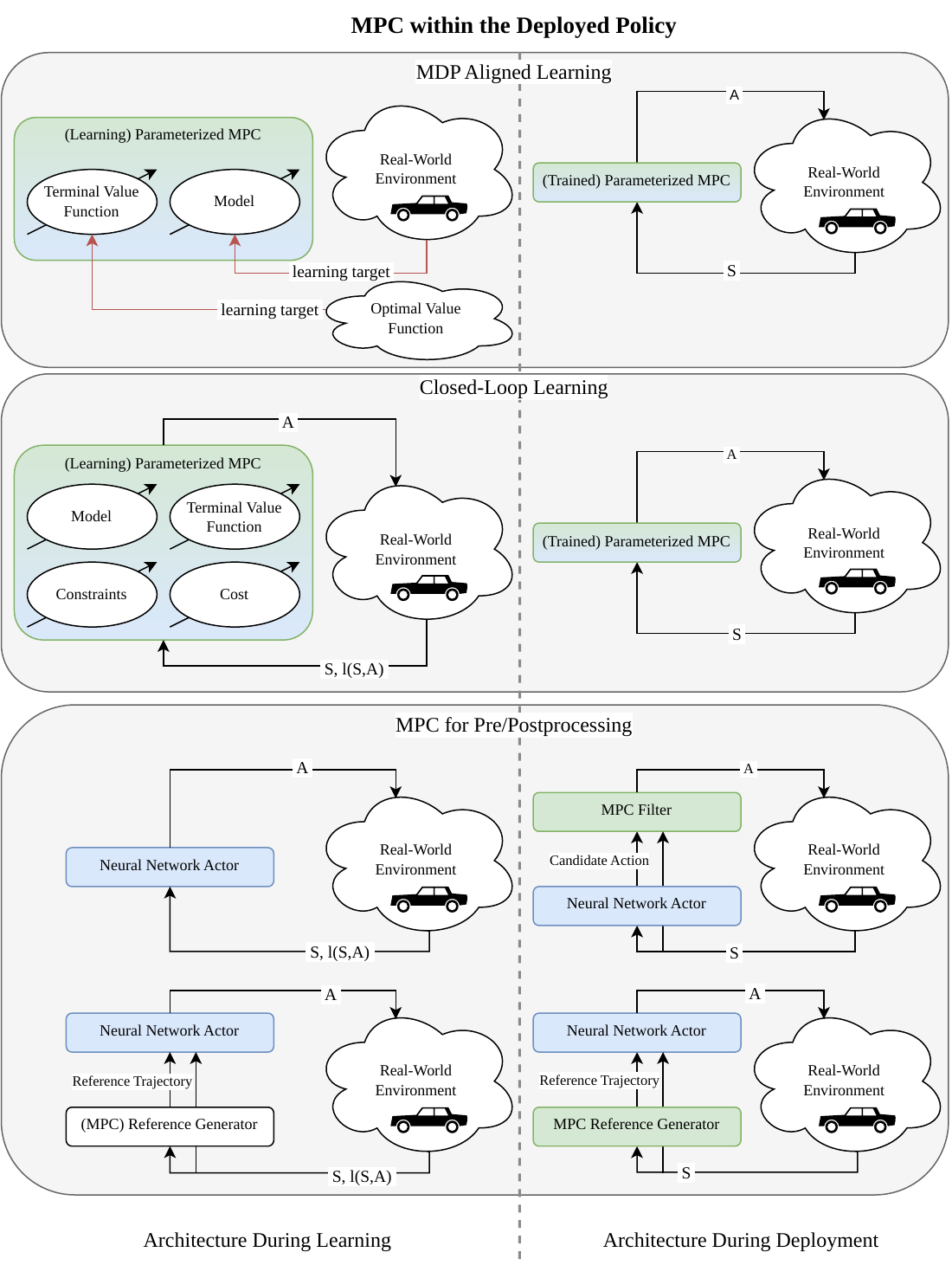}
	\caption{Combinations: MPC within the deployed policy. The plot is split into the learning and deployment phase. Blue boxes indicate Neural Networks (NNs), green boxes are used for MPCs, and blue/green boxes refer to parameterized MPCs that involve parameters/NNs that are learned during the learning phase, see Sect.~\ref{sec:architectures}.}
	\label{fig:sketches_in_policy}
\end{figure}

\subsection{Aligned Learning}
\label{sec:mpc_as_actor_aligned}

The classical design procedure of \ac{MPC} focuses in its first step on finding an usually deterministic mathematical model~$f^{\mathrm{MPC}}_\theta(x,u)$ assumed to be parameterized by~$\theta$ that describes the environment sufficiently well~\cite{ljung_system_1999}. Secondly, a horizon length~$N$ and a terminal value function~$\bar{V}^\mathrm{MPC}_\theta(s)$ as in~\eqref{eq:mpc_from_mdp} with parameters~$\theta$ are obtained to approximate the optimization problem over a finite horizon. 

The advantage of aligning the MPC with the MDP can be seen in the well-interpretable formulation, adaptability to new problems, generalization, potential guarantees for stability and recursive feasibility, and the division of the overall design into subtasks. However, the online computation time may high if complex MPC formulations are used, e.g., stochastic MPC formulations that account for uncertainty.

Obtaining the terminal value function to approximate the infinite horizon value function as in~\eqref{eq:mpc_from_mdp} is computationally challenging and, therefore, usually approximated and often obtained through learning algorithms closely related to the \ac{RL} framework.
 For simple MDPs, e.g., deterministic MDPs with linear models and a quadratic cost, the terminal value function can be computed exactly and does not require learning or approximation.
The costs~$l^\mathrm{MPC}(x,u)$ are usually given by the application, i.e., the MDP, but may be altered to improve the numerical properties related to the optimization problem regarding smoothness and convexity.
We denote parameters of both the model~$f^{\mathrm{MPC}}_\theta(x,u)$ and the terminal value function~$\bar{V}^\mathrm{MPC}_\theta(s)$ by~$\theta$. 

Literature that can be categorized as \MDPAligned{} typically uses the integrated or parameterized architecture according to Fig.~\ref{fig:structure2}. For instance, the paramters~$\theta$ could parameterize a \ac{NN} that is used as the internal MPC model~\cite{nagabandi_neural_2018} or the terminal value function~\cite{lowrey_plan_2019}, corresponding to the integrated architecture.
Tab.~\ref{tab:literature_actor_aligned} summarizes literature that can be related to the \MDPAligned{} learning paradigm. 
In the comparison of Tab.~\ref{tab:literature_actor_aligned}, a distinction is made, whether the model or the terminal value function was learned.
Additionally, the table indicates if the corresponding object was learned during the deployment of the continually improving MPC policy, referred to as on-policy learning (ON-P), or learned by using a separate off-policy sampling strategy (OFF-P). An example of an off-policy sampling strategy would involve system identification to identify a model before deploying the MPC.

In the following, the two objectives of learning the model or the terminal value function are explained in further detail.\\

\subsubsection{Learning the Model}
The first target of aligned learning is to approximate the stochastic model of the real environment~$P$ with a model~$f_{\theta}^\mathrm{MPC}$ that is used within \ac{MPC}.
The model is usually a parameterized mathematical model designed via first principles or a more generic model using a function approximator like a \ac{NN} or \ac{GP}.
Parameters~$\theta$ of candidate models are fit to observed transition data~$\mathcal{D}^\mathrm{id}=\big\{(S_0, A_0,S_1), \ldots, (S_{M-1}, A_{M-1},S_M)\big\}$ of state transitions and related actions. The evaluation of the model fit requires a validation or loss function~$\mathcal{L}^\mathrm{id}(\cdot)$, which could be, for instance, the least-square loss function.
The parameters can be found by minimizing
\begin{equation}
	\label{eq:supervised_loss}
    \min_{\theta} \E_{(S,A,S^+) \sim \mathcal{D}^\mathrm{id}} \biggl[\mathcal{L}^\mathrm{id}\big(S^+-f_{\theta}^\mpcShrt(S,A)\big) \biggr]. 
\end{equation}

\subsubsection{Learning the Terminal Value Function}
In order to approximate the MDP on a finite MPC horizon as in~\eqref{eq:mpc_from_mdp}, a terminal value function~$\bar{V}^\mathrm{MPC}_\theta(s)$ with parameters~$\theta$ needs to be established.
From a system theoretical perspective, the terminal value function is important in terms of stability or recursive feasibility, c.f., Sect.~\ref{sec:mpc}.
However, when focusing on closed-loop cost, the terminal value function needs to tractably approximate the true value function $V^\star$ of the MDP.
In the end, different aspects can be considered when learning an appropriate terminal value function.

For instance, the terminal value function~$\bar{V}_\theta^\mathrm{MPC}$ can be learned via a temporal difference update, cf. \cite{lowrey_plan_2019}, where a learned terminal value function is incorporated into an \ac{MPPI} planner.
In the following, a simplified version is provided to learn a parameterized terminal value function~$\bar{V}_\theta^\mathrm{MPC}$, potentially a \ac{NN}, that can be defined via
\begin{algbox}{Terminal value function learning $(\bar{V}_\theta^\mathrm{MPC} \approx \; ?\footnotemark)$}
\begin{align*}
    &\theta \leftarrow \theta + \frac{\alpha}{B} \sum_{i=1}^B \delta_i \nabla_\theta \bar{V}_\theta^\mathrm{MPC} (S_i),\\
    &\delta_i \coloneqq l(S_i, A_i) + \bar{V}_\theta^\mathrm{MPC} (S^+_i) - \bar{V}_\theta^\mathrm{MPC} (S_i),\\
    &\textrm{with } S_i \sim \mathcal{D}^{\pi^\mathrm{MPC}_\theta},\; A_i = \mu_\theta^\mathrm{MPC} (S_i),\; S^+_i \sim {P(\cdot | S_i, A_i)},
\end{align*}
\end{algbox}
\footnotetext{The question mark indicates that, to the best of the authors' knowledge, the update scheme does not converge to the optimal value function $V^\star$ in general, and its convergence target remains unclear.}
where $\mathcal{D}^{\pi^\mathrm{MPC}_\theta}$ is a distribution of states generated by controlling the system via the parametrized \ac{MPC} exploration policy~$\pi^\mathrm{MPC}_\theta$ and $\alpha$ is the learning rate.
Another approach to learning the terminal value function is shown in \cite{reiter_ac4mpc_2024}, where the authors use either model-free \ac{SAC} or \ac{PPO} to obtain a terminal value function for an \ac{MPC} planner.



\begin{table*}
\caption{Literature that uses MDP-aligned learning. Either the value function, the model, or the cost function of an \ac{MPC} actor are learned. The table lists the applications and whether \acf{RWE} were performed and which \ac{MPC} formulation was used. Moreover, the learning or approximation algorithm used to obtain the model or the terminal value function approximation is stated. The model or the terminal value function of the MPC is either updated by using the current MPC policy to collect samples, i.e., on-policy (ON-P), or another policy, referred to as off-policy (OFF-P) sampling. The terminal value function can also be computed by offline DP or online planning (e.g., RRT) based on a given model. This computation does not utilize sampling strategies, thus is neither related to on-policy or off-policy learning and indicated by N/A.}
	\centering
        \resizebox{\textwidth}{!}{%
	\begin{tabular}{@{}llllllllll@{}}
		\toprule
		\multicolumn{9}{c}{\ac{MPC} as an Actor: MDP Aligned Supervised Learning}\\
		\midrule
		Ref. & Authors & Year& Learning of &\makecell[l]{Architecture/ \\specific FA}& Application & \ac{RWE} & \makecell[l]{\ac{MPC} Formulation\\ (Algorithm, Solver)}& \makecell[l]{Learning/Approx.\\Algorithm} &Sampling\\
		\midrule
	\cite{zhong_value_2013} & \etal{Zhong} & 2013 & \termvalfun{} &\makecell[l]{integrated/\\several FAs} & pendulum and acrobot & no & NMPC (\ilqr{}, custom) & \dpTxt{}&OFF-P \\
        \cite{aswani_provably_2013} & \etal{Aswani} & 2013 & model & parameterized & \makecell[l]{HVAC, quadrotor} & yes & NMPC (SQP, SNOPT~\cite{gill_snopt_2002}) &supervised& ON-P\\
        \cite{nagabandi_neural_2018}          &\etal{Nagabandi} & 2018 & model &integrated/ NN& MuJoCo locomotion & no & random sampling (custom) & DAgger-like & ON-P\\
        \cite{lowrey_plan_2019} &\etal{Lowery}&2019 & \termvalfun{} &integrated/ NN&  \makecell[l]{3D humanoid,\\five-fingered hand} & no & \ac{MPPI} (custom) & supervised & ON-P \\
        \cite{wang_exploring_2019} & \etal{Wang} & 2019 &model&integrated/ NN& MuJoCo locomotion & no & \ac{CEM}~\cite{rubinstein_cross-entropy_2004} (\na)& \ac{IL} & ON-P\\
	\cite{deits_lvis_2019} & \etal{Deits} & 2019 & \termvalfun{}&integrated/ NN&  \makecell[l]{walls pendulum\\ 2D humanoid} & no & \ac{MIQP}-MPC (\gurobi{}~\cite{gurobi_optimization_llc_gurobi_2023}) & supervised& OFF-P\\
        \cite{yang_data_2020} & \etal{Yang} & 2019 & model& integrated/ NN&legged robot~\cite{kenneally_design_2016} & yes & CEM~\cite{rubinstein_cross-entropy_2004} (custom) & supervised& ON-P\\
        \cite{lambert_low-level_2019} & \etal{Lambert} & 2019 & model&integrated/ NN& Crazyflie quadrotor & yes & random sampling (custom)& supervised& OFF-P\\
        \cite{karnchanachari_practical_2020}  &\etal{Karnchanachari} & 2020 &\termvalfun{}&integrated/ NN& \ac{UGV} & yes & NMPC (SQP, \texttt{acado}~\cite{houska_acado_2011}) & \ac{TD}  & ON-P\\
	\cite{beckenbach_q-learning_2020} & \etal{Beckenbach} & 2020 & \makecell[l]{\termvalfun{}, \\stage cost}  &parameterized& chemical reaction&no&\na{}& TD& N/A \\ 
        \cite{hoeller_deep_2020} &\etal{Hoeller} & 2020 &\termvalfun{}&integrated/ NN& ballbot~\cite{hoeller_deep_2020} & yes & NMPC (\ac{SLQ}/\ac{iLQR}, custom)  & \ac{TD} & ON-P \\
        \cite{hatch_value_2021} & \etal{Hatch} & 2021 & \termvalfun{} &\makecell[l]{custom/\\roll-out}&  \makecell[l]{four-wheeled\\skid-steered robot}&no&\ac{MPPI} (custom)& \makecell[l]{DP-related \\(RRT\#)}& N/A   \\
        \cite{morgan_model_2021} & \etal{Morgan} & 2021 & model &integrated/ NN& MuJoCo, robotic hand&yes&\ac{MPPI} (custom) &SAC& ON-P   \\
        \cite{dobriborsci_experimental_2022} & \etal{Dobriborsci} & 2022 & \termvalfun{} &parameterized& mobile robot & yes & NMPC (SLSQP~\cite{noauthor_scipy_nodate}) & \ac{DQN} & OFF-P  \\
	\cite{beckenbach_approximate_2022} & \etal{Beckenbach} & 2022 & \termvalfun{} &parameterized& chemical reaction&no&\na&\ac{ADP} & N/A \\
	\cite{moreno-mora_predictive_2023} & \etal{Moreno-Mora} & 2023 & \termvalfun{} &parameterized& spacecraft & no & NMPC (\texttt{fminunc}~\cite{inc_matlab_2022}) & \ac{VI} & OFF-P \\
        \cite{lin_reinforcement_2024} & \etal{Lin} & 2024 & \termvalfun{} &\makecell[l]{integrated/\\ polynomial}& mobile robot & no & LQR (\na) & \ac{PI} & ON-P\\
        \cite{reiter_ac4mpc_2024} & \etal{Reiter} & 2024 & \termvalfun{} &integrated/ NN& autonomous driving& no & NMPC (SQP, \acados{}~\cite{verschueren_acadosmodular_2022}) & \ac{PPO}, \ac{SAC} & OFF-P\\
        \cite{qu_rl-driven_2024} &\etal{Qu} & 2024 & \termvalfun{}& integrated/ NN& unmanned aerial vehicle & no & \ac{MPPI} (custom)  & \ac{SAC} & OFF-P \\
        \cite{cai_learning-based_2023} & \etal{Cai} & 2023 &
        \makecell[l]{\termvalfun{},\\ stage cost,\\ model,\\ constraints} &parameterized& home energy management & no & NMPC (\na) & \ac{TD}3, AC & ON-P\\
		\bottomrule
	\end{tabular}%
        }
    \label{tab:literature_actor_aligned}
\end{table*}

\subsection{Closed-Loop Learning}
\label{sec:mpc_as_actor_closed_loop_learning}
In the following section, the paradigm of viewing MPC as a parameterized optimization layer as part of an actor policy is explained in more detail.
In the closed-loop learning paradigm, the MPC model is not required to minimize a prediction loss~$\mathcal{L}^\mathrm{id}$ or to provide a terminal value function that approximates the optimal value function as in the \MDPAligned{} setting.
Instead, the model, costs, or constraints may be changed to solely improve the closed-loop performance of the MPC policy~$\mu^\mathrm{MPC}$ according to Definition~\ref{eq:general_problem}.
Fig.~\ref{fig:sketches_in_policy} shows a sketch of the \closedLoopOpt{} learning paradigm.

An advantage of this paradigm is the ability to achieve optimal closed-loop performance in the real environment without the necessity of computationally demanding aligned formulation of Sect.~\ref{sec:mpc_as_actor_aligned}. However, by modifying the various parts of the MPC purely to increase the closed-loop performance, explainability, safety, generalization, and adaptability get lost, particularly if the model or constraints are allowed to change.

Practically, this paradigm differs depending on whether the controls provided by the MPC or the parameterization of the MPC obtained through a learnable NN are assumed as RL actions. This two practical implementations are explained in the following.\\

\subsubsection{Differentiable MPC}
MPC can be used as an optimization layer within the RL policy to provide a good initial performance by leveraging knowledge about the task~\cite{gros_economic_2022}. Ideally, the MPC performance would only be slightly suboptimal and provide safety guarantees for a known model. The hope would be that by only a few RL iterations, the parameters can be modified towards nearly optimal closed-loop performance, particularly related to stochasticity. 

However, the MPC optimization problem as part of the learned actor creates potential challenges for both the forward path, where the policy evaluation includes solving the optimization problem, and the training of parameters, which, in some algorithms, requires appropriately passing gradients through an optimization algorithm.
Solving optimization problems is numerically challenging. Thus, the related iterative algorithms may slow down learning when evaluating a policy. Moreover, obtaining a global optimizer outside the class of convex optimization problems is intractable in general. Although local optima are often sufficient but require warm-starting, the initial states used to warm-start an optimization solver introduce additional states, which are required to be considered in the learning algorithm.

When using the MPC in an RL actor, gradients need to be computed through the optimization solver as part of the backpropagation. These gradients can be computed using the implicit function theorem, cf.~\appref{appendix:differentiating} and related literature~\cite{amos_differentiable_2018,zanon_safe_2021,romero_actor-critic_2024} for further details. However, existing optimization software is required to support such features. Related features were recently developed in various tools, see Sect.~\ref{sec:software}. 

When differentiating through the optimization layer, the integrated, hierarchical, parallel, and parameterized architectures according to Fig.~\ref{fig:structure2} may be used, but to the best of the authors' knowledge, only the hierarchical, e.g.~\cite{zanon_reinforcement_2020,romero_actor-critic_2024}, and the parameterized architectures, e.g.,~\cite{amos_differentiable_2018,gros_data-driven_2020,moradimaryamnegari_model_2022}, were proposed so far.

The following algorithm based on the ideas of~\cite{gros_data-driven_2020} provides a basic version of closed-loop optimal learning, where the parameters are updated by differentiating the \ac{MPC}.
The temporal difference update~\eqref{eq:q_learning} is adapted to the parameterized \ac{MPC} setting to
\begin{algbox}{Q-learning with MPC Q-function~$(Q_\theta^\mathrm{MPC} \approx Q^\star)$}
\begin{align*}
    & \theta \leftarrow \theta + \alpha \delta \nabla_\theta Q_\theta^\mathrm{MPC} (S, A), \\
    &\delta \coloneqq l (S, A) + \gamma V_\theta^\mathrm{MPC} (S^+) - Q_\theta^\mathrm{MPC} (S, A), \\
    &\textrm{with } (S, A) \sim \mathcal{D}^{\pi^\mathrm{MPC}_\theta},\; S^+ \sim P(\cdot \mid S, A).
\end{align*}
\end{algbox}
The state-action distribution~$\mathcal{D}$ is generated by controlling the system via the stochastic parameterized \ac{MPC} exploration policy~$\pi_\theta^\mathrm{MPC}$.
Note that the update scheme only considers a single sample, which is often reasonable in the case that the $Q^\mathrm{MPC}_\theta$ is just a parameterized \ac{MPC} scheme as described in Fig.~\ref{fig:structure2}, without a \ac{NN}.

Another example of closed-loop optimal learning based on a \ac{DDPG} actor-critic formulation similar to \eqref{eq:ddpg} was introduced in \cite{gros_data-driven_2020} and is given by
\begin{algbox}{DDPG with MPC actor $(\mu_\theta^\mathrm{MPC} \approx \pi^\star)$}
\begin{align*}
    & w \overset{Q}{\leftarrow} w + \frac{\alpha_w}{B} \sum_{i=1}^B \delta_i \nabla_w Q_w (S_i, A_i), \\
    & \theta \overset{\mu}{\leftarrow} \theta + \frac{\alpha_\theta}{B} \sum_{i=1}^B \nabla_\theta \mu_\theta^\mathrm{MPC}(S_i) \left. \nabla_a Q_w(S_i, a) \right|_{a = \mu_\theta^\mathrm{MPC}(S_i)}, \\
    &\delta_i \coloneqq l (S_i, A_i) + \gamma Q_{\bar{w}} (S^+_i, A^+_i) - Q_w (S_i, A_i), \\
    &\textrm{with } (S_i, A_i, S^+)\sim \mathcal{D}^\mathrm{buffer},\; A^+_i =  \mu_\theta^\mathrm{MPC}(S_i). 
\end{align*}
\end{algbox}
The parameters are updated for the deterministic policy~$\mu_\theta^\mathrm{MPC}$ and a \ac{NN} critic~$Q_w$ with parameters~$w$.
The stochastic policy~$\pi_\theta^\mathrm{MPC}$ is used for exploration in order to fill the replay buffer~$\mathcal{D}^\mathrm{buffer}$.\\

\subsubsection{MPC as part of the environment}
    Passing gradients through the MPC in the actor in the closed-loop optimal learning paradigm can also be omitted by, conceptually, considering the MPC as part of the environment. 
    In such a setting, a parameterized MPC that controls an environment can be seen as an augmented new environment whose actions are the MPC parameters. Accordingly, the RL critic evaluates the values related to these parameters instead of the actions provided by the MPC, which are, in fact, hidden from the RL framework.
    Considering the parameterization of the MPC as the environment input, instead of the actions obtained from the MPC, avoids computing sensitivities but comes at the cost of potentially vastly increasing the dimensions of the action space and obtaining gradient information through sampling. 
    Such a setting usually involves the hierarchical~\cite{brito_where_2021,reiter_hierarchical_2023} or the parameterized architecture~\cite{zarrouki_adaptive_2024} of Fig.~\ref{fig:structure2}.

Whether MPC is considered as part of the environment in literature, involving its differentiation, is indicated by the differentiating MPC column in Tab.~\ref{tab:literature_actor_layer}. \\

\subsubsection{Exploration}
Tab.~\ref{tab:literature_actor_layer} lists closed-loop optimal algorithms that differ in the architecture and whether they include differentiation through the optimization layer. Moreover, Tab.~\ref{tab:literature_actor_layer} considers how exploration was performed in related literature, cf. Sect.~\ref{sec:algo_rl}. 
Exploration is often required to improve the currently learned policy in \ac{RL}.
Four choices of exploration are used, particularly in the case of hierarchical and parameterized architectures: 
\begin{enumerate}
    \item Adding noise to the action proposed by the optimization layer. The downside of adding noise posterior to the optimizer is the potentially unsafe exploration, depending on the chosen exploration noise.
    \item Modifying the cost in the optimization layer \eqref{eq:mpc_main_cost} with an additive term $d^\top u_0$ where $d$ is a possibly randomly selected vector. Given that only the cost is modified, the actions remain feasible with respect to the model while the additive term introduces a gradient over the initial control input~\cite{kordabad_bias_2023} for exploration.
    \item Add a perturbation to the parameter~$\phi$ for the hierarchical architecture and~$\theta$ for the parameterized architecture as shown in~Fig.~\ref{fig:structure2}, guaranteeing safe actions~\cite{gros_towards_2019}.
    \item Instead of noise on the parameters or controls, optimistic initialization is an \ac{RL} technique where Q-values (or estimates) are initialized with higher-than-expected values, encouraging the agent to explore and gradually reduce these optimistic estimates to reflect the true values. For a discussion in the context of deep \ac{RL} see \cite{rashid_optimistic_2020}.
\end{enumerate}


\begin{table*}
	\caption{Literature that uses \ac{MPC} as part of the actor in closed-loop learning. The table lists the applications, whether \acf{RWE} were performed, and the particular \ac{RL} and \ac{MPC} algorithms. 
    The computation of gradients during the back-propagation pass of the RL algorithm may require the computation of the sensitivities of the MPC solution, indicated in the column ``differentiating MPC''. Alternatively, the sensitivities could be sampled, by, e.g., considering the MPC as part of the environment.
    Similarly, the table compares where the exploration noise of the RL algorithm is added. The noise could be added to the parameters of the MPC optimization problem, or posterior, to the optimal controls provided by the MPC or by using an optimistic initialization of the value functions.}
	\centering
        \resizebox{\textwidth}{!}{%
		\begin{tabular}{@{}llllllllll@{}}
            \toprule
            \multicolumn{10}{c}{\ac{MPC} as an actor: closed-loop optimal}\\
\midrule
Ref.& Authors & Year & Exploration & \makecell[l]{Diff. \\ \ac{MPC}} & Architecture & Application & \ac{RWE} & \makecell[l]{\ac{MPC} Formulation\\ (Algorithm, Solver)} & \ac{RL} Algorithm \\
\midrule
\cite{amos_differentiable_2018}          &\etal{Amos} & 2018 & actions & yes &\paramterized{}&pendulum, cartpole&no&LMPC (\na)&\ac{IL}\\
\cite{greatwood_reinforcement_2019}      &\etal{Greatwood} & 2019 & optimistic & no &\hierarchical{}& drone navigation & yes &  LMPC (custom)&\ac{TD} \\
\cite{tram_learning_2019}                &\etal{Tram} & 2019 & N/A& no &\hierarchical{}& vehicle intersection & no & LMPC (\na)& Q-learning  \\
\cite{gros_data-driven_2020}             &\etal{Gros} & 2020 & actions & yes &\paramterized{}& evaporation process & no & NMPC (SQP, \acados{}~\cite{verschueren_acadosmodular_2022}) & Q-learning  \\
\cite{brito_where_2021}                  &\etal{Brito} & 2021 & parameters & no &\hierarchical{}& multi-agent unicycle & no & NMPC (\na{}, \texttt{ForcesPro}\cite{domahidi_forces_2014}) & \ac{PPO}  \\
\cite{zanon_safe_2021}                   &\etal{Zanon} & 2021 & actions & yes &\hierarchical{}& evaporation process & yes & LMPC (\na) & Q-learning  \\
\cite{moradimaryamnegari_model_2022}     &\etal{Moradimaryamnegari} & 2022 & N/A& yes &\paramterized{}& water tank & yes & NMPC (\ac{IP}, \ipopt{}~\cite{wachter_implementation_2006}) & SARSA \\
\cite{brito_learning_2022}               &\etal{Brito}& 2022 & parameters & no &\hierarchical{}& highway traffic & no &  NMPC (\na{}, \texttt{ForcesPro}\cite{domahidi_forces_2014})& \ac{SAC} \\
\cite{zhang_learning-based_2022}         &\etal{Zhang} & 2022 & parameters & no &\hierarchical{}& quadruped robot & no &   LMPC (\na) \\
\cite{pfrommer_safe_2022}                &\etal{Pfrommer} & 2022 & actions & no &\hierarchical{}& 2D linearized quadrotor & no &   LMPC (\texttt{MOSEK}~\cite{aps_mosek_2024}) & \ac{PG} \\
\cite{reiter_hierarchical_2023}          &\etal{Reiter} & 2023 & parameters & no &\hierarchical{}& autonomous racing & no & NMPC (SQP, \acados{}~\cite{verschueren_acadosmodular_2022}) & \ac{SAC} \\
\cite{liu_learning_2023}                 &\etal{Liu} & 2023 & N/A & yes &\paramterized{}& multi-agent games  & yes &  \makecell[l]{generalized Nash \\equilibrium solver (custom)} & custom\\
\cite{romero_actor-critic_2024}          &\etal{Romero} & 2024 & actions& yes &\hierarchical{}& drone racing  & yes & NMPC (\ac{iLQR}, custom)  & \ac{PPO}  \\
\cite{tao_difftune-mpc_2024}             &\etal{Tao} & 2024 & N/A & yes &\paramterized{}& quadrotor & no & NMPC (SQP, \acados{}~\cite{verschueren_acadosmodular_2022}) & N/A  \\
\cite{zarrouki_safe_2024}                 &\etal{Zarrouki} & 2024 & parameters & no &\hierarchical{}& race car & yes &  NMPC, (SQP, \acados{}~\cite{verschueren_acadosmodular_2022}) &  \ac{PPO} \\  
\cite{zarrouki_adaptive_2024}            &\etal{Zarrouki} & 2024 & parameters & no &\paramterized{}& race car & yes &  NMPC (SQP, \acados{}~\cite{verschueren_acadosmodular_2022}) &  \ac{PPO} \\  
\cite{wen_collision-free_2024}                 &\etal{Wen} & 2024 & parameters & no &\hierarchical{}& mobile robot & no &  LMPC (\na)&  \ac{PPO} \\  
\bottomrule
		\end{tabular}
}
 	\label{tab:literature_actor_layer}
\end{table*}

Closed-loop optimal learning algorithms require the MPC to be part of the RL algorithm. Yet, the following two variants in Sect.~\ref{sec:prepostprocessing} use a hierarchical MPC and RL setting, where MPC is added posterior to the training in the deployed controller.

\subsection{MPC for Pre- and Postprocessing}
\label{sec:prepostprocessing}
In the following Section, two concepts are highlighted that use MPC within the deployed policy, yet, not during the training procedure. 
Since the trained policy is not aware of the filter, the expected performance can not be expected to be closed-loop optimal, as in the previous section.
Depending on the purpose of the MPC, a distinction is made between an MPC used as a reference generator for preprocessing, cf. Fig.~\ref{fig:structure3} and Sect.~\ref{sec:mpc_actor_ref}, and MPC used for postprocessing, cf. Fig.~\ref{fig:structure4} and Sect.~\ref{sec:mpc_actor_filt}. Relevant literature is compared in Tab.~\ref{tab:literature_actor_filter_refgen}. \\

\subsubsection{Preprocessing: MPC as a Reference Generator}
\label{sec:mpc_actor_ref}
MPC may be used as a reference generator for an RL policy such as in~\cite{jenelten_dtc_2024}. 
This combination approach of MPC and RL is particularly useful if the output of MPC is a planned trajectory rather than a single action. Since solving the \ac{MPC} may be slow, the RL policy may be trained with a computationally cheaper reference generator, as proposed in~\cite{jenelten_dtc_2024}. 
Notably, this setting may often be used in publications without an explicit statement that describes the reference provider.\\
\begin{figure}
	\includegraphics[width=.49\textwidth]{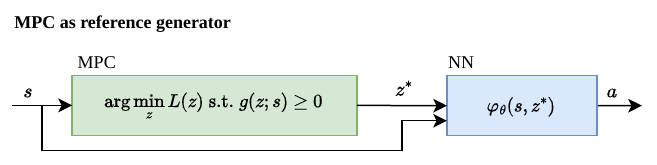}
	\caption{MPC can be used as a reference generator for an RL policy. The distribution of the input of the policy may depend on the distribution of optimizers obtained from MPC while interacting with the environment.}
	\label{fig:structure3}
\end{figure}

\subsubsection{MPC for Postprocessing}
\label{sec:mpc_actor_filt}
Adding MPC to a trained policy may fulfill one of the two purposes: (i) providing safety w.r.t. an assumed model and constraints, known as safety filter~\cite{tearle_predictive_2021} or (ii) providing a coarse solution by a warm-start of an optimization solver.

RL policies may struggle with guaranteeing safety~\cite{brunke_safe_2022} and plan for smooth trajectories, particularly in a high-dimensional state space~\cite{shen_reinforcement_2023}. 

As opposed to the previous Sect.~\ref{sec:mpc_as_actor_closed_loop_learning} of closed-loop optimal learning, the approaches that use MPC as a filter do not consider MPC during learning. Thus, the expected behavior may not be closed-loop optimal since the policy is unaware of MPC during training.

Tab.~\ref{tab:literature_actor_filter_refgen} shows relevant work that uses the MPC for postprocessing and distinguishes the filtering of a single proposed action, e.g., used in the safety-filter framework~\cite{tearle_predictive_2021}, or a whole trajectory of actions, e.g.,~\cite{ceder_birds-eye-view_2024} or filtering of a state trajectory as in~\cite{grandesso_cacto_2023}. The filtering of action or state trajectories can be interpreted as providing an initial guess of decision variables of a nonconvex optimization problem to a solver, which then aims to find a good local minimum. In fact, the MPC needs to be approximately aligned with the MDP, and the role of the policy rather assists the MPC optimization solver by finding global/low-cost optimizers.

One ought to observe, though, that if an MPC formulation is available to ensure the safety of the action taken in the real environment, typically in the form of a robust MPC scheme, then it is debatable whether training a policy (typically based on a NN) to be filtered by the robust MPC scheme or training the robust MPC scheme directly~\cite{zanon_safe_2021} is more effective.

\begin{figure}
	\includegraphics[width=.49\textwidth]{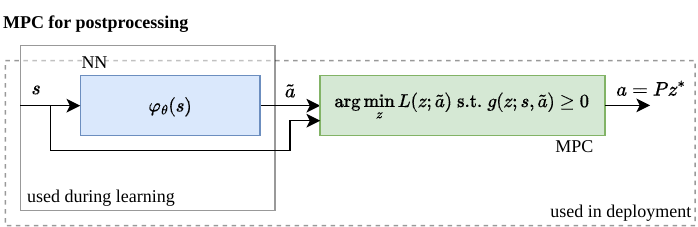}
	\caption{MPC can be used to smooth or filter reference trajectories. For example, MPC may be used to provide safety guarantees that are hard to achieve by RL policies.}
	\label{fig:structure4}
\end{figure}


\begin{table*}
	\caption{Literature that uses MPC during the deployed algorithm together with the RL policy but not within the learning phase. Algorithms either use MPC as a reference provider for an RL policy or for postprocessing. Postprocessing is used to provide safety w.r.t. a known model and constraints or to take an action or state trajectory as an initial guess for the optimization solver.}
	\centering
	\resizebox{\textwidth}{!}{%
	\begin{tabular}{@{}llllllll@{}}
		\toprule
		\multicolumn{8}{c}{\ac{MPC} for Postrocessing}\\
		\midrule
		Ref. & Authors & Year& Filtering of &Application & \ac{RWE} & \makecell[l]{\ac{MPC} Formulation\\ (Algorithm, Solver)}& \ac{RL} Algorithm \\
		\midrule
        \cite{li_robust_2020} & \etal{Li} &2020 &action& point-mass, kin. vehicle & no & NMPC, robust (\na) & \ac{DDPG} \\
		\cite{tearle_predictive_2021} &\etal{Tearle}& 2021 & action & miniature race cars & yes & NMPC (SQP, \acados{}~\cite{verschueren_acadosmodular_2022}) & \ac{IL}, DAgger~\cite{ross_reduction_2011} \\
		\cite{shen_reinforcement_2023} &\etal{Shen}& 2023 & traj. roll-out & multi-vehicle motion planning & no & NMPC (\ac{IP}, \ipopt{}~\cite{wachter_implementation_2006}) & Q-learning \\
        \cite{didier_approximate_2023} &\etal{Didier}& 2023 & action & autonomous driving & no & NMPC (\ac{IP}, \ipopt{}~\cite{wachter_implementation_2006}) & \na \\
		\cite{grandesso_cacto_2023} & \etal{Grandesso} & 2023 &traj. roll-out& 3-DoF planar manipulator & no & NMPC (IP, \ipopt{}~\cite{wachter_implementation_2006}) &   custom AC \\
        \cite{mamedov_safe_2024} & \etal{Mamedov} & 2024 &action& flexible robot arm & no & NMPC (SQP, \acados{}~\cite{verschueren_acadosmodular_2022}) & \makecell[l]{BC, DAgger, AIRL\\ GAIL, PPO, SAC}\\
        \cite{reiter_ac4mpc_2024} & \etal{Reiter} & 2024 &traj. roll-out& vehicle motion planning & no & NMPC (SQP, \acados{}~\cite{verschueren_acadosmodular_2022})& PPO, SAC\\
        \cite{alboni_cacto-sl_2024} & \etal{Alboni} & 2024 & traj. roll-out& point-mass with obstacles& no & NMPC (IP, \ipopt{}~\cite{wachter_implementation_2006}) &  custom AC \\
        \cite{ceder_birds-eye-view_2024} & \etal{Ceder} & 2024 & action roll-out & mobile robot& no & \makecell[l]{NMPC (proximal gradient,\\ \texttt{PANOC}~\cite{stella_simple_2017}) }&  DDPG\\
        \cite{qu_rl-driven_2024} &\etal{Qu} & 2024 & traj. roll-out & unmanned aerial vehicle & no & \ac{MPPI} (custom)  & \ac{SAC} \\
		\midrule
		\midrule
		\multicolumn{8}{c}{\ac{MPC} as Reference Generator}\\
		\midrule
	\cite{jenelten_dtc_2024} &\etal{Jenelten} & 2024 && quadruped robot & yes & NMPC (TAMOLS~\cite{jenelten_tamols_2022})& \ac{PPO}  \\
        \cite{bang_rl-augmented_2024} &\etal{Bang} & 2024 && humanoid robot & no & LMPC (\na) & \ac{PPO}  \\
		\bottomrule
	\end{tabular}}
 	\label{tab:literature_actor_filter_refgen}
\end{table*}



So far, MPC has been considered as part of the actor or as an expert actor. In the following, we also highlight works that consider MPC as part of the critic, i.e., the MPC is used solely during training to provide a value function estimate.

\section{MPC as a Critic}
\label{sec:comb_critic}
As outlined in~\secref{sec:mpc}, parameterized variants of the \acp{OCP}~\eqref{eq:mpc-problem} and~\eqref{eq:mpc-q-problem} allow for a structured function approximation of value function, action-value function, and the policy.
\begin{figure}
	\includegraphics[width=\linewidth]{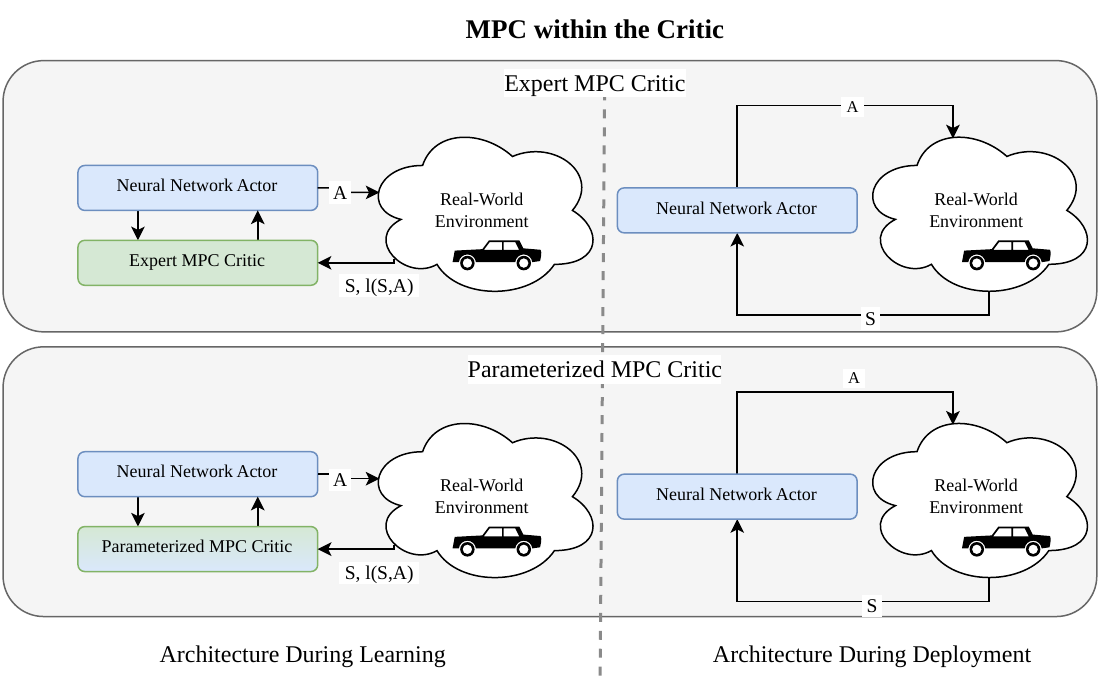}
	\caption{Combinations: MPC as a critic. The plot is split into the learning and deployment phases. Blue boxes indicate Neural Networks (NNs), green boxes are used for MPCs, and blue/green boxes refer to parameterized MPCs that involve parameters/NNs that are learned during the learning phase, see Sect.~\ref{sec:architectures}.}
	\label{fig:sketches_critic}
\end{figure}
While earlier discussions focused on using \ac{MPC} as an actor, we next discuss the role of MPC used as a critic, cf., Fig.~\ref{fig:structure1}. We distinguish between three variants of \ac{MPC} as a critic.
\begin{enumerate}[label=\Alph*.]
    \item \textbf{\ac{MPC} as an expert critic}, \secref{Sec:MPCExpertCritic}. The MPC is parameterized based on expert knowledge of the problem at hand, entering through the cost, model, or constraints, which we refer to as expert critic.
    \item \textbf{\ac{MPC} as a learnable critic}, \secref{sec:mpc_learnable_critic}. The MPC involves learnable parameters to improve the accuracy of the critic.
    \item \textbf{MPC as a learnable actor-critic}, \secref{sec:mpc_learnable_actor_critic}. The critic parameterization (possibly) differs from the one of the actors.
\end{enumerate}

In Tab.~\ref{tab:literature_critic}, different publications are presented, depending on whether \ac{MPC} parameters are tuned and which critic type described in the following three sections was used.

\subsection{MPC as an Expert Critic} \label{Sec:MPCExpertCritic}
An \Ac{MPC} scheme, as previously discussed, can readily deliver an action-value function~$Q^\mathrm{MPC}$, which can be used to derive policy gradient equations to train an actor.
An important assumption here is that the value function~$Q^\mathrm{MPC}$ approximates $Q^\star$ as closely as possible. In fact, the \ac{MPC} may usually just provide a value function of a desirable good suboptimal policy.
Approximating $Q^\star$ by the fixed MPC Q-function was recently proposed in~\cite{ghezzi_imitation_2023}.
Very related is the line of work in \cite{carius_mpc-net_2020, ghezzi_imitation_2023} that, inspired by the Hamiltonian-Jacobi-Bellman equations, uses a first-order approximation of a Q-value function to criticize the actions generated by a learned policy.


Consider the evaluation of the critic, involving solving the MPC optimization problem to obtain the corresponding action-value function~$Q^\mathrm{MPC}$.
Then, for a deterministic, parameterized policy~$\mu_\theta$, the learning objective related to \eqref{eq:policy_obj} is 
\begin{equation}\label{eq:expertcritic}
    J^\mu_\mathrm{MPC}(\theta) = \frac{1}{1 - \gamma} \mathbb{E}_{S\sim \rho^{\mu_\theta}} \biggl[ Q^\mathrm{MPC}(S, \mu_\theta(S)) \biggr]\,,
\end{equation}
with the discounted visitation frequency~$\rho^{\mu_\theta}$  as defined in Sect.~\ref{sec:policy_gradient_methods}.
Note that the policy parameter~$\theta$ is updated to minimize the objective~\eqref{eq:expertcritic} instead of the behavior cloning objective in~\secref{sec:comb_expert} or the MDP objective in \eqref{eq:policy_obj}.

An update scheme using the Q-function from \ac{MPC} to criticize the learned policy~$\mu_\theta$ as in \cite{ghezzi_imitation_2023} is defined by
\begin{algbox}{DDPG with an MPC expert critic~$(\mu_\theta \approx \mu^\mathrm{MPC})$}
\begin{align*}
    & \theta \leftarrow \theta + \frac{\alpha_\theta}{B} \sum_{i=1}^B \nabla_\theta \mu_\theta(S_i) \left. \nabla_a Q^\mathrm{MPC}(S_i, a) \right|_{a = \mu_\theta(S_i)}, \\
    &\textrm{with } S_i \sim \mathcal{D},
\end{align*}
\end{algbox}
which is related to the \ac{DPG} in \eqref{eq:deterministic_policy_gradient} but with a fixed expert critic.
Here, $\mathcal{D}$ could be either a fixed dataset of states or generated iteratively by a mixture of $\mu_\theta$ and $\mu^\mathrm{MPC}$ like in \cite{ross_reduction_2011}.

Similar to \secref{sec:comb_expert}, using an \ac{MPC} as an expert critic leads to imitating the MPC expert policy.
The primary advantage of using the \ac{MPC} as an expert critic when compared to standard \ac{IL} that relies on the mismatch between the expert and learned controls is its ability to guide a function approximator with limited expressive power by emphasizing which actions are more important to fit by using the Q-function.
It further can introduce the constraint satisfactions to the objective using slacked constraints~\cite{ghezzi_imitation_2023}.

It is important to emphasize that the \ac{MPC} expert critic remains fixed and, therefore, does not approximate the action-value function~$Q^{\mu_\theta}$ of the learned policy during training.
Indeed, policy gradient methods formally require the critic to be of the policy, i.e., to deliver the policy action-value function~$Q^{\mu_\theta}$. In contrast, an MPC as an expert critic instead delivers the action-value function~$Q^\mathrm{MPC}$ based on expert knowledge of the environment, with the aim of approximating the optimal action-value function~$Q^\star$ as accurately as possible. 
If the actor has enough flexibility to clone the MPC policy perfectly, then using MPC as an expert critic would ultimately result in the actor matching the MPC performance but not improving it further. 

Consequently, as discussed in the following sections, further improvements can be achieved by allowing the \ac{MPC} critic to adapt to the currently learned policy $\mu_\theta$.

\subsection{MPC as a Learnable Critic}
\label{sec:mpc_learnable_critic}
In the context of policy gradient methods, a parameterized MPC of any architecture shown in Fig.~\ref{fig:structure2} can be used to approximate the policy action-value function~$Q^\pi$, and updated using data to 
capture it as correctly as possible. 
For instance, the parallel architecture of  Fig.~\ref{fig:structure2} is used in~\cite{bhardwaj_blending_2021}.
Using the MPC sensitivities, the MPC parameters can be learned using value-based methods from~\eqref{eq:q_learning}.
As discussed before, a parametric MPC can fully capture~$Q^\pi$ given that it has a rich parameterization \cite{cai_learning-based_2023, seel_convex_2022, esfahani_learning-based_2023, adhau_fast_2023}. 
This approach can be thought of as introducing expert knowledge into the policy gradient pipeline, using a critic combining harmoniously model-based knowledge and learning to capture~$Q^\pi$ as accurately and quickly as possible. The main reasoning behind this combination is that MPC is typically capable of providing a correct structure for the action-value function~$Q^\pi$ prior to any training, giving a good starting point for the policy-gradient method.
Then, the classic learning of $Q^\pi$ allows the MPC to become a better critic of the policy and to remain a good critic as the policy is modified by the policy-gradient method.
It is worth mentioning here that MPC as a learnable critic can be readily combined with a more classic \ac{NN}, e.g., as a summation of their contribution to the action-value approximation \cite{bharadhwaj_model-predictive_2020}.
In that context, the MPC scheme can deliver a broadly correct approximation, while the \ac{NN} can provide fine corrections.
Unlike the approach of Sec. \ref{Sec:MPCExpertCritic}, MPC as a learnable critic aligns well with the actor-critic framework.

\subsection{MPC as a Learnable Actor-Critic}
\label{sec:mpc_learnable_actor_critic}
Using MPC as an actor and as a learnable critic naturally offers the opportunity
to combine them into an actor-critic setup using MPC for both. In this setting,
MPCs can be used both as an actor, delivering a parameterized
policy~$\pi_\theta^\mpcShrt$ to be trained, and as a critic, delivering value
functions that are by construction close to the value function of the \ac{MPC}
policy, i.e., $V_\theta^\mpcShrt\approx V^{\pi^\mpcShrt_\theta},\,
Q_\theta^\mpcShrt\approx Q^{\pi^\mpcShrt_\theta}$, as opposed to random
intializations of \acp{NN}.

During the learning, the MPC scheme operating as an actor typically needs to
differ from the MPC scheme operating as a critic because $\pi$ is
not a minimizer of $Q^{\pi}$ when the policy is not optimal. In practice, two
different MPC schemes can be used and trained in parallel: one as a critic,
maintaining a good approximation of $Q^{\pi^\mpcShrt_\theta}$ associated with
$\pi^\mpcShrt_\theta$ given by \eqref{eq:mpc-q-problem}, and one to support $\pi^\mpcShrt_\theta$ itself, given by \eqref{eq:mpc-problem}. It can be,
however, computationally expensive to use two MPC schemes in parallel, as the
optimal solution of both must be produced at every training step of the policy,
rather than the solution of only the MPC supporting the policy
$\pi^\mpcShrt_\theta$. 

Along that line, \cite{anand_painless_2023} proposed a critic formulation based
on the value function obtained from an MPC-based actor. Their formulation
employed~$V^\mpcShrt_\theta$ along with the compatible function approximation
suggested in \cite{silver_deterministic_2014} to build a local approximation of
the critic by using a first-order Taylor expansion, simplifying the
actor-critic formulation with a single MPC scheme. This approach is effective if the MPC policy is sufficiently close to the optimal policy.

\begin{table*}
	\caption{Literature that uses \ac{MPC} as a critic for \ac{RL}. The \ac{MPC} parameters are either learned within the \ac{RL} algorithm or fixed. Further, a distinction is made, whether the critic is a fixed expert, see Sect.~\ref{Sec:MPCExpertCritic}, a more flexible learned critic as described in Sect.~\ref{sec:mpc_learnable_critic}, or is part of both the actor and the critic, Sect.~\ref{sec:mpc_learnable_actor_critic}.}
	\centering
        \resizebox{\textwidth}{!}{%
	\begin{tabular}{@{}lllllllll@{}}
		\toprule
		\multicolumn{8}{c}{\ac{MPC} as a Critic}\\
		\midrule
		Ref. &Authors& Year & \makecell[l]{Learned \ac{MPC}\\Parameters} & \makecell[l]{Critic Type/\\Par. MPC Arch.} & Application & \ac{RWE} & \makecell[l]{\ac{MPC} Formulation\\ (Algorithm, Solver)} & \ac{RL} Algorithm \\
		\midrule
            \cite{reske_imitation_2021} & \etal{Reske} & 2021 & no & expert / \na{} & quadruped robot & yes & \ac{NMPC} (\ac{DDP}, \na{}) & \ac{IL} \\
            \cite{bhardwaj_blending_2021} &\etal{Bhardwaj} &2021 & no & \makecell[l]{learned /\\parallel} & \makecell[l]{several robotic\\ in-hand manipulation}  &no& MPPI & \ac{TD} \\
		\cite{ghezzi_imitation_2023}  &\etal{Ghezzi}&2023 & no & expert / \na{} & cart pole & no & NMPC (SQP, \acados{}~\cite{verschueren_acadosmodular_2022}) & 	\ac{IL}	 \\
            \cite{anand_painless_2023} & \etal{Anand}&2023 & yes & \makecell[l]{Actor-Critic /\\parameterized} & cart pole & no & NMPC (\ac{IP}, \ipopt{}~\cite{wachter_implementation_2006}) & \ac{DPG}\\
            \cite{grandesso_cacto_2023} & \etal{Grandesso} & 2023 & no & expert / \na{} & 3-DoF planar manipulator & no & NMPC (\ac{IP}, \ipopt{}~\cite{wachter_implementation_2006})&  custom AC \\
            \cite{carius_mpc-net_2020} & \etal{Carius} & 2023 & no & expert / \na{} & quadruped robot & yes & \ac{NMPC} (\ac{DDP}, \na{}) & \ac{IL} \\
            \cite{alboni_cacto-sl_2024} & \etal{Alboni} & 2024 & no & expert / \na{} & Dubin's car parking& no & NMPC (DDP, \ac{IP}, \ipopt{}~\cite{wachter_implementation_2006})&  custom AC \\
		\bottomrule
	\end{tabular}
        }
	\label{tab:literature_critic}
\end{table*}


\section{Theoretical Considerations for Combining MPC and RL}
\label{sec:theory}

\acp{MDP} establish a fundamental bridge between \ac{RL} and \ac{MPC}, as \ac{RL} provides tools to solve \acp{MDP} while \ac{MPC} formulations can provide approximations, as discussed in the context of aligned learning in \secref{sec:mpc_as_actor_aligned} and closed-loop optimal learning in \secref{sec:mpc_as_actor_closed_loop_learning}. This section explores the theoretical foundations of combining \ac{MPC} and \ac{RL} within the framework of \ac{MPC}-based \ac{MDP} approximations. A central theoretical contribution by \cite{gros_data-driven_2020} provides a theoretical link between \ac{ENMPC} and \ac{RL} by connecting the \ac{MDP} value functions and policy discussed in \secref{sec:problem_setting} with those generated by derivative-based \ac{MPC} formulations detailed in \secref{sec:mpc_algo_derivative}. This connection provides theoretical justification for approaches that incorporate \ac{MPC} as either an actor or critic, as described in \secref{sec:mpc_as_actor} and \secref{sec:comb_critic}, respectively. We begin by presenting this key theoretical result before outlining its subsequent developments.

Following the definition in~\secref{sec:policy_gradient_methods}, let $p_{f^\mpcShrt}(s_0 \rightarrow s^+, k, \pi_\theta)$ denote the probability of reaching state~$s^+$ at step $k$ when starting from the initial state $s_0$ and following the policy~$\pi$, under model dynamics $f^\mpcShrt$. A key assumption in \cite{gros_data-driven_2020, kordabad_equivalence_2024} is that the optimal value function remains bounded under the optimal policy and model of the dynamics:

\begin{assumption}[{\cite[Assumption 1]{kordabad_equivalence_2024}}]\label{assume:S} The following set is non-empty for a given $\bar N\in\mathbb{N}$.
    \begin{equation}\label{eq:assum1}
        \resizebox{\columnwidth}{!}{
        $\mathcal{S}=:\left\{s\in\mathcal{X}\,\,\Big|\,\,\left|\mathbb{E}_{S\sim p_{f^\mpcShrt}(s_0 \rightarrow s^+, k, \pi^\star)} \left[V^\star(S)\right]\right|<\infty, \ \forall\, k \leq \bar N\right\}$}
    \end{equation}
\end{assumption}

The authors of~\cite{gros_data-driven_2020} demonstrate that discounted \ac{MPC} can capture the optimal value functions and policy of a discounted \ac{MDP} through modifications of the stage and terminal costs, even when the model $f^\mpcShrt$ differs from the true state-transition function. This result was later extended in \cite{kordabad_equivalence_2024} to undiscounted \ac{MPC} by the following Theorem, establishing a central link to classical \ac{MPC} stability theory:
\begin{theorem}[{\cite[Theorem 1]{kordabad_equivalence_2024}}]\label{theorem1} 
    Suppose that Assumption~\ref{assume:S} holds for $\bar N \geq N$. Then, there exists a terminal cost~$\bar{V}_\theta^\mpcShrt$ and a stage cost~$l_\theta^\mpcShrt$ such that the following identities hold, for all $\gamma$, $N\in\mathbb{N}$ and $s\in\mathcal{S}$:
    \begin{enumerate}
          \item $\pi_\theta^\mpcShrt(s)=\pi^\star(s), $ \label{eq:pipi}
          \item $V_\theta^\mpcShrt(s)=V^\star(s),\,\, $\label{eq:VV}
          \item $Q_\theta^\mpcShrt(s,\,a)=Q^\star(s,\,a),\,\,$ for the inputs $a\in\mathcal{U}$ such that $\lvert\mathbb{E}_{S^{+} = f^\mpcShrt(s,\,A),\,A\sim\pi_\theta^\mpcShrt(\cdot \mid s)}\left[V^\star({S}^{+})| s, A\right]\rvert<\infty$\label{eq:QQ}.
    \end{enumerate}
\end{theorem}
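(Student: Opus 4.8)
The plan is to exhibit an \emph{explicit} reshaping of the MPC stage and terminal costs, built from the optimal value function $V^\star$ of the MDP playing the role of a storage function, and to show that with this choice the MPC objective telescopes into $V^\star(s)$ plus a sum of nonnegative terms. This is the cost-rotation idea familiar from dissipativity-based economic MPC, here adapted to the situation in which the prediction model $f^\mpcShrt$ differs from the true kernel $P$: the model mismatch is simply absorbed into the cost because $V^\star$ gets composed with $f^\mpcShrt$ itself. Concretely, I would work with the discounted MPC objective (the time-invariant form~\eqref{eq:mpc-q-problem} is recovered by folding the factors $\gamma^k$ into a clock-augmented stage cost), take the stage constraints $h^\mpcShrt,h_N^\mpcShrt$ to be vacuous -- prohibited regions are already encoded through $l=+\infty$, hence $Q^\star=+\infty$ -- and set
\begin{equation*}
  \bar{V}_\theta^\mpcShrt(s) \coloneqq V^\star(s), \qquad l_\theta^\mpcShrt(s,a) \coloneqq Q^\star(s,a) - \gamma\, V^\star\!\big(f^\mpcShrt(s,a)\big).
\end{equation*}

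The computational core is a telescoping identity. Along the MPC prediction $x_0=s$, $x_{k+1}=f^\mpcShrt(x_k,u_k)$,
\begin{align*}
  \gamma^N \bar{V}_\theta^\mpcShrt(x_N) + \sum_{k=0}^{N-1}\gamma^k l_\theta^\mpcShrt(x_k,u_k)
  &= \gamma^N V^\star(x_N) + \sum_{k=0}^{N-1}\Big(\gamma^k Q^\star(x_k,u_k) - \gamma^{k+1} V^\star(x_{k+1})\Big) \\
  &= \sum_{k=0}^{N-1}\gamma^k Q^\star(x_k,u_k) - \sum_{k=1}^{N-1}\gamma^k V^\star(x_k) \\
  &= V^\star(s) + \sum_{k=0}^{N-1}\gamma^k\Big(Q^\star(x_k,u_k) - V^\star(x_k)\Big),
\end{align*}
where the $\gamma^N V^\star(x_N)$ terms cancel and the last step uses $x_0=s$ and $\gamma^0=1$. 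By $V^\star(\cdot)=\min_a Q^\star(\cdot,a)$, each summand $Q^\star(x_k,u_k)-V^\star(x_k)$ is nonnegative and vanishes iff $u_k$ minimizes $Q^\star(x_k,\cdot)$.

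All three claims then follow. Minimizing over $(u_0,\dots,u_{N-1})$, the nonnegative sum is driven to zero by the greedy roll-out $u_k=\pi^\star(x_k)$ applied to $f^\mpcShrt$; by Assumption~\ref{assume:S} with $\bar N\ge N$ the states visited along this roll-out keep $V^\star$ finite, so this choice is admissible and the minimum equals $V^\star(s)$, which is item~\ref{eq:VV}. Since the minimizing first input is $u_0^\star=\pi^\star(s)$ and no other first input can be completed to a roll-out of strictly lower cost, $\pi_\theta^\mpcShrt(s)=\arg\min_a Q_\theta^\mpcShrt(s,a)=\pi^\star(s)$, item~\ref{eq:pipi}, where the equality is read up to a common tie-breaking rule when $\arg\min_a Q^\star(s,\cdot)$ is not a singleton. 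Finally, for the Q-function one fixes $u_0=a$ in~\eqref{eq:mpc-q-problem} and telescopes the residual horizon from $k=1$, obtaining $Q_\theta^\mpcShrt(s,a)=l_\theta^\mpcShrt(s,a)+V^\star\!\big(f^\mpcShrt(s,a)\big)=Q^\star(s,a)$; this is item~\ref{eq:QQ} precisely on the set of inputs $a$ for which the stated finiteness qualifier holds, so that $l_\theta^\mpcShrt(s,a)$ is itself well defined.

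The main obstacle is not the algebra but the extended-real bookkeeping underneath it: since $l$, and hence $Q^\star$, $V^\star$ and $l_\theta^\mpcShrt$, may equal $+\infty$, one must verify at each step of the telescoping that no $\infty-\infty$ arises, that the rotated MPC problem started at $s\in\mathcal{S}$ still admits a feasible finite-cost trajectory, and that the minimization commutes with the term splitting used above. This is exactly what Assumption~\ref{assume:S} -- finiteness of $V^\star$ along the $N$-step deterministic roll-out of $\pi^\star$ on $f^\mpcShrt$ -- and the qualifier in item~\ref{eq:QQ} are there to ensure; checking that they are sufficient, together with making the non-uniqueness caveat for item~\ref{eq:pipi} precise, is the delicate part of the proof.
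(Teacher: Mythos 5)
Your argument is correct and is essentially the proof behind the cited result: the paper itself only writes ``See~\cite{kordabad_equivalence_2024}'', and the mechanism there (and in its precursor~\cite{gros_data-driven_2020}) is exactly your cost rotation by the storage function $V^\star$, followed by telescoping and the nonnegativity of the advantage $Q^\star-V^\star$, with Assumption~\ref{assume:S} playing precisely the extended-real bookkeeping role you identify at the end. The one point where your route differs slightly is the handling of the discount: Theorem~\ref{theorem1} concerns the \emph{undiscounted} MPC~\eqref{eq:mpc-q-problem}, and rather than folding the factors $\gamma^k$ into a clock-augmented stage cost, the cited construction simply takes $l_\theta^\mpcShrt(s,a)=Q^\star(s,a)-V^\star\big(f^\mpcShrt(s,a)\big)$ and $\bar V_\theta^\mpcShrt=V^\star$ with no explicit $\gamma$; the undiscounted sum then telescopes to the same expression $V^\star(s)+\sum_{k}\big(Q^\star(x_k,u_k)-V^\star(x_k)\big)$ because the discounting is already internal to $Q^\star$ and $V^\star$, which is what makes the statement hold ``for all $\gamma$'' without modifying the MPC structure. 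Finally, a typo-level slip in your item~\ref{eq:QQ} step: with your discounted stage cost the minimized residual horizon contributes $\gamma V^\star\big(f^\mpcShrt(s,a)\big)$ rather than $V^\star\big(f^\mpcShrt(s,a)\big)$; the cancellation against the $-\gamma V^\star\big(f^\mpcShrt(s,a)\big)$ term in your $l_\theta^\mpcShrt$ still gives $Q_\theta^\mpcShrt(s,a)=Q^\star(s,a)$, so the conclusion stands.
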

\begin{proof}
     See \cite{kordabad_equivalence_2024}.
\end{proof}

Given the conditions in~\cref{theorem1}, we classify \ac{MPC} formulations that satisfy condition 1) as closed-loop optimal, where the learning process that aims at satisfying condition 1) using \ac{MPC} within the actor as closed-loop optimal learning, see~\ref{sec:mpc_as_actor_closed_loop_learning}. When both conditions 1) and 2) are satisfied, the \ac{MPC} is \MDPComplete{} since it captures both the optimal policy and optimal action-value function of an \ac{MDP}. At the top of this theoretical hierarchy lies the \MDPAligned{} property, which demands that the \ac{MPC} model exactly matches the state-transition kernel of the \ac{MDP}. However, this property is rarely satisfied in practice, as \ac{MPC} formulations typically employ deterministic models, while \ac{MDP} state transitions are inherently stochastic.

\subsection{Extensions of MPC-MDP Equivalence}

\cref{theorem1} was originally developed in the context of \ac{ENMPC}. Subsequent research has extended the result to several \ac{MPC} variants, including real-time iteration \ac{MPC}~\cite{zanon_reinforcement_2020}, mixed-integer \ac{MPC}~\cite{gros_reinforcement_2020}, scenario-based \ac{MPC}~\cite{kordabad_reinforcement_2021}, or any model-based policy~\cite{anand_all_2025}.

Instead of using \ac{OCP} formulations that rely on one-step prediction rollouts of the environment, \cite{sawant_bridging_2022} explores the connection between \acp{MDP} and \ac{QP} formulations that arise for tracking problems with linear dynamics or as subproblems in \ac{OCP} solvers. By parameterizing the \ac{QP} directly rather than the prediction model, this approach offers additional flexibility that can enhance the performance of the \ac{MPC} policy. Building on this idea, \cite{sawant_model-free_2023} incorporates formulations from \ac{SPC}~\cite{favoreel_spc_1999}, showing that the past input/output sequences can serve as a surrogate state from which future predictions are built. While this approach effectively handles systems without state estimators, its application is restricted to problems permitting autoregressive linear maps from inputs and outputs to future trajectories.

For systems requiring state estimation, an estimation layer such as \ac{MHE} can be integrated to provide state feedback to the \ac{MPC}. Research by \cite{esfahani_reinforcement_2021, esfahani_learning-based_2023} demonstrated that jointly optimizing the parameters of both the \ac{MHE} and \ac{MPC} leads to significantly better performance compared to optimizing \ac{MPC} parameters alone. The previously discussed challenges of differentiation through optimization layers and computational complexity in the forward pass extend naturally to the \ac{MHE} layer.

The use of \ac{MPC} and safety filters to ensure the safety of learned policies has emerged as a prominent research direction. One common approach projects control actions from trained policies onto a safe set by minimizing the distance to feasible actions that satisfy model dynamics and safety constraints. The authors of~\cite{wabersich_predictive_2021, wabersich_data-driven_2023} exemplify this strategy by introducing a predictive safety filter that post-processes actions from trained \ac{RL} policies. However, this projection step may lead to sub-optimal actions when the agent is not aware of the filter, as discussed in \cite{gros_safe_2020}. An approach to learning safe and stable policies by construction via robust \ac{MPC} within the policy is discussed in \cite{zanon_safe_2021} and further developed in~\cite{gros_learning_2022, kordabad_safe_2022}.

\subsection{Approximating Discounted MDP with Undiscounted MPC}
Research exploring the relationship between \acp{MPC} and \acp{MDP} has focused on identifying conditions under which undiscounted \acp{MPC} can approximate discounted \acp{MDP}. This connection draws from dissipativity theory~\cite{gros_economic_2022}, a key concept in \ac{ENMPC} stability analysis. Using dissipativity arguments, \cite{kordabad_verification_2021} investigates undiscounted Q-Learing of tracking \ac{MPC} schemes that are locally equivalent to dissipative \ac{ENMPC}, focusing on learning storage functions that maintain \ac{ENMPC} dissipativity. The authors of~\cite{zanon_stability-constrained_2022} prove that under weak stability conditions, the optimal policy of undiscounted and discounted \acp{MDP} coincide - enabling stable undiscounted \acp{MPC} to yield the optimal policy of stable discounted \acp{MDP}. Recent advances by~\cite{kordabad_equivalence_2024} extend these results to unknown dynamics, demonstrating that undiscounted \acp{MPC} can capture optimal policies of discounted \acp{MDP}, as discussed around \cref{theorem1}. Complementary work in~\cite{kordabad_bias_2023} introduces methods to align undiscounted \ac{MPC} cost function minimizers with those of discounted \acp{MDP}. These contributions share a common thread: by enforcing parameter updates that yield dissipative \ac{ENMPC} formulations, stable \ac{ENMPC} policies can be designed constructively rather than by using dissipativity solely as a verification criterion.

\begin{table*}
\caption{Literature that discusses theoretical aspects of the connection between \ac{MPC} and \ac{RL}.}
	\centering
	\label{tab:literature_rlmpc_theory}
	\resizebox{\textwidth}{!}{%
		\begin{tabular}{@{}llllllll@{}}
			\toprule
			Ref. & Authors & Year& Contribution\\ 
			\midrule
			\multicolumn{4}{c}{\ac{MPC} as a Model of the MDP}\\
			\midrule
			\cite{gorges_relations_2017} & Görges &2017&Compares explicit \ac{MPC} for linear input and state-constrained systems to function approximation by \ac{NN} in the context of \ac{RL}\\ 
			\cite{zanon_practical_2019} & \etal{Zanon} &2019& \Acf{ENMPC} as function approximation\\ 
			\cite{gros_data-driven_2020} & \etal{Gros} &2020 & Fundamental principles for \ac{ENMPC} to approximate \ac{MDP} and its connection to \ac{RL}\\ 
			\cite{zanon_reinforcement_2020} & \etal{Zanon} &2020 & Extends \cite{gros_data-driven_2020} to Real-Time Iteration \ac{NMPC}\\ 
			\cite{gros_reinforcement_2020} & \etal{Gros} &2020 &  Extends \cite{gros_data-driven_2020} to Mixed-Integer \ac{MPC}\\ 
			\cite{kordabad_reinforcement_2021} & \etal{Kordabad} &2021 &   Extends \cite{gros_data-driven_2020} to scenario-based \ac{MPC}\\ 
			\cite{esfahani_reinforcement_2021} & \etal{Esfahani} &2021 & Use combined parameterized \ac{MHE} and parameterized \ac{MPC} as a function approximation in RL\\ %
			\cite{sawant_bridging_2022} & \etal{Sawant} &2022 & Extends \cite{gros_data-driven_2020} to generic parameterized \ac{QP} layers instead of MPC\\ 
			\cite{sawant_model-free_2023} & \etal{Sawant} &2023 & Extends to \acf{SPC} using input-output data as surrogate state \\ 
			\cite{seel_convex_2022} & \etal{Seel} &2022 & Extends the cost to convex \acf{NN} formulations to facilitate a more complex function approximation\\ 
			\cite{anand_painless_2023} & \etal{Anand} &2023 & Proposes deterministic policy gradient to train \ac{MPC} via a parameter-perturbed variant as the critic.\\ 
			\cite{esfahani_learning-based_2023} & \etal{Esfahani} &2023 & Use combined parameterized \ac{MHE} and parameterized \ac{MPC} (extension of \cite{esfahani_reinforcement_2021})\\ 
			\midrule
			\multicolumn{4}{c}{Post-processing and Safety}\\
			\midrule
			\cite{gros_safe_2020} & \etal{Gros} &2020 & Discusses the impact of safe policy projections, cf. \cite{wabersich_predictive_2021}, on learning and proposes corrections in the context of Q-learning \\ 
			\cite{zanon_safe_2021} & \etal{Zanon} &2021 & Addresses safety in \ac{RL} via robust \ac{MPC} and safe parameter updates\\ 
			\cite{wabersich_predictive_2021} & \etal{Wabersich} &2021 & Address safety via \ac{MPC}-based predictive safety filters that modify unsafe actions from \ac{RL} policies\\ 
			\cite{wabersich_data-driven_2023} & \etal{Wabersich} &2021 & Address safety via \ac{MPC}-based predictive safety filters that modify unsafe actions from \ac{RL} policies\\ 
			\cite{gros_learning_2022} & \etal{Gros} &2022 & Studies safety and stability of \ac{MPC} schemes subject to parameter updates by an \ac{RL} policy\\ 
			\cite{kordabad_safe_2022} & \etal{Kordabad} &2022 & Use Wasserstein Distributionally Robust \ac{MPC} as a tool to generate safe \ac{RL} policies\\ 
			\midrule
			\multicolumn{4}{c}{Approximation Theory}\\
			\midrule
			\cite{kordabad_verification_2021} & \etal{Kordabad} &2021 & Proposes Q-learning to modify the storage function in \ac{ENMPC} such that it meets disspativity criteria.\\
			\cite{kordabad_bias_2021} & \etal{Kordabad} &2021 & Address the problem that MPC-based policies with hard constraints can lead to biased gradient estimates in actor-critic methods.\\
			\cite{gros_economic_2022} & \etal{Gros} &2022 & Extends dissipativity theory of \ac{ENMPC} to a wider class of probabilistic dynamics to connect to undiscounted \acp{MDP}.\\
			\cite{zanon_stability-constrained_2022} & \etal{Zanon} &2022 & Connects discounted and undiscounted \acp{MDP} with known dynamics to generate stability-constrained optimal policies through \ac{MPC}.\\
			\cite{kordabad_bias_2023} & \etal{Kordabad} &2023 & Proposes a cost modification for discounted \ac{ENMPC} such that stage-cost minimizer coincides with the undiscounted case.   \\
			\cite{seel_combining_2023} & \etal{Seel} & 2023 & Proposes a combination of policy gradient and Q-learning to exploit gradient information and capture cost function properties not affecting the policy.\\
			\cite{kordabad_equivalence_2024} & \etal{Kordabad} &2024 & Extends \cite{zanon_stability-constrained_2022} to mismatching dynamics and extends \cite{gros_data-driven_2020} to capture a discounted \ac{MDP} via undiscounted \ac{MPC}.\\
			\bottomrule
		\end{tabular}
        }
\end{table*}

\subsection{State-transition Model for Closed-Loop Optimality}

Traditional \ac{MPC} applications select prediction models by minimizing the identification loss~\eqref{eq:supervised_loss}, along with considerations regarding considerations such as convexity or smoothness. However, models achieving good mean-error predictions or maximum-likelihood models may not necessarily guarantee closed-loop optimality. Recent research by~\cite{anand_akhil_s_data-driven_2024, reinhardt_economic_2024} establishes formal requirements for prediction models in~\eqref{eq:mpc-q-problem} to achieve closed-loop optimality and capture optimal value functions. They demonstrate that for a model $f^\mpcShrt$ to deliver both the optimal policy $\pi^\star$ and optimal action-value function $Q^\star$, it must satisfy:
\begin{equation}
\label{eq:closed_loop_optimal_model}
            \mathbb E_{S^{+}\sim P(\cdot | s,\,a)}\left[V^\star\left( S^+\right)\,|\, s, a\, \right] - V^\star\left( f^\mathrm{MPC}\left( s, a\right)\right) = V_0,
\end{equation}
where $V_0$ is a constant. While \eqref{eq:closed_loop_optimal_model} does not directly yield a system identification procedure, it provides insights into models minimizing~\eqref{eq:supervised_loss}. Such models prove optimal for deterministic \acp{MDP} and \ac{LQR} problems. Local optimality can be established for tracking problems, set-point stabilization, and dissipative economic problems by bounding the conditional covariance of the state transition dynamics~$P$ given in \eqref{eq:general_model} and the curvature of the (locally smooth) optimal value function~$V^\star$. However, models minimizing \eqref{eq:supervised_loss}  may not yield optimal policies for problems with non-smooth cost functions or dynamics, non-dissipative \acp{MDP}, or strong disturbances.

The authors of~\cite{nikishin_control-oriented_2022} approach the problem of model design by formulating an equivalent expression to \eqref{eq:closed_loop_optimal_model} in terms of the Bellman equation,  introducing the \ac{OMD} method for simultaneous learning of Q-functions and models. This connects to broader developments in \ac{RL}, such as the learned hidden state dynamic models in~\cite{schrittwieser_mastering_2020}, which focus on capturing only the state information essential for optimal actions and rewards. While \cite{anand_akhil_s_data-driven_2024, anand_all_2025, reinhardt_economic_2024} focus on the theoretical properties of models in \ac{MPC} contexts, \cite{nikishin_control-oriented_2022} provides algorithmic tools for their construction.

\section{Software Tools and Implementation Aspects}
\label{sec:software}
This section highlights challenges in practical implementations and points to available software solutions for combining \ac{MPC} and \ac{RL}.

\subsection{Integrating Software from Machine Learning and Numerical Optimization}
The architectural combinations of \acp{NN} and \acp{MPC} described in~\secref{sec:comb_arch} and illustrated in~\figref{fig:structure2} significantly influence the choice of optimization software, particularly for derivative-based \ac{MPC} solvers. When objective or constraint functions are highly nonlinear, the resulting optimization problems become nonconvex and computationally challenging. Furthermore, the selected software must be compatible with machine learning libraries to enable seamless integration.

To address these challenges, recent open-source tools have emerged. The package \texttt{l4casadi}~\cite{salzmann2023learning} enables the integration of \acp{NN}~$\nnfun_{\theta}(s,z)$ within the automatic differentiation tool~\casadi{}~\cite{andersson_sensitivity_2018}. Similarly, \texttt{l4acados}~\cite{lahr_l4acados_2024} bridges learned models from \pytorch{} to \acados{}, focusing on residual models based on \acp{GP} in an integrated approach $\phi = \varphi_\theta(z;\,s)$. While these tools address derivative-based optimization, sampling-based \ac{MPC} solvers offer an alternative approach, as they are less sensitive to nonlinear functions and only require forward simulation.

\subsection{Implementation Aspects}
Besides the difficulty of solving an optimization problem that involves \acp{NN}
with derivative-based solvers, several further issues arise when combining MPC
and RL, and should be considered in the implementation. 

\begin{enumerate}
    \item \emph{Differentiation}: The \ac{MPC} solver differentiation requires the computation of sensitivities - a process that is theoretically straightforward but demands existing solvers to provide an efficient implementation of the parametric sensitivity computations. This capability is currently ongoing work in \texttt{acados}~\cite{verschueren_acadosmodular_2022} and will be detailed in a future publication.
    \item \emph{Solver States}: \ac{MPC} solvers often introduce additional solver states that must be incorporated into an augmented \ac{MDP}. Nonlinear optimization processes start from initial guesses and may converge to different local minima, making these initial guesses part of the state space. This becomes particularly relevant in the \ac{RTI} scheme~\cite{diehl_real-time_2002} (cf.~\secref{sec:mpc}), where the solver tracks optima across iterations rather than achieving full convergence. These additional solver states increase the \ac{MDP}'s complexity.
    \item \emph{Computational Efficiency}: Some \ac{RL} algorithms, such as \ac{SAC}, require random sampling from replay buffers during learning. Computing gradients for each sample necessitates multiple solutions to optimization problems, resulting in significant computational overhead.
    \item \emph{Hardware Utilization}: While machine learning frameworks benefit from parallelizable architectures, \ac{MPC} solvers typically target CPU-based embedded applications. Efforts to leverage GPUs or parallel CPU implementations represent an emerging research direction, with recent developments including GPU-based sampling \ac{MPC}~\cite{vlahov_mppi-generic_2024} and multi-core CPU implementations for derivative-based \ac{MPC} in \acados{} \cite{verschueren_acadosmodular_2022}.
\end{enumerate}

Complementary to this survey, early-stage software development aims at addressing these aspects. The code is publicly available as ~\texttt{\ac{leap-c}}\footnote{\url{https://github.com/leap-c/leap-c}}, a tool for derivative-based closed-loop learning utilizing MPC policies with the \acados{} solver as a numerical optimization backend. Similar software projects with a lower degree of integration with fast \ac{OCP} solvers exist\footnote{\url{https://github.com/FilippoAiraldi/mpc-reinforcement-learning}}.

\section{Conclusion and Discussion}
\label{sec:discussion}
\acresetall

\Ac{MPC} and \ac{RL} each require extensive background knowledge, as detailed in \secref{sec:rl} and \secref{sec:mpc}, and possess complementary features as highlighted in \secref{sec:comparison}.
The latter raises the question of how to effectively combine these approaches to leverage their respective strengths while mitigating their limitations.
In \secref{sec:comb_arch}, various possible combinations within the framework of solving \acp{MDP} through actor-critic \ac{RL} are demonstrated.

Within this framework, \ac{MPC} can serve as an algorithmic part in multiple roles. One role is as a supervisory expert actor in \ac{IL} or \ac{GPS} frameworks as discussed in \ref{sec:comb_expert}.
This allows to include domain knowledge and other features to develop computationally efficient policies for online deployment.

Another role for \ac{MPC} is as a part of the policy - the most prevalent approach in the literature as shown in~\secref{sec:mpc_as_actor}.
The categorization becomes more intricate in this case. 
We distinguish between two main directions: \MDPAligned{} approaches that aim at modifying the \ac{MPC} to reflect the \ac{MDP} structure in terms of cost and state transitions, and \closedLoopOpt{} approaches that focus on the closed-loop optimality of \ac{MPC}. In both cases, the \ac{MPC} enters the learning pipeline through an optimization layer that may be expensive to evaluate in the forward inference path or to differentiate in the backpropagation path in order to provide the respective gradients, cf.~\appref{appendix:differentiating}. A challenge concerning the implementation is that the computation of gradients is often not provided by optimization solvers, despite recent implementations in, e.g., \acados{}~\cite{verschueren_acadosmodular_2022} and other software discussed in \secref{sec:software}. It is possible to avoid the gradient computation of the \ac{MPC} solution by integrating it into the environment (cf. \secref{sec:mpc_as_actor_closed_loop_learning}) such that its parameterization becomes part of the action space. The critic then evaluates the parameterization of the MPC as opposed to the actions provided by the MPC output. Considering \ac{MPC} as a part of the environment and evaluating the critic on the MPC parameterization can be viewed as an example of the second practical architecture that we emphasize, i.e., architectures that use the MPC solver within the forward path but do not require its differentiation. Other architectures that avoid differentiation the category of using an expert MPC (Sect.~\ref{sec:comb_expert}), using a fixed MPC critic ( Sect.~\ref{sec:comb_critic}) and on-policy learning of the model and the terminal value function in the \MDPAligned{} structure (Sect.~\ref{sec:mpc_as_actor_aligned}) can be seen as variants that do not require differentiation but use MPC as part of the forward path. Still, the repeated solutions of optimization problems make learning iterations significantly more expensive, and implementation aspects and tailored software as outlined in \secref{sec:software} should be considered.

The last role of MPC in the RL framework is that of a critic, as discussed in \ref{sec:comb_critic}, which is not as prominent in the literature but may have the advantage of building explainable value functions estimates and significantly improving the sample efficiency and avoidance of local minima of \ac{RL} training. Again, repeated \ac{MPC} evaluations increase the computational cost compared to other \ac{NN} layers.

Beyond these categories, several variants exist in the literature where \ac{MPC} is not used during \ac{RL} training but where \ac{MPC} is solely used in the final deployed policy. These include off-policy learning of the model and some algorithms that learn a terminal value function in the \MDPAligned{} structure ( Sect.~\ref{sec:mpc_as_actor_aligned}) as well as pre- and post-processing components (\secref{sec:prepostprocessing}). In the latter, \ac{MPC} can function as a safety filter to enhance trained policies with its inherent constraint satisfaction capabilities~\cite{hewing_cautious_2020} and as a reference provider (~\ref{sec:prepostprocessing}) - a role that may extend to the learning phase as well.

Yet another important design decision is the combination architecture of \acp{NN} and \acp{MPC} in \secref{sec:comb_arch} for which we identified the integrated, hierarchical, parallel or parameterized approaches outlined in \figref{fig:structure2}. \Acp{NN} are usually highly nonlinear functions that deserve a distinction from moderately nonlinear functions used as part of derivative-based MPC. In Sect.~\ref{sec:comb_arch}, we contrast between using \acp{NN} outside the MPC, i.e., it does not depend on decision variables of the optimization problem, using \acp{NN} inside the MPC optimization problem or avoiding \acp{NN} and using parameters of simpler linear or quadratic functions.

Many variants of classifying \ac{MPC} and \ac{RL} methods exist in the
literature, with notable frameworks presented
in~\cite{bertsekas_reinforcement_2019,gorges_relations_2017} and
\cite{mesbah_fusion_2022}. We used an actor-critic perspective and categories
within that setting, that allows us to give a broad classification of very diverse algorithms.


As indicated in this survey, the expectations for the superiority of MPC and RL
combinations are high, but challenges remain. Theoretical work provided already
a solid foundation, cf. Sect.~\ref{sec:theory}. Nonetheless, practical work that
shows large improvements over various domains is still required. Most
importantly, the practical work requires numerical solvers to align with machine
learning frameworks and tools that allow fast MPC solvers to be embedded in
learning frameworks, see~\cref{sec:software}. With the increased development of
such tools, the outlook on the MPC and RL combination appears prosperous. In future
work, we hope the community will investigate whether the control paradigm shifts
from \MDPAligned{} learning to \closedLoopOpt{} learning and whether the
sampling complexity and other challenges of \ac{RL} can be improved by MPC
optimization modules.

\bibliographystyle{IEEEtran}
\bibliography{lib_zotero_sync}

\begin{thebibliography}{100}
\providecommand{\url}[1]{#1}
\csname url@samestyle\endcsname
\providecommand{\newblock}{\relax}
\providecommand{\bibinfo}[2]{#2}
\providecommand{\BIBentrySTDinterwordspacing}{\spaceskip=0pt\relax}
\providecommand{\BIBentryALTinterwordstretchfactor}{4}
\providecommand{\BIBentryALTinterwordspacing}{\spaceskip=\fontdimen2\font plus
\BIBentryALTinterwordstretchfactor\fontdimen3\font minus \fontdimen4\font\relax}
\providecommand{\BIBforeignlanguage}[2]{{%
\expandafter\ifx\csname l@#1\endcsname\relax
\typeout{** WARNING: IEEEtran.bst: No hyphenation pattern has been}%
\typeout{** loaded for the language `#1'. Using the pattern for}%
\typeout{** the default language instead.}%
\else
\language=\csname l@#1\endcsname
\fi
#2}}
\providecommand{\BIBdecl}{\relax}
\BIBdecl

\bibitem{sutton_reinforcement_2018}
R.~S. Sutton and A.~G. Barto, \emph{\BIBforeignlanguage{en}{Reinforcement learning: an introduction}}, second edition~ed., ser. Adaptive computation and machine learning series.\hskip 1em plus 0.5em minus 0.4em\relax Cambridge, Massachusetts: The MIT Press, 2018.

\bibitem{rawlings_model_2017}
J.~B. Rawlings, D.~Q. Mayne, and M.~M. Diehl, \emph{\BIBforeignlanguage{English}{Model {Predictive} {Control}: {Theory}, {Computation}, and {Design}}}, 2nd~ed.\hskip 1em plus 0.5em minus 0.4em\relax Santa Barbara, California: Nob Hill Publishing, 2017.

\bibitem{bertsekas_reinforcement_2019}
D.~Bertsekas, \emph{\BIBforeignlanguage{English}{Reinforcement {Learning} and {Optimal} {Control}}}, first edition~ed.\hskip 1em plus 0.5em minus 0.4em\relax Belmont, Massachusetts: Athena Scientific, Jul. 2019.

\bibitem{dantzig_simplex_1956}
G.~B. Dantzig, \emph{The {Simplex} {Method}.}\hskip 1em plus 0.5em minus 0.4em\relax Santa Monica, CA: RAND Corporation, 1956.

\bibitem{schwenzer_review_2021}
M.~Schwenzer, M.~Ay, T.~Bergs, and D.~Abel, ``Review on model predictive control: an engineering perspective,'' \emph{The International Journal of Advanced Manufacturing Technology}, vol. 117, pp. 1327 -- 1349, 2021.

\bibitem{mnih_human-level_2015}
V.~Mnih, K.~Kavukcuoglu, D.~Silver, A.~A. Rusu, J.~Veness, M.~G. Bellemare, A.~Graves, M.~A. Riedmiller, A.~Fidjeland, G.~Ostrovski, S.~Petersen, C.~Beattie, A.~Sadik, I.~Antonoglou, H.~King, D.~Kumaran, D.~Wierstra, S.~Legg, and D.~Hassabis, ``Human-level control through deep reinforcement learning,'' \emph{Nat.}, vol. 518, no. 7540, pp. 529--533, 2015.

\bibitem{wurman_outracing_2022}
P.~R. Wurman, S.~Barrett, K.~Kawamoto, J.~MacGlashan, K.~Subramanian, T.~J. Walsh, R.~Capobianco, A.~Devlic, F.~Eckert, F.~Fuchs, L.~Gilpin, P.~Khandelwal, V.~Kompella, H.~Lin, P.~MacAlpine, D.~Oller, T.~Seno, C.~Sherstan, M.~D. Thomure, H.~Aghabozorgi, L.~Barrett, R.~Douglas, D.~Whitehead, P.~Dürr, P.~Stone, M.~Spranger, and H.~Kitano, ``Outracing champion {Gran} {Turismo} drivers with deep reinforcement learning,'' \emph{Nature}, vol. 602, pp. 223 -- 228, 2022.

\bibitem{gorges_relations_2017}
D.~Görges, ``Relations between {Model} {Predictive} {Control} and {Reinforcement} {Learning},'' \emph{IFAC-PapersOnLine}, vol.~50, no.~1, pp. 4920--4928, Jul. 2017.

\bibitem{brunke_safe_2022}
L.~Brunke, M.~Greeff, A.~W. Hall, Z.~Yuan, S.~Zhou, J.~Panerati, and A.~P. Schoellig, ``Safe {Learning} in {Robotics}: {From} {Learning}-{Based} {Control} to {Safe} {Reinforcement} {Learning},'' \emph{Annual Review of Control, Robotics, and Autonomous Systems}, vol.~5, no.~1, pp. 411--444, 2022.

\bibitem{recht_tour_2019}
B.~Recht, ``\BIBforeignlanguage{en}{A {Tour} of {Reinforcement} {Learning}: {The} {View} from {Continuous} {Control}},'' \emph{\BIBforeignlanguage{en}{Annual Review of Control, Robotics, and Autonomous Systems}}, vol.~2, no. Volume 2, 2019, pp. 253--279, May 2019, publisher: Annual Reviews.

\bibitem{bengio_machine_2021}
Y.~Bengio, A.~Lodi, and A.~Prouvost, ``Machine learning for combinatorial optimization: {A} methodological tour d’horizon,'' \emph{European Journal of Operational Research}, vol. 290, no.~2, pp. 405--421, Apr. 2021.

\bibitem{bertsekas_lessons_2022}
D.~Bertsekas, \emph{Lessons from {AlphaZero} for {Optimal}, {Model} {Predictive}, and {Adaptive} {Control}}.\hskip 1em plus 0.5em minus 0.4em\relax Athena Scientific, 2022.

\bibitem{bertsekas_newtons_2022}
------, ``Newton’s method for reinforcement learning and model predictive control,'' \emph{Results in Control and Optimization}, vol.~7, p. 100121, Jun. 2022.

\bibitem{silver_mastering_2017}
D.~Silver, T.~Hubert, J.~Schrittwieser, I.~Antonoglou, M.~Lai, A.~Guez, M.~Lanctot, L.~Sifre, D.~Kumaran, T.~Graepel, T.~Lillicrap, K.~Simonyan, and D.~Hassabis, ``Mastering {Chess} and {Shogi} by {Self}-{Play} with a {General} {Reinforcement} {Learning} {Algorithm},'' Dec. 2017, arXiv:1712.01815 [cs].

\bibitem{silver_mastering_2017-1}
D.~Silver, J.~Schrittwieser, K.~Simonyan, I.~Antonoglou, A.~Huang, A.~Guez, T.~Hubert, L.~Baker, M.~Lai, A.~Bolton, Y.~Chen, T.~Lillicrap, F.~Hui, L.~Sifre, G.~van~den Driessche, T.~Graepel, and D.~Hassabis, ``\BIBforeignlanguage{en}{Mastering the game of {Go} without human knowledge},'' \emph{\BIBforeignlanguage{en}{Nature}}, vol. 550, no. 7676, pp. 354--359, Oct. 2017, publisher: Nature Publishing Group.

\bibitem{hewing_learning-based_2020}
L.~Hewing, K.~P. Wabersich, M.~Menner, and M.~N. Zeilinger, ``Learning-based model predictive control: {Toward} safe learning in control,'' \emph{Annual Review of Control, Robotics, and Autonomous Systems}, vol.~3, no.~1, pp. 269--296, 2020.

\bibitem{mesbah_fusion_2022}
A.~Mesbah, K.~P. Wabersich, A.~P. Schoellig, M.~N. Zeilinger, S.~Lucia, T.~A. Badgwell, and J.~A. Paulson, ``Fusion of {Machine} {Learning} and {MPC} under {Uncertainty}: {What} {Advances} {Are} on the {Horizon}?'' in \emph{American {Control} {Conference} ({ACC})}, Jun. 2022, pp. 342--357, iSSN: 2378-5861.

\bibitem{norouzi_integrating_2023}
A.~Norouzi, H.~Heidarifar, H.~Borhan, M.~Shahbakhti, and C.~R. Koch, ``Integrating {Machine} {Learning} and {Model} {Predictive} {Control} for automotive applications: {A} review and future directions,'' \emph{Engineering Applications of Artificial Intelligence}, vol. 120, p. 105878, Apr. 2023.

\bibitem{zhang_building_2022}
H.~Zhang, S.~Seal, D.~Wu, F.~Bouffard, and B.~Boulet, ``\BIBforeignlanguage{en}{Building {Energy} {Management} {With} {Reinforcement} {Learning} and {Model} {Predictive} {Control}: {A} {Survey}},'' \emph{\BIBforeignlanguage{en}{IEEE Access}}, vol.~10, pp. 27\,853--27\,862, 2022.

\bibitem{wang_benchmarking_2019}
T.~Wang, X.~Bao, I.~Clavera, J.~Hoang, Y.~Wen, E.~Langlois, S.~Zhang, G.~Zhang, P.~Abbeel, and J.~Ba, ``Benchmarking {Model}-{Based} {Reinforcement} {Learning},'' Jul. 2019, arXiv:1907.02057 [cs].

\bibitem{puterman_markov_2005}
M.~L. Puterman, \emph{\BIBforeignlanguage{eng}{Markov decision processes: discrete stochastic dynamic programming}}, ser. Wiley series in probability and statistics.\hskip 1em plus 0.5em minus 0.4em\relax Hoboken, NJ: Wiley-Interscience, 2005, oCLC: 254152847.

\bibitem{bellman_dynamic_1966}
R.~Bellman, ``Dynamic programming,'' \emph{American Association for the Advancement of Science}, vol. 153, no. 3731, pp. 34--37, 1966.

\bibitem{bertsekas_neuro-dynamic_1996}
D.~P. Bertsekas and J.~N. Tsitsiklis, \emph{Neuro-dynamic programming}, ser. Optimization and neural computation series.\hskip 1em plus 0.5em minus 0.4em\relax Belmont, Mass: Athena Scientific, 1996.

\bibitem{powell_approximate_2007}
W.~B. Powell, \emph{Approximate {Dynamic} {Programming}: {Solving} the curses of dimensionality}.\hskip 1em plus 0.5em minus 0.4em\relax John Wiley \& Sons, 2007, vol. 703.

\bibitem{watkins_q-learning_1992}
C.~J. C.~H. Watkins and P.~Dayan, ``\BIBforeignlanguage{en}{Q-learning},'' \emph{\BIBforeignlanguage{en}{Machine Learning}}, vol.~8, no.~3, pp. 279--292, May 1992.

\bibitem{williams_simple_1992}
R.~J. Williams, ``\BIBforeignlanguage{en}{Simple statistical gradient-following algorithms for connectionist reinforcement learning},'' \emph{\BIBforeignlanguage{en}{Machine Learning}}, vol.~8, no.~3, pp. 229--256, May 1992.

\bibitem{silver_deterministic_2014}
D.~Silver, G.~Lever, N.~Heess, T.~Degris, D.~Wierstra, and M.~A. Riedmiller, ``Deterministic policy gradient algorithms,'' in \emph{Proceedings of the 31th international conference on machine learning, {ICML} 2014, beijing, china, 21-26 june 2014}, ser. {JMLR} workshop and conference proceedings, vol.~32.\hskip 1em plus 0.5em minus 0.4em\relax JMLR.org, 2014, pp. 387--395.

\bibitem{fujimoto_addressing_2018}
S.~Fujimoto, H.~van Hoof, and D.~Meger, ``Addressing function approximation error in actor-critic methods,'' in \emph{Proceedings of the 35th international conference on machine learning, {ICML} 2018}, ser. Proceedings of machine learning research, J.~G. Dy and A.~Krause, Eds., vol.~80.\hskip 1em plus 0.5em minus 0.4em\relax PMLR, 2018, pp. 1582--1591.

\bibitem{haarnoja_soft_2018}
T.~Haarnoja, A.~Zhou, P.~Abbeel, and S.~Levine, ``Soft actor-critic: {Off}-policy maximum entropy deep reinforcement learning with a stochastic actor,'' in \emph{Proceedings of the 35th international conference on machine learning, {ICML}}, ser. Proceedings of machine learning research, J.~G. Dy and A.~Krause, Eds., vol.~80.\hskip 1em plus 0.5em minus 0.4em\relax PMLR, 2018, pp. 1856--1865.

\bibitem{tilbury_revisiting_2023}
C.~R. Tilbury, F.~Christianos, and S.~V. Albrecht, ``Revisiting the {Gumbel}-{Softmax} in {MADDPG},'' 2023, arXiv:2302.11793 [cs].

\bibitem{lecun_efficient_1998}
Y.~LeCun, L.~Bottou, G.~B. Orr, and K.~R. Müller, ``\BIBforeignlanguage{en}{Efficient {BackProp}},'' in \emph{\BIBforeignlanguage{en}{Neural {Networks}: {Tricks} of the {Trade}}}, G.~B. Orr and K.-R. Müller, Eds.\hskip 1em plus 0.5em minus 0.4em\relax Berlin, Heidelberg: Springer, 1998, pp. 9--50.

\bibitem{lillicrap_continuous_2016}
T.~P. Lillicrap, J.~J. Hunt, A.~Pritzel, N.~Heess, T.~Erez, Y.~Tassa, D.~Silver, and D.~Wierstra, ``Continuous control with deep reinforcement learning,'' in \emph{4th {International} {Conference} on {Learning} {Representations}, {ICLR}}, Y.~Bengio and Y.~LeCun, Eds., 2016.

\bibitem{schulman_proximal_2017}
J.~Schulman, F.~Wolski, P.~Dhariwal, A.~Radford, and O.~Klimov, ``Proximal policy optimization algorithms,'' \emph{CoRR}, vol. abs/1707.06347, 2017.

\bibitem{schulman_trust_2015}
J.~Schulman, S.~Levine, P.~Moritz, M.~I. Jordan, and P.~Abbeel, ``Trust region policy optimization,'' \emph{CoRR}, vol. abs/1502.05477, 2015.

\bibitem{schulman_high-dimensional_2016}
J.~Schulman, P.~Moritz, S.~Levine, M.~I. Jordan, and P.~Abbeel, ``High-{Dimensional} {Continuous} {Control} {Using} {Generalized} {Advantage} {Estimation}.'' in \emph{4th {International} {Conference} on {Learning} {Representations}, {ICLR}}, 2016.

\bibitem{grudzien_mirror_2022}
J.~Grudzien, C.~A.~S. De~Witt, and J.~Foerster, ``Mirror learning: {A} unifying framework of policy optimisation,'' in \emph{International {Conference} on {Machine} {Learning}}.\hskip 1em plus 0.5em minus 0.4em\relax PMLR, 2022, pp. 7825--7844.

\bibitem{haarnoja_soft_2019}
T.~Haarnoja, A.~Zhou, K.~Hartikainen, G.~Tucker, S.~Ha, J.~Tan, V.~Kumar, H.~Zhu, A.~Gupta, P.~Abbeel, and S.~Levine, ``Soft {Actor}-{Critic} {Algorithms} and {Applications},'' Jan. 2019, arXiv:1812.05905 [cs].

\bibitem{ziebart_maximum_2008}
B.~D. Ziebart, A.~L. Maas, J.~A. Bagnell, and A.~K. Dey, ``Maximum entropy inverse reinforcement learning,'' in \emph{Proceedings of the 23rd {AAAI} conference on artificial intelligence}.\hskip 1em plus 0.5em minus 0.4em\relax AAAI Press, 2008, pp. 1433--1438.

\bibitem{ball_offcon3_2021}
P.~J. Ball and S.~J. Roberts, ``{OffCon3}: {What} is state of the art anyway?'' Mar. 2021.

\bibitem{eysenbach_maximum_2022}
B.~Eysenbach and S.~Levine, ``Maximum {Entropy} {RL} ({Provably}) {Solves} {Some} {Robust} {RL} {Problems}.'' in \emph{10th {International} {Conference} on {Learning} {Representations}, {ICLR}}, 2022.

\bibitem{muller_essentially_2024}
J.~Müller and S.~Cayci, ``\BIBforeignlanguage{en}{Essentially {Sharp} {Estimates} on the {Entropy} {Regularization} {Error} in {Discounted} {Markov} {Decision} {Processes}},'' Jun. 2024.

\bibitem{grune_nonlinear_2017}
L.~Grüne and J.~Pannek, \emph{Nonlinear {Model} {Predictive} {Control}. {Theory} and {Algorithms}}, 2nd~ed.\hskip 1em plus 0.5em minus 0.4em\relax Springer, 2017.

\bibitem{grune_economic_2013}
\BIBentryALTinterwordspacing
L.~Grüne, ``Economic receding horizon control without terminal constraints,'' \emph{Automatica}, vol.~49, no.~3, pp. 725--734, 2013. [Online]. Available: \url{https://www.sciencedirect.com/science/article/pii/S0005109812006024}
\BIBentrySTDinterwordspacing

\bibitem{nicolao_stabilizing_1998}
G.~D. Nicolao, L.~Magni, and R.~Scattolini, ``Stabilizing {Receding}-{Horizon} control of nonlinear time varying systems,'' \emph{IEEE Transactions on Automatic Control}, vol. AC-43, no.~7, pp. 1030--1036, 1998.

\bibitem{nicolao_stabilizing_1996}
------, ``Stabilizing nonlinear receding horizon control via a nonquadratic terminal state penalty,'' in \emph{Symposium on {Control}, {Optimization} and {Supervision}, {CESA}'96 {IMACS} {Multiconference}}, Lille, 1996, pp. 185--187.

\bibitem{diehl_online_2003}
M.~Diehl, L.~Magni, and G.~D. Nicolao, ``Online {NMPC} of a looping kite using approximate infinite horizon closed loop costing,'' in \emph{Proceedings of the {IFAC} {Conference} on {Control} {Systems} {Design}}.\hskip 1em plus 0.5em minus 0.4em\relax Bratislava, Slovak Republic: IFAC, Sep. 2003.

\bibitem{diehl_efficient_2004}
------, ``Efficient {NMPC} of unstable periodic systems using approximate infinite horizon closed loop costing,'' \emph{Annual Reviews in Control}, vol.~28, no.~1, pp. 37--45, 2004.

\bibitem{anderson_optimal_1990}
B.~D.~O. Anderson and J.~B. Moore, \emph{Optimal {Control} - {Linear} {Quadratic} {Methods}}.\hskip 1em plus 0.5em minus 0.4em\relax Dover, 1990.

\bibitem{nocedal_numerical_2006}
J.~Nocedal and S.~J. Wright, \emph{Numerical {Optimization}}, 2nd~ed., ser. Springer {Series} in {Operations} {Research} and {Financial} {Engineering}.\hskip 1em plus 0.5em minus 0.4em\relax Springer, 2006.

\bibitem{kerrigan_robust_2000}
E.~C. Kerrigan, ``Robust {Constraint} {Satisfaction}: {Invariant} {Sets} and {Predictive} {Control},'' {PhD} {Thesis}, University of Cambridge, UK, 2000.

\bibitem{rawlings_model_2017-1}
J.~B. Rawlings and M.~J. Risbeck, ``Model predictive control with discrete actuators: {Theory} and application,'' \emph{Automatica}, vol.~78, 2017.

\bibitem{mesbah_stochastic_2016}
A.~Mesbah, ``Stochastic {Model} {Predictive} {Control}: {An} {Overview} and {Perspectives} for {Future} {Research},'' \emph{IEEE Control Systems Magazine}, vol.~36, no.~6, pp. 30--44, Dec. 2016, conference Name: IEEE Control Systems Magazine.

\bibitem{kouvaritakis_model_2015}
B.~Kouvaritakis and M.~Cannon, \emph{\BIBforeignlanguage{English}{Model {Predictive} {Control}: {Classical}, {Robust} and {Stochastic}}}, 1st~ed.\hskip 1em plus 0.5em minus 0.4em\relax Cham Heidelberg New York Dordrecht London: Springer, Dec. 2015.

\bibitem{shapiro_lectures_2009}
A.~Shapiro, D.~Dentcheva, and A.~Ruszczynski, \emph{Lectures on {Stochastic} {Programming}: {Modelling} and {Theory}}.\hskip 1em plus 0.5em minus 0.4em\relax SIAM, 2009.

\bibitem{calafiore_scenario_2006}
G.~C. Calafiore and M.~C. Campi, ``The {Scenario} {Approach} to {Robust} {Control} {Design},'' \emph{IEEE Trans. Automat. Control}, 2006.

\bibitem{langson_robust_2004}
W.~Langson, S.~R. I.~Chryssochoos, and D.~Q. Mayne, ``Robust model predictive control using tubes,'' \emph{Automatica}, vol.~40, no.~1, pp. 125--133, 2004.

\bibitem{mayne_tube-based_2011}
D.~Mayne, E.~Kerrigan, E.~J.~v. Wyk, and P.~Falugi, ``Tube-based robust nonlinear model predictive control,'' \emph{International Journal of Robust and Nonlinear Control}, vol.~21, pp. 1341--1353, 2011.

\bibitem{nagabandi_neural_2018}
A.~Nagabandi, G.~Kahn, R.~S. Fearing, and S.~Levine, ``Neural network dynamics for model-based deep reinforcement learning with model-free fine-tuning,'' in \emph{{IEEE} international conference on robotics and automation ({ICRA})}.\hskip 1em plus 0.5em minus 0.4em\relax IEEE, 2018, pp. 7559--7566.

\bibitem{kouzoupis_dual_2019}
D.~Kouzoupis, E.~Klintberg, G.~Frison, S.~Gros, and M.~Diehl, ``A dual {Newton} strategy for tree-sparse quadratic programs and its implementation in the open-source software {treeQP},'' \emph{International Jounal of Robust and Nonlinear Control}, 2019.

\bibitem{kurzhanski_ellipsoidal_1997}
A.~B. Kurzhanski and P.~Valyi, \emph{Ellipsoidal {Calculus} for {Estimation} and {Control}}.\hskip 1em plus 0.5em minus 0.4em\relax Birkhäuser Boston, 1997.

\bibitem{houska_robust_2011}
B.~Houska, ``Robust {Optimization} of {Dynamic} {Systems},'' {PhD} {Thesis}, KU Leuven, 2011.

\bibitem{gillis_positive_2013}
J.~Gillis and M.~Diehl, ``A {Positive} {Definiteness} {Preserving} {Discretization} {Method} for nonlinear {Lyapunov} {Differential} {Equations},'' in \emph{{CDC}}, 2013.

\bibitem{villanueva_robust_2017}
M.~E. Villanueva, R.~Quirynen, M.~Diehl, B.~Chachuat, and B.~Houska, ``Robust {MPC} via min-max differential inequalities,'' \emph{Automatica}, vol.~77, pp. 311--321, Mar. 2017.

\bibitem{rakovic_parameterized_2012}
S.~Rakovic, B.~Kouvaritakis, M.~Cannon, C.~Panos, and R.~Findeisen, ``Parameterized {Tube} {Model} {Predictive} {Control},'' \emph{Automatic Control, IEEE Transactions on}, vol.~57, no.~11, pp. 2746--2761, Nov. 2012.

\bibitem{rakovic_robust_2019}
S.~V. Raković, ``Robust {Model} {Predictive} {Control},'' in \emph{Encyclopedia of {Systems} and {Control}}, J.~Baillieul and T.~Samad, Eds.\hskip 1em plus 0.5em minus 0.4em\relax London: Springer London, 2019, pp. 1--11.

\bibitem{fleming_robust_2015}
J.~Fleming, B.~Kouvaritakis, and M.~Cannon, ``Robust {Tube} {MPC} for {Linear} {Systems} {With} {Multiplicative} {Uncertainty},'' \emph{IEEE Trans. Automat. Control}, vol.~60, no.~4, 2015.

\bibitem{rakovic_homothetic_2022}
S.~V. Raković, L.~Dai, and Y.~Xia, ``Homothetic {Tube} {Model} {Predictive} {Control} for {Nonlinear} {Systems},'' \emph{IEEE Trans. Automat. Control}, vol.~68, no.~8, 2022.

\bibitem{villanueva_configuration-constrained_2024}
M.~E. Villanueva, M.~A. Müller, and B.~Houska, ``Configuration-{Constrained} {Tube} {MPC},'' \emph{Automatica}, vol. 163, 2024.

\bibitem{ben-tal_adjustable_2004}
A.~Ben-Tal, A.~Goryashko, E.~Guslitzer, and A.~Nemirovski, ``Adjustable robust solutions of uncertain linear programs,'' \emph{Mathematical Programming}, vol.~99, no.~2, pp. 351--376, 2004.

\bibitem{parisini_receding-horizon_1995}
T.~Parisini and R.~Zoppoli, ``A receding-horizon regulator for nonlinear systems and a neural approximation,'' \emph{Automatica}, vol.~31, no.~10, pp. 1443--1451, Oct. 1995.

\bibitem{nagy_open-loop_2004}
Z.~K. Nagy and R.~D. Braatz, ``Open-loop and closed-loop robust optimal control of batch processes using distributional and worst-case analysis,'' \emph{Journal of Process Control}, vol.~14, pp. 411--422, 2004.

\bibitem{messerer_efficient_2021}
F.~Messerer and M.~Diehl, ``An {Efficient} {Algorithm} for {Tube}-based {Robust} {Nonlinear} {Optimal} {Control} with {Optimal} {Linear} {Feedback},'' in \emph{Proceedings of the {IEEE} {Conference} on {Decision} and {Control} ({CDC})}, 2021.

\bibitem{goulart_optimization_2006}
P.~J. Goulart, E.~C. Kerrigan, and J.~M. Maciejowski, ``Optimization over state feedback policies for robust control with constraints,'' \emph{Automatica}, vol.~42, pp. 523--533, 2006.

\bibitem{scokaert_min-max_1998}
P.~O.~M. Scokaert and D.~Q. Mayne, ``Min-max feedback model predictive control for constrained linear systems,'' \emph{IEEE Transactions on Automatic Control}, vol.~43, pp. 1136--1142, 1998.

\bibitem{diehl_formulation_2007}
M.~Diehl, ``Formulation of {Closed} {Loop} {Min}-{Max} {MPC} as a {Quadratically} {Constrained} {Quadratic} {Program},'' \emph{IEEE Transactions on Automatic Control}, vol.~52, no.~2, pp. 339--343, 2007.

\bibitem{reiter_mixed-integer_2021}
R.~Reiter, M.~Kirchengast, D.~Watzenig, and M.~Diehl, ``Mixed-integer optimization-based planning for autonomous racing with obstacles and rewards,'' \emph{IFAC-PapersOnLine}, vol.~54, no.~6, pp. 99--106, Jan. 2021.

\bibitem{nurkanovic_numerical_2023}
A.~Nurkanović, ``Numerical {Methods} for {Optimal} {Control} of {Nonsmooth} {Dynamical} {Systems},'' {PhD} {Thesis}, University of Freiburg, 2023.

\bibitem{bertsekas_dynamic_2005}
D.~P. Bertsekas, ``Dynamic {Programming} and {Suboptimal} {Control}: {A} {Survey} from \{{ADP}\} to \{{MPC}\}*,'' \emph{European Journal of Control}, vol.~11, no.~4, pp. 310--334, Jan. 2005.

\bibitem{grune_infinite_2008}
L.~Grüne and A.~Rantzer, ``On the infinite horizon performance of receding horizon controllers,'' \emph{IEEE Trans. Automat. Control}, vol.~53, no.~9, 2008, publisher: University of Bayreuth.

\bibitem{karapetyan_finite-time_2023}
A.~Karapetyan, E.~C. Balta, A.~Iannelli, and J.~Lygeros, ``On the {Finite}-{Time} {Behavior} of {Suboptimal} {Linear} {Model} {Predictive} {Control},'' \emph{Proceedings of the IEEE Conference on Decision and Control (CDC)}, 2023.

\bibitem{li_performance_2023}
Y.~Li, A.~Karapetyan, J.~Lygeros, K.~H. Johansson, and J.~Mårtensson, ``Performance {Bounds} of {Model} {Predictive} {Control} for {Unconstrained} and {Constrained} {Linear} {Quadratic} {Problems} and {Beyond},'' \emph{Proceedings of the IFAC World Congress}, 2023.

\bibitem{hadjiyiannis_efficient_2011}
M.~J. Hadjiyiannis, P.~J. Goulart, and D.~Kuhn, ``An {Efficient} {Method} to {Estimate} the {Suboptimality} of {Affine} {Controllers},'' \emph{IEEE Trans. Automat. Control}, vol.~56, no.~12, 2011.

\bibitem{fleming_stochastic_1971}
W.~H. Fleming, ``Stochastic {Control} for {Small} {Noise} {Intensities},'' \emph{SIAM J. Control}, vol.~9, no.~3, 1971.

\bibitem{messerer_fourth-order_2024}
F.~Messerer, K.~Baumgärtner, S.~Lucia, and M.~Diehl, ``Fourth-order suboptimality of nominal model predictive control in the presence of uncertainty,'' \emph{arXiv}, 2024.

\bibitem{lin_bounded-regret_2022}
Y.~Lin, Y.~Hu, G.~Qu, T.~Li, and A.~Wierman, ``Bounded-{Regret} {MPC} via {Perturbation} {Analysis}: {Prediction} {Error}, {Constraints}, and {Nonlinearity},'' \emph{NeurIPS}, 2022.

\bibitem{chen_quasi-infinite_1998}
H.~Chen and F.~Allgöwer, ``A {Quasi}-{Infinite} {Horizon} {Nonlinear} {Model} {Predictive} {Control} {Scheme} with {Guaranteed} {Stability},'' \emph{Automatica}, vol.~34, no.~10, pp. 1205--1218, 1998.

\bibitem{mayne_constrained_2000}
D.~Q. Mayne, J.~B. Rawlings, C.~V. Rao, and P.~O.~M. Scokaert, ``Constrained model predictive control: {Stability} and optimality,'' \emph{Automatica}, vol.~26, no.~6, pp. 789--814, 2000.

\bibitem{limon_input--state_2009}
D.~Limon and {others}, ``Input-to-{State} {Stability}: {A} {Unifying} {Framework} for {Robust} {Model} {Predictive} {Control},'' in \emph{Nonlinear {Model} {Predictive} {Control}. {Lecture} {Notes} in {Control} and {Information} {Sciences}}, F.~A. L.~Magni, D. M.~Raimondo, Ed.\hskip 1em plus 0.5em minus 0.4em\relax Springer, Berlin, Heidelberg, 2009, vol. 384.

\bibitem{grune_nmpc_2012}
L.~Grüne, ``{NMPC} without terminal constraints,'' \emph{IFAC Proceedings Volumes}, vol.~45, no.~17, pp. 1--13, 2012, publisher: Elsevier.

\bibitem{lofberg_oops_2012}
J.~Löfberg, ``Oops! {I} cannot do it again: {Testing} for recursive feasibility in {MPC},'' \emph{Automatica}, vol.~48, no.~3, 2012.

\bibitem{mayne_apologia_2013}
D.~Mayne, ``An apologia for stabilising terminal conditions in model predictive control,'' \emph{Internat. J. Control}, vol.~11, 2013.

\bibitem{scokaert_discrete-time_1997}
P.~Scokaert, J.~Rawlings, and E.~Meadows, ``Discrete-time {Stability} with {Perturbations}: {Application} to {Model} {Predictive} {Control},'' \emph{Automatica}, vol.~33, no.~3, pp. 463--470, 1997.

\bibitem{nicolao_robustness_1996}
G.~D. Nicolao, L.~Magni, and R.~Scattolini, ``On the {Robustness} of {Receding}-{Horizon} {Control} with {Terminal} {Constraints},'' \emph{IEEE Trans. Automat. Control}, vol.~41, no.~3, 1996.

\bibitem{magni_stability_1997}
L.~Magni and R.~Sepulchre, ``Stability margins of nonlinear receding-horizon control via inverse optimality,'' \emph{Systems \& Control Letters}, vol.~32, pp. 241--245, 1997.

\bibitem{mcallister_inherent_2022}
R.~D. McAllister and J.~B. Rawlings, ``Inherent {Stochastic} {Robustness} of {Model} {Predictive} {Control} to {Large} and {Infrequent} {Disturbances},'' \emph{IEEE Trans. Automat. Control}, vol.~67, no.~10, 2022.

\bibitem{grimm_examples_2004}
G.~Grimm, M.~J. Messina, S.~E. Tuna, and A.~R. Teel, ``Examples when nonlinear model predictive control is nonrobust,'' \emph{Automatica}, vol.~40, pp. 1729--1738, 2004.

\bibitem{yu_inherent_2014}
S.~Yu, M.~Reble, H.~Chen, and F.~Allgöwer, ``Inherent robustness properties of quasi-infinite horizon nonlinear model predictive control,'' \emph{Automatica}, vol.~50, 2014.

\bibitem{mcallister_inherent_2024}
R.~D. McAllister and J.~B. Rawlings, ``On the {Inherent} {Distributional} {Robustness} of {Stochastic} and {Nominal} {Model} {Predictive} {Control},'' \emph{IEEE Trans. Automat. Control}, vol.~69, no.~2, 2024.

\bibitem{scokaert_suboptimal_1999}
P.~O.~M. Scokaert, D.~Q. Mayne, and J.~Rawlings, ``Suboptimal {Model} {Predictive} {Control} ({Feasibility} {Implies} {Stability}),'' \emph{IEEE Transactions on Automatic Control}, vol.~44, no.~3, pp. 648--654, 1999.

\bibitem{diehl_nominal_2005}
M.~Diehl, R.~Findeisen, F.~Allgöwer, H.~G. Bock, and J.~P. Schlöder, ``Nominal {Stability} of the {Real}-{Time} {Iteration} {Scheme} for {Nonlinear} {Model} {Predictive} {Control},'' \emph{IEE Proc.-Control Theory Appl.}, vol. 152, no.~3, pp. 296--308, 2005, publisher: IEE.

\bibitem{pannocchia_conditions_2011}
G.~Pannocchia, J.~Rawlings, and S.~Wright, ``Conditions under which suboptimal nonlinear {MPC} is inherently robust,'' \emph{System \& Control Letters}, vol.~60, no.~9, pp. 747--755, 2011.

\bibitem{allan_inherent_2017}
D.~A. Allan, C.~N. Bates, M.~J. Risbeck, and J.~B. Rawlings, ``On the inherent robustness of optimal and suboptimal nonlinear {MPC},'' \emph{Systems \& Control Letters}, vol. 106, pp. 68--78, 2017.

\bibitem{piovesan_randomized_2009}
J.~L. Piovesan and H.~G. Tanner, ``Randomized model predictive control for robot navigation,'' in \emph{2009 {IEEE} {International} {Conference} on {Robotics} and {Automation}}.\hskip 1em plus 0.5em minus 0.4em\relax IEEE, 2009, pp. 94--99.

\bibitem{rubinstein_optimization_1997}
R.~Y. Rubinstein, ``Optimization of computer simulation models with rare events,'' \emph{European Journal of Operational Research}, vol.~99, no.~1, pp. 89--112, 1997, publisher: Elsevier.

\bibitem{kobilarov_cross-entropy_2012}
M.~Kobilarov, ``Cross-entropy motion planning,'' \emph{The International Journal of Robotics Research}, vol.~31, no.~7, pp. 855--871, 2012, publisher: SAGE Publications Sage UK: London, England.

\bibitem{chua_deep_2018}
K.~Chua, R.~Calandra, R.~McAllister, and S.~Levine, ``Deep reinforcement learning in a handful of trials using probabilistic dynamics models,'' \emph{Advances in neural information processing systems}, vol.~31, 2018.

\bibitem{kappen_linear_2005}
H.~J. Kappen, ``Linear {Theory} for {Control} of {Nonlinear} {Stochastic} {Systems},'' \emph{Physical Review Letters}, vol.~95, no.~20, p. 200201, Nov. 2005, publisher: American Physical Society.

\bibitem{theodorou_relative_2012}
E.~A. Theodorou and E.~Todorov, ``Relative entropy and free energy dualities: {Connections} to {Path} {Integral} and {KL} control,'' in \emph{{IEEE} 51st {IEEE} {Conference} on {Decision} and {Control} ({CDC})}, Dec. 2012, pp. 1466--1473, iSSN: 0743-1546.

\bibitem{theodorou_nonlinear_2015}
E.~A. Theodorou, ``\BIBforeignlanguage{en}{Nonlinear {Stochastic} {Control} and {Information} {Theoretic} {Dualities}: {Connections}, {Interdependencies} and {Thermodynamic} {Interpretations}},'' \emph{\BIBforeignlanguage{en}{Entropy}}, vol.~17, no.~5, pp. 3352--3375, May 2015, number: 5 Publisher: Multidisciplinary Digital Publishing Institute.

\bibitem{williams_model_2017}
G.~Williams, A.~Aldrich, and E.~A. Theodorou, ``Model {Predictive} {Path} {Integral} {Control}: {From} {Theory} to {Parallel} {Computation},'' \emph{Journal of Guidance, Control, and Dynamics}, vol.~40, no.~2, pp. 344--357, 2017.

\bibitem{goldfain_autorally_2019}
B.~Goldfain, P.~Drews, C.~You, M.~Barulic, O.~Velev, P.~Tsiotras, and J.~M. Rehg, ``{AutoRally}: {An} {Open} {Platform} for {Aggressive} {Autonomous} {Driving},'' \emph{IEEE Control Systems Magazine}, vol.~39, no.~1, pp. 26--55, Feb. 2019, conference Name: IEEE Control Systems Magazine.

\bibitem{bou_torchrl_2023}
A.~Bou, M.~Bettini, S.~Dittert, V.~Kumar, S.~Sodhani, X.~Yang, G.~D. Fabritiis, and V.~Moens, ``{TorchRL}: {A} data-driven decision-making library for {PyTorch},'' Nov. 2023, arXiv:2306.00577 [cs].

\bibitem{vlahov_mppi-generic_2024}
B.~Vlahov, J.~Gibson, M.~Gandhi, and E.~A. Theodorou, ``{MPPI}-{Generic}: {A} {CUDA} {Library} for {Stochastic} {Optimization},'' Sep. 2024.

\bibitem{kantas_sequential_2009}
N.~Kantas, J.~Maciejowski, and A.~Lecchini-Visintini, ``Sequential {Monte} {Carlo} for model predictive control,'' \emph{Nonlinear model predictive control: Towards new challenging applications}, pp. 263--273, 2009, publisher: Springer.

\bibitem{rao_survey_2009}
A.~V. Rao, ``A survey of numerical methods for optimal control,'' \emph{Advances in the astronautical Sciences}, vol. 135, no.~1, pp. 497--528, 2009, publisher: Univelt, Inc.

\bibitem{binder_introduction_2001}
\BIBentryALTinterwordspacing
T.~Binder, L.~Blank, H.~G. Bock, R.~Bulirsch, W.~Dahmen, M.~Diehl, T.~Kronseder, W.~Marquardt, J.~P. Schlöder, and O.~v. Stryk, ``Introduction to {Model} {Based} {Optimization} of {Chemical} {Processes} on {Moving} {Horizons},'' in \emph{Online {Optimization} of {Large} {Scale} {Systems}: {State} of the {Art}}, M.~Grötschel, S.~O. Krumke, and J.~Rambau, Eds.\hskip 1em plus 0.5em minus 0.4em\relax Springer, 2001, pp. 295--340. [Online]. Available: \url{http://www.kuleuven.be/optec/OLD/research/subgroups/fastMPC/publications/Binder2001.php}
\BIBentrySTDinterwordspacing

\bibitem{bock_multiple_1984}
H.~G. Bock and K.~J. Plitt, ``A {Multiple} {Shooting} {Algorithm} for {Direct} {Solution} of {Optimal} {Control} {Problems},'' in \emph{Proceedings of the {IFAC} {World} {Congress}}.\hskip 1em plus 0.5em minus 0.4em\relax Pergamon Press, 1984, pp. 242--247.

\bibitem{patrinos_accelerated_2013}
P.~Patrinos and A.~Bemporad, ``An accelerated dual gradient-projection algorithm for embedded linear model predictive control,'' \emph{IEEE Transactions on Automatic Control}, vol.~59, no.~1, pp. 18--33, 2013, publisher: IEEE.

\bibitem{jerez_embedded_2014}
J.~L. Jerez, P.~J. Goulart, S.~Richter, G.~A. Constantinides, E.~C. Kerrigan, and M.~Morari, ``Embedded online optimization for model predictive control at megahertz rates,'' \emph{IEEE Transactions on Automatic Control}, vol.~59, no.~12, pp. 3238--3251, 2014, publisher: IEEE.

\bibitem{kapernick_gradient_2014}
B.~Käpernick and K.~Graichen, ``The gradient based nonlinear model predictive control software {GRAMPC},'' in \emph{2014 {European} {Control} {Conference} ({ECC})}.\hskip 1em plus 0.5em minus 0.4em\relax IEEE, 2014, pp. 1170--1175.

\bibitem{englert_software_2019}
T.~Englert, A.~Völz, F.~Mesmer, S.~Rhein, and K.~Graichen, ``\BIBforeignlanguage{en}{A software framework for embedded nonlinear model predictive control using a gradient-based augmented {Lagrangian} approach ({GRAMPC})},'' \emph{\BIBforeignlanguage{en}{Optimization and Engineering}}, vol.~20, no.~3, pp. 769--809, Sep. 2019.

\bibitem{kang_decomposition_2015}
J.~Kang, A.~U. Raghunathan, and S.~Di~Cairano, ``Decomposition via {ADMM} for scenario-based model predictive control,'' in \emph{American {Control} {Conference} ({ACC})}.\hskip 1em plus 0.5em minus 0.4em\relax IEEE, 2015, pp. 1246--1251.

\bibitem{magni_efficient_2009}
M.~Diehl, H.~J. Ferreau, and N.~Haverbeke, ``Efficient {Numerical} {Methods} for {Nonlinear} {MPC} and {Moving} {Horizon} {Estimation},'' in \emph{Nonlinear model predictive control}, ser. Lecture {Notes} in {Control} and {Information} {Sciences}, L.~Magni, M.~D. Raimondo, and F.~Allgöwer, Eds.\hskip 1em plus 0.5em minus 0.4em\relax Springer, 2009, vol. 384, pp. 391--417.

\bibitem{rao_application_1998}
C.~V. Rao, S.~J. Wright, and J.~B. Rawlings, ``Application of {Interior}-{Point} {Methods} to {Model} {Predictive} {Control},'' \emph{Journal of Optimization Theory and Applications}, vol.~99, pp. 723--757, 1998.

\bibitem{frey_active-set_2020}
J.~Frey, S.~D. Cairano, and R.~Quirynen, ``Active-{Set} based {Inexact} {Interior} {Point} {QP} {Solver} for {Model} {Predictive} {Control},'' in \emph{Proceedings of the {IFAC} {World} {Congress}}, 2020.

\bibitem{frison_hpipm_2020}
G.~Frison and M.~Diehl, ``{HPIPM}: a high-performance quadratic programming framework for model predictive control⁎⁎{This} research was supported by the {German} {Federal} {Ministry} for {Economic} {Affairs} and {Energy} ({BMWi}) via eco4wind ({0324125B}) and {DyConPV} ({0324166B}), and by {DFG} via {Research} {Unit} {FOR} 2401.'' \emph{IFAC-PapersOnLine}, vol.~53, no.~2, pp. 6563--6569, Jan. 2020.

\bibitem{vanroye_fatrop_2023}
L.~Vanroye, A.~Sathya, J.~De~Schutter, and W.~Decré, ``Fatrop: {A} fast constrained optimal control problem solver for robot trajectory optimization and control,'' in \emph{2023 {IEEE}/{RSJ} {International} {Conference} on {Intelligent} {Robots} and {Systems} ({IROS})}.\hskip 1em plus 0.5em minus 0.4em\relax IEEE, 2023, pp. 10\,036--10\,043.

\bibitem{kouzoupis_recent_2018}
D.~Kouzoupis, G.~Frison, A.~Zanelli, and M.~Diehl, ``Recent advances in quadratic programming algorithms for nonlinear model predictive control,'' \emph{Vietnam Journal of Mathematics}, vol.~46, no.~4, pp. 863--882, 2018.

\bibitem{houska_acado_2011}
B.~Houska, H.~J. Ferreau, and M.~Diehl, ``\BIBforeignlanguage{en}{{ACADO} toolkit—{An} open-source framework for automatic control and dynamic optimization},'' \emph{\BIBforeignlanguage{en}{Optimal Control Applications and Methods}}, vol.~32, no.~3, pp. 298--312, 2011, \_eprint: https://onlinelibrary.wiley.com/doi/pdf/10.1002/oca.939.

\bibitem{verschueren_acadosmodular_2022}
R.~Verschueren, G.~Frison, D.~Kouzoupis, J.~Frey, N.~v. Duijkeren, A.~Zanelli, B.~Novoselnik, T.~Albin, R.~Quirynen, and M.~Diehl, ``\BIBforeignlanguage{en}{acados—a modular open-source framework for fast embedded optimal control},'' \emph{\BIBforeignlanguage{en}{Mathematical Programming Computation}}, vol.~14, no.~1, pp. 147--183, Mar. 2022.

\bibitem{diehl_real-time_2002}
M.~Diehl, \emph{Real-{Time} {Optimization} for {Large} {Scale} {Nonlinear} {Processes}}, ser. Fortschritt-{Berichte} {VDI} {Reihe} 8, {Meß}-, {Steuerungs}- und {Regelungstechnik}.\hskip 1em plus 0.5em minus 0.4em\relax Düsseldorf: VDI Verlag, 2002, vol. 920.

\bibitem{bock_constrained_2007}
H.~G. Bock, M.~Diehl, E.~A. Kostina, and J.~P. Schlöder, ``Constrained {Optimal} {Feedback} {Control} of {Systems} {Governed} by {Large} {Differential} {Algebraic} {Equations},'' in \emph{Real-{Time} and {Online} {PDE}-{Constrained} {Optimization}}.\hskip 1em plus 0.5em minus 0.4em\relax SIAM, 2007, pp. 3--22.

\bibitem{gros_linear_2016}
S.~Gros, M.~Zanon, R.~Quirynen, A.~Bemporad, and M.~Diehl, ``From {Linear} to {Nonlinear} {MPC}: bridging the gap via the {Real}-{Time} {Iteration},'' \emph{International Journal of Control}, 2016.

\bibitem{gauss_theoria_1809}
C.~F. Gauss, \emph{Theoria motus corporum coelestium in sectionibus conicis solem ambientium}.\hskip 1em plus 0.5em minus 0.4em\relax Perthes et Besser, 1809, vol.~7.

\bibitem{bock_recent_1983}
H.~G. Bock, ``Recent {Advances} in {Parameter} {Identification} {Techniques} for {ODE},'' in \emph{Numerical {Treatment} of {Inverse} {Problems} in {Differential} and {Integral} {Equations}}.\hskip 1em plus 0.5em minus 0.4em\relax Birk{\textbackslash}-häu{\textbackslash}-ser, 1983, pp. 95--121.

\bibitem{messerer_survey_2021}
F.~Messerer, K.~Baumgärtner, and M.~Diehl, ``Survey of {Sequential} {Convex} {Programming} and {Generalized} {Gauss}-{Newton} {Methods},'' \emph{ESAIM: Proceedings and Surveys}, vol.~71, pp. 64--88, 2021.

\bibitem{mayne_second-order_1966}
D.~Mayne, ``A {Second}-order {Gradient} {Method} for {Determining} {Optimal} {Trajectories} of {Non}-linear {Discrete}-time {Systems},'' \emph{Int. J. Control}, vol.~3, no.~1, pp. 85--96, 1966.

\bibitem{tassa_control-limited_2014}
Y.~Tassa, N.~Mansard, and E.~Todorov, ``Control-{Limited} {Differential} {Dynamic} {Programming},'' in \emph{{IEEE} {International} {Conference} on {Robotics} and {Automation}}, 2014.

\bibitem{li_iterative_2004}
W.~Li and E.~Todorov, ``Iterative {Linear} {Quadratic} {Regulator} {Design} for {Nonlinear} {Biological} {Movement} {Systems},'' in \emph{Proceedings of the 1st {International} {Conference} on {Informatics} in {Control}, {Automation} and {Robotics}}, 2004.

\bibitem{todorov_generalized_2005}
E.~Todorov and W.~Li, ``A generalized iterative {LQG} method for locally-optimal feedback control of constrained nonlinear stochastic systems,'' in \emph{Proceedings of the {American} {Control} {Conference} ({ACC})}, 2005.

\bibitem{marti-saumell_squash-box_2020}
J.~Marti-Saumell, J.~Solà, C.~Mastalli, and A.~Santamaria-Navarro, ``Squash-box feasibility driven differential dynamic programming,'' in \emph{2020 {IEEE}/{RSJ} {International} {Conference} on {Intelligent} {Robots} and {Systems} ({IROS})}.\hskip 1em plus 0.5em minus 0.4em\relax IEEE, 2020, pp. 7637--7644.

\bibitem{sideris_efficient_2005}
A.~Sideris and J.~Bodrow, ``An efficient sequential linear quadratic algorithm for solving unconstrained nonlinear optimal control problems,'' \emph{IEEE Transactions on Automatic Control}, vol.~50, no.~12, pp. 2043--2047, 2005.

\bibitem{baumgartner_unified_2023}
K.~Baumgärtner, F.~Messerer, and M.~Diehl, ``A {Unified} {Local} {Convergence} {Analysis} of {Differential} {Dynamic} {Programming}, {Direct} {Single} {Shooting}, and {Direct} {Multiple} {Shooting},'' in \emph{Proceedings of the {European} {Control} {Conference} ({ECC})}, 2023.

\bibitem{bemporad_explicit_2000}
A.~Bemporad, M.~Morari, V.~Dua, and E.~N. Pistikopoulos, ``The explicit solution of model predictive control via multiparametric quadratic programming,'' in \emph{Proceedings of the 2000 {American} {Control} {Conference}. {ACC} ({IEEE} {Cat}. {No}. {00CH36334})}, vol.~2.\hskip 1em plus 0.5em minus 0.4em\relax IEEE, 2000, pp. 872--876.

\bibitem{grancharova_explicit_2012}
A.~Grancharova and T.~A. Johansen, \emph{\BIBforeignlanguage{en}{Explicit {Nonlinear} {Model} {Predictive} {Control}: {Theory} and {Applications}}}, ser. Lecture {Notes} in {Control} and {Information} {Sciences}.\hskip 1em plus 0.5em minus 0.4em\relax Berlin, Heidelberg: Springer, 2012, vol. 429.

\bibitem{steinbach_tree-sparse_2002}
M.~C. Steinbach, ``Tree-{Sparse} {Convex} {Programs},'' \emph{Mathematical Methods of Operations Research}, vol.~56, no.~3, pp. 347--376, 2002.

\bibitem{klintberg_improved_2016}
E.~Klintberg, J.~Dahl, J.~Fredriksson, and S.~Gros, ``An improved dual {Newton} strategy for scenario-tree {MPC},'' in \emph{{CDC}}, 2016, pp. 3675--3681.

\bibitem{frison_high-performance_2017}
G.~Frison, D.~Kouzoupis, M.~Diehl, and J.~B. Jørgensen, ``A high-performance {Riccati} based solver for tree-structured quadratic programs,'' in \emph{{IFAC}}, vol.~50, 2017, pp. 14\,399--14\,405, issue: 1.

\bibitem{kouzoupis_structure-exploiting_2019}
D.~Kouzoupis, ``Structure-exploiting numerical methods for tree-sparse optimal control problems,'' {PhD} {Thesis}, University of Freiburg, 2019.

\bibitem{patil_scaling_2015}
S.~Patil, G.~Kahn, M.~Laskey, J.~Schulman, K.~Goldberg, and P.~Abbeel, ``Scaling up {Gaussian} {Belief} {Space} {Planning} {Through} {Covariance}-{Free} {Trajectory} {Optimization} and {Automatic} {Differentiation},'' in \emph{Algorithmic {Foundations} of {Robotics} {XI}: {Selected} {Contributions} of the {Eleventh} {International} {Workshop} on the {Algorithmic} {Foundations} of {Robotics}}, H.~L. Akin, N.~M. Amato, V.~Isler, and A.~F. van~der Stappen, Eds.\hskip 1em plus 0.5em minus 0.4em\relax Cham: Springer International Publishing, 2015, pp. 515--533.

\bibitem{feng_inexact_2020}
X.~Feng, S.~Di~Cairano, and R.~Quirynen, ``Inexact adjoint-based {SQP} algorithm for real-time stochastic nonlinear {MPC},'' \emph{IFAC-PapersOnLine}, vol.~53, no.~2, pp. 6529--6535, 2020, publisher: Elsevier.

\bibitem{zanelli_zero-order_2021}
A.~Zanelli, J.~Frey, F.~Messerer, and M.~Diehl, ``Zero-{Order} {Robust} {Nonlinear} {Model} {Predictive} {Control} with {Ellipsoidal} {Uncertainty} {Sets},'' \emph{Proceedings of the IFAC Conference on Nonlinear Model Predictive Control (NMPC)}, 2021.

\bibitem{frey_efficient_2024}
J.~Frey, Y.~Gao, F.~Messerer, A.~Lahr, M.~N. Zeilinger, and M.~Diehl, ``Efficient {Zero}-{Order} {Robust} {Optimization} for {Real}-{Time} {Model} {Predictive} {Control} with acados,'' in \emph{Proceedings of the {European} {Control} {Conference} ({ECC})}, 2024.

\bibitem{goulart_efficient_2008}
P.~J. Goulart, E.~C. Kerrigan, and D.~Ralph, ``Efficient robust optimization for robust control with constraints,'' \emph{Mathematical Programming}, vol. 114, no.~1, pp. 115--147, 2008, publisher: Springer.

\bibitem{leeman_fast_2024}
A.~P. Leeman, J.~Köhler, F.~Messerer, A.~Lahr, M.~Diehl, and M.~N. Zeilinger, ``Fast {System} {Level} {Synthesis}: {Robust} {Model} {Predictive} {Control} using {Riccati} {Recursions},'' in \emph{Proceedings of the {IFAC} {Conference} on {Nonlinear} {Model} {Predictive} {Control} ({NMPC})}, 2024.

\bibitem{yoon_sampling_2022}
H.-J. Yoon, C.~Tao, H.~Kim, N.~Hovakimyan, and P.~Voulgaris, ``Sampling {Complexity} of {Path} {Integral} {Methods} for {Trajectory} {Optimization},'' in \emph{2022 {American} {Control} {Conference} ({ACC})}, 2022, pp. 3482--3487.

\bibitem{morari_model_1999}
M.~Morari and J.~H. Lee, ``Model predictive control: past, present and future,'' \emph{Computers \& Chemical Engineering}, vol.~23, no.~4, pp. 667--682, 1999.

\bibitem{angeli_average_2012}
D.~Angeli, R.~Amrit, and J.~B. Rawlings, ``On {Average} {Performance} and {Stability} of {Economic} {Model} {Predictive} {Control},'' \emph{IEEE Transactions on Automatic Control}, vol.~57, no.~7, pp. 1615--1626, 2012.

\bibitem{bakker_reinforcement_2001}
B.~Bakker, ``Reinforcement {Learning} with {Long} {Short}-{Term} {Memory},'' in \emph{Advances in {Neural} {Information} {Processing} {Systems}}, vol.~14.\hskip 1em plus 0.5em minus 0.4em\relax MIT Press, 2001.

\bibitem{coulson_regularized_2019}
J.~Coulson, J.~Lygeros, and F.~Dörfler, ``Regularized and {Distributionally} {Robust} {Data}-{Enabled} {Predictive} {Control},'' in \emph{{IEEE} 58th {Conference} on {Decision} and {Control} ({CDC})}, Dec. 2019, pp. 2696--2701, iSSN: 2576-2370.

\bibitem{ceusters_model-predictive_2021}
G.~Ceusters, R.~C. Rodríguez, A.~B. García, R.~Franke, G.~Deconinck, L.~Helsen, A.~Nowé, M.~Messagie, and L.~R. Camargo, ``Model-predictive control and reinforcement learning in multi-energy system case studies,'' \emph{Applied Energy}, vol. 303, p. 117634, Dec. 2021.

\bibitem{mamedov_safe_2024}
S.~Mamedov, R.~Reiter, S.~M.~B. Azad, R.~Viljoen, J.~Boedecker, M.~Diehl, and J.~Swevers, ``Safe {Imitation} {Learning} of {Nonlinear} {Model} {Predictive} {Control} for {Flexible} {Robots},'' in \emph{2024 {IEEE}/{RSJ} {International} {Conference} on {Intelligent} {Robots} and {Systems} ({IROS})}, Oct. 2024, pp. 3613--3619, iSSN: 2153-0866.

\bibitem{brandi_comparison_2022}
S.~Brandi, M.~Fiorentini, and A.~Capozzoli, ``Comparison of online and offline deep reinforcement learning with model predictive control for thermal energy management,'' \emph{Automation in Construction}, vol. 135, p. 104128, Mar. 2022.

\bibitem{kaufmann_champion-level_2023}
E.~Kaufmann, L.~Bauersfeld, A.~Loquercio, M.~Müller, V.~Koltun, and D.~Scaramuzza, ``\BIBforeignlanguage{en}{Champion-level drone racing using deep reinforcement learning},'' \emph{\BIBforeignlanguage{en}{Nature}}, vol. 620, no. 7976, pp. 982--987, Aug. 2023, number: 7976 Publisher: Nature Publishing Group.

\bibitem{lambrechts_recurrent_2022}
G.~Lambrechts, A.~Bolland, and D.~Ernst, ``\BIBforeignlanguage{en}{Recurrent networks, hidden states and beliefs in partially observable environments},'' \emph{\BIBforeignlanguage{en}{Transactions on Machine Learning Research}}, May 2022.

\bibitem{simpson_efficient_2023}
L.~Simpson, A.~Ghezzi, J.~Asprion, and M.~Diehl, ``An {Efficient} {Method} for the {Joint} {Estimation} of {System} {Parameters} and {Noise} {Covariances} for {Linear} {Time}-{Variant} {Systems},'' in \emph{62nd {IEEE} {Conference} on {Decision} and {Control} ({CDC})}, 2023, pp. 4524--4529.

\bibitem{hewing_cautious_2020}
L.~Hewing, J.~Kabzan, and M.~N. Zeilinger, ``Cautious {Model} {Predictive} {Control} {Using} {Gaussian} {Process} {Regression},'' \emph{IEEE Transactions on Control Systems Technology}, vol.~28, no.~6, pp. 2736--2743, Nov. 2020, conference Name: IEEE Transactions on Control Systems Technology.

\bibitem{singh_reinforcement_2022}
B.~Singh, R.~Kumar, and V.~Singh, ``Reinforcement learning in robotic applications: a comprehensive survey,'' \emph{Artificial Intelligence Review}, vol.~55, Feb. 2022.

\bibitem{dulac-arnold_challenges_2021}
G.~Dulac-Arnold, N.~Levine, D.~J. Mankowitz, J.~Li, C.~Paduraru, S.~Gowal, and T.~Hester, ``\BIBforeignlanguage{en}{Challenges of real-world reinforcement learning: definitions, benchmarks and analysis},'' \emph{\BIBforeignlanguage{en}{Machine Learning}}, vol. 110, no.~9, pp. 2419--2468, Sep. 2021.

\bibitem{smith_demonstrating_2023}
L.~Smith, I.~Kostrikov, and S.~Levine, ``Demonstrating a walk in the park: {Learning} to walk in 20 minutes with model-free reinforcement learning,'' \emph{Robotics: Science and Systems (RSS) Demo}, vol.~2, no.~3, p.~4, 2023.

\bibitem{tallec_making_2019}
C.~Tallec, L.~Blier, and Y.~Ollivier, ``Making deep q-learning methods robust to time discretization,'' in \emph{International {Conference} on {Machine} {Learning}}.\hskip 1em plus 0.5em minus 0.4em\relax PMLR, 2019, pp. 6096--6104.

\bibitem{ljung_system_1999}
L.~Ljung, \emph{\BIBforeignlanguage{en}{System {Identification}: {Theory} for the {User}}}.\hskip 1em plus 0.5em minus 0.4em\relax Prentice Hall PTR, 1999, google-Books-ID: nHFoQgAACAAJ.

\bibitem{frank_reinforcement_2008}
J.~Frank, S.~Mannor, and D.~Precup, ``Reinforcement learning in the presence of rare events,'' in \emph{Proceedings of the 25th international conference on {Machine} learning}, 2008, pp. 336--343.

\bibitem{morimoto_robust_2000}
J.~Morimoto and K.~Doya, ``Robust {Reinforcement} {Learning},'' in \emph{Advances in {Neural} {Information} {Processing} {Systems}}, vol.~13.\hskip 1em plus 0.5em minus 0.4em\relax MIT Press, 2000.

\bibitem{moos_robust_2022}
J.~Moos, K.~Hansel, H.~Abdulsamad, S.~Stark, D.~Clever, and J.~Peters, ``\BIBforeignlanguage{en}{Robust {Reinforcement} {Learning}: {A} {Review} of {Foundations} and {Recent} {Advances}},'' \emph{\BIBforeignlanguage{en}{Machine Learning and Knowledge Extraction}}, vol.~4, no.~1, pp. 276--315, Mar. 2022, number: 1 Publisher: Multidisciplinary Digital Publishing Institute.

\bibitem{finn_model-agnostic_2017}
C.~Finn, P.~Abbeel, and S.~Levine, ``Model-agnostic meta-learning for fast adaptation of deep networks,'' in \emph{International conference on machine learning}.\hskip 1em plus 0.5em minus 0.4em\relax PMLR, 2017, pp. 1126--1135.

\bibitem{brown_language_2020}
T.~Brown, B.~Mann, N.~Ryder, M.~Subbiah, J.~D. Kaplan, P.~Dhariwal, A.~Neelakantan, P.~Shyam, G.~Sastry, and A.~Askell, ``Language models are few-shot learners,'' \emph{Advances in neural information processing systems}, vol.~33, pp. 1877--1901, 2020.

\bibitem{van_parys_distributionally_2016}
B.~P.~G. Van~Parys, D.~Kuhn, P.~J. Goulart, and M.~Morari, ``Distributionally {Robust} {Control} of {Constrained} {Stochastic} {Systems},'' \emph{IEEE Transactions on Automatic Control}, vol.~61, no.~2, pp. 430--442, 2016.

\bibitem{gros_economic_2022}
S.~Gros and M.~Zanon, ``Economic {MPC} of {Markov} {Decision} {Processes}: {Dissipativity} in undiscounted infinite-horizon optimal control,'' \emph{Automatica}, vol. 146, p. 110602, Dec. 2022.

\bibitem{altman_constrained_1999}
E.~Altman, \emph{Constrained {Markov} {Decision} {Processes}}, {1st}, Ed.\hskip 1em plus 0.5em minus 0.4em\relax New York: Routledge, 1999.

\bibitem{kalweit_deep_2020}
G.~Kalweit, M.~Huegle, M.~Werling, and J.~Boedecker, ``Deep {Inverse} {Q}-learning with {Constraints},'' in \emph{Advances in {Neural} {Information} {Processing} {Systems}}, vol.~33.\hskip 1em plus 0.5em minus 0.4em\relax Curran Associates, Inc., 2020, pp. 14\,291--14\,302.

\bibitem{ray_benchmarking_2019}
\BIBentryALTinterwordspacing
A.~Ray, J.~Achiam, and D.~Amodei, ``Benchmarking {Safe} {Exploration} in {Deep} {Reinforcement} {Learning},'' 2019. [Online]. Available: \url{https://cdn.openai.com/safexp-short.pdf}
\BIBentrySTDinterwordspacing

\bibitem{achiam_constrained_2017}
J.~Achiam, D.~Held, A.~Tamar, and P.~Abbeel, ``\BIBforeignlanguage{en}{Constrained {Policy} {Optimization}},'' in \emph{\BIBforeignlanguage{en}{Proceedings of the 34th {International} {Conference} on {Machine} {Learning}}}.\hskip 1em plus 0.5em minus 0.4em\relax PMLR, Jul. 2017, pp. 22--31, iSSN: 2640-3498.

\bibitem{cheng_end--end_2019}
R.~Cheng, G.~Orosz, R.~M. Murray, and J.~W. Burdick, ``End-to-end safe reinforcement learning through barrier functions for safety-critical continuous control tasks,'' in \emph{Proceedings of the {AAAI} conference on artificial intelligence}, vol.~33, 2019, pp. 3387--3395, issue: 01.

\bibitem{wabersich_predictive_2021}
K.~P. Wabersich and M.~N. Zeilinger, ``A predictive safety filter for learning-based control of constrained nonlinear dynamical systems,'' \emph{Automatica}, vol. 129, p. 109597, Jul. 2021.

\bibitem{calvo-fullana_state_2024}
M.~Calvo-Fullana, S.~Paternain, L.~F.~O. Chamon, and A.~Ribeiro, ``State {Augmented} {Constrained} {Reinforcement} {Learning}: {Overcoming} the {Limitations} of {Learning} {With} {Rewards},'' \emph{IEEE Transactions on Automatic Control}, vol.~69, no.~7, pp. 4275--4290, Jul. 2024, conference Name: IEEE Transactions on Automatic Control.

\bibitem{sootla_saute_2022}
A.~Sootla, A.~I. Cowen-Rivers, T.~Jafferjee, Z.~Wang, D.~H. Mguni, J.~Wang, and H.~Ammar, ``\BIBforeignlanguage{en}{Saute {RL}: {Almost} {Surely} {Safe} {Reinforcement} {Learning} {Using} {State} {Augmentation}},'' in \emph{\BIBforeignlanguage{en}{Proceedings of the 39th {International} {Conference} on {Machine} {Learning}}}.\hskip 1em plus 0.5em minus 0.4em\relax PMLR, Jun. 2022, pp. 20\,423--20\,443, iSSN: 2640-3498.

\bibitem{zhang_constrained_2024}
B.~Zhang, Y.~Zhang, L.~Frison, T.~Brox, and J.~Bödecker, ``Constrained {Reinforcement} {Learning} with {Smoothed} {Log} {Barrier} {Function},'' Mar. 2024, arXiv:2403.14508 [cs].

\bibitem{richter_computational_2011}
S.~Richter, C.~N. Jones, and M.~Morari, ``Computational complexity certification for real-time {MPC} with input constraints based on the fast gradient method,'' \emph{IEEE Transactions on Automatic Control}, vol.~57, no.~6, pp. 1391--1403, 2011, publisher: IEEE.

\bibitem{arnstrom_real-time_2023}
D.~Arnström, ``\BIBforeignlanguage{eng}{Real-{Time} {Certified} {MPC} : {Reliable} {Active}-{Set} {QP} {Solvers}},'' 2023, publisher: Linköping University Electronic Press.

\bibitem{zhang_importance_2021}
B.~Zhang, R.~Rajan, L.~Pineda, N.~Lambert, A.~Biedenkapp, K.~Chua, F.~Hutter, and R.~Calandra, ``On the importance of hyperparameter optimization for model-based reinforcement learning,'' in \emph{International {Conference} on {Artificial} {Intelligence} and {Statistics}}.\hskip 1em plus 0.5em minus 0.4em\relax PMLR, 2021, pp. 4015--4023.

\bibitem{nasvytis_rethinking_2024}
L.~Nasvytis, K.~Sandbrink, J.~Foerster, T.~Franzmeyer, and C.~Schroeder~de Witt, ``Rethinking {Out}-of-{Distribution} {Detection} for {Reinforcement} {Learning}: {Advancing} {Methods} for {Evaluation} and {Detection},'' in \emph{Proceedings of the 23rd {International} {Conference} on {Autonomous} {Agents} and {Multiagent} {Systems}}, 2024, pp. 1445--1453.

\bibitem{ross_reduction_2011}
S.~Ross, G.~Gordon, and D.~Bagnell, ``A {Reduction} of {Imitation} {Learning} and {Structured} {Prediction} to {No}-{Regret} {Online} {Learning},'' in \emph{Proceedings of the {Fourteenth} {International} {Conference} on {Artificial} {Intelligence} and {Statistics}}, ser. Proceedings of {Machine} {Learning} {Research}, G.~Gordon, D.~Dunson, and M.~Dudík, Eds., vol.~15.\hskip 1em plus 0.5em minus 0.4em\relax Fort Lauderdale, FL, USA: PMLR, Apr. 2011, pp. 627--635.

\bibitem{song_rapidly_2020}
X.~Song, Y.~Yang, K.~Choromanski, K.~Caluwaerts, W.~Gao, C.~Finn, and J.~Tan, ``Rapidly adaptable legged robots via evolutionary meta-learning,'' in \emph{2020 {IEEE}/{RSJ} {International} {Conference} on {Intelligent} {Robots} and {Systems} ({IROS})}.\hskip 1em plus 0.5em minus 0.4em\relax IEEE, 2020, pp. 3769--3776.

\bibitem{gurobi_optimization_llc_gurobi_2023}
\BIBentryALTinterwordspacing
{Gurobi Optimization, LLC}, ``Gurobi {Optimizer} {Reference} {Manual},'' 2023. [Online]. Available: \url{https://www.gurobi.com}
\BIBentrySTDinterwordspacing

\bibitem{oh_quantitative_2024}
T.~H. Oh, ``Quantitative comparison of reinforcement learning and data-driven model predictive control for chemical and biological processes,'' \emph{Computers \& Chemical Engineering}, vol. 181, p. 108558, Feb. 2024.

\bibitem{ernst_reinforcement_2009}
D.~Ernst, M.~Glavic, F.~Capitanescu, and L.~Wehenkel, ``Reinforcement {Learning} {Versus} {Model} {Predictive} {Control}: {A} {Comparison} on a {Power} {System} {Problem},'' \emph{IEEE Transactions on Systems, Man, and Cybernetics, Part B (Cybernetics)}, vol.~39, no.~2, pp. 517--529, Apr. 2009, conference Name: IEEE Transactions on Systems, Man, and Cybernetics, Part B (Cybernetics).

\bibitem{wang_comparison_2023}
D.~Wang, W.~Zheng, Z.~Wang, Y.~Wang, X.~Pang, and W.~Wang, ``Comparison of reinforcement learning and model predictive control for building energy system optimization,'' \emph{Applied Thermal Engineering}, vol. 228, p. 120430, Jun. 2023.

\bibitem{hasankhani_comparison_2021}
A.~Hasankhani, Y.~Tang, J.~VanZwieten, and C.~Sultan, ``Comparison of {Deep} {Reinforcement} {Learning} and {Model} {Predictive} {Control} for {Real}-{Time} {Depth} {Optimization} of a {Lifting} {Surface} {Controlled} {Ocean} {Current} {Turbine},'' in \emph{{IEEE} {Conference} on {Control} {Technology} and {Applications} ({CCTA})}, Aug. 2021, pp. 301--308, iSSN: 2768-0770.

\bibitem{ernst_iteratively_2003}
D.~Ernst, P.~Geurts, and L.~Wehenkel, ``\BIBforeignlanguage{en}{Iteratively {Extending} {Time} {Horizon} {Reinforcement} {Learning}},'' in \emph{\BIBforeignlanguage{en}{Machine {Learning}: {ECML} 2003}}, ser. Lecture {Notes} in {Computer} {Science}, N.~Lavrač, D.~Gamberger, H.~Blockeel, and L.~Todorovski, Eds.\hskip 1em plus 0.5em minus 0.4em\relax Berlin, Heidelberg: Springer, 2003, pp. 96--107.

\bibitem{cplex_v12_2009}
I.~I. Cplex, ``V12. 1: {User}’s {Manual} for {CPLEX},'' \emph{International Business Machines Corporation}, vol.~46, no.~53, p. 157, 2009.

\bibitem{lin_comparison_2021}
Y.~Lin, J.~McPhee, and N.~L. Azad, ``Comparison of {Deep} {Reinforcement} {Learning} and {Model} {Predictive} {Control} for {Adaptive} {Cruise} {Control},'' \emph{IEEE Transactions on Intelligent Vehicles}, vol.~6, no.~2, pp. 221--231, Jun. 2021, conference Name: IEEE Transactions on Intelligent Vehicles.

\bibitem{wachter_implementation_2006}
A.~Wächter and L.~T. Biegler, ``\BIBforeignlanguage{en}{On the implementation of an interior-point filter line-search algorithm for large-scale nonlinear programming},'' \emph{\BIBforeignlanguage{en}{Mathematical Programming}}, vol. 106, no.~1, pp. 25--57, Mar. 2006.

\bibitem{di_natale_lessons_2022}
L.~Di~Natale, Y.~Lian, E.~T. Maddalena, J.~Shi, and C.~N. Jones, ``Lessons {Learned} from {Data}-{Driven} {Building} {Control} {Experiments}: {Contrasting} {Gaussian} {Process}-based {MPC}, {Bilevel} {DeePC}, and {Deep} {Reinforcement} {Learning},'' in \emph{{IEEE} 61st {Conference} on {Decision} and {Control} ({CDC})}, Dec. 2022, pp. 1111--1117, iSSN: 2576-2370.

\bibitem{coulson_data-enabled_2019}
J.~Coulson, J.~Lygeros, and F.~Dörfler, ``Data-{Enabled} {Predictive} {Control}: {In} the {Shallows} of the {DeePC},'' in \emph{18th {European} {Control} {Conference} ({ECC})}, Jun. 2019, pp. 307--312.

\bibitem{dobriborsci_experimental_2022}
D.~Dobriborsci, P.~Osinenko, and W.~Aumer, ``An experimental study of two predictive reinforcement learning methods and comparison with model-predictive control,'' \emph{IFAC-PapersOnLine}, vol.~55, no.~10, pp. 1545--1550, Jan. 2022.

\bibitem{noauthor_scipy_nodate}
\BIBentryALTinterwordspacing
``{SciPy} v1.15.0 {Manual}.'' [Online]. Available: \url{https://docs.scipy.org/doc/scipy/reference/optimize.minimize-slsqp.html}
\BIBentrySTDinterwordspacing

\bibitem{byravan_evaluating_2022}
A.~Byravan, L.~Hasenclever, P.~Trochim, M.~Mirza, A.~D. Ialongo, J.~T. Tassa, Y.~andSpringenberg, A.~Abdolmaleki, N.~Heess, J.~Merel, and M.~Riedmiller, ``Evaluating {Model}-{Based} {Planning} and {Planner} {Amortization} for {Continuous} {Control},'' in \emph{10th {International} {Conference} on {Learning} {Representations}, {ICLR}}, Apr. 2022.

\bibitem{piche_probabilistic_2018}
A.~Piche, V.~Thomas, C.~Ibrahim, Y.~Bengio, and C.~Pal, ``\BIBforeignlanguage{en}{Probabilistic {Planning} with {Sequential} {Monte} {Carlo} methods},'' Sep. 2018.

\bibitem{abdolmaleki_maximum_2018}
A.~Abdolmaleki, J.~T. Springenberg, Y.~Tassa, R.~Munos, N.~Heess, and M.~Riedmiller, ``Maximum a {Posteriori} {Policy} {Optimisation},'' in \emph{International {Conference} on {Learning} {Representations}}, 2018.

\bibitem{song_reaching_2023}
Y.~Song, A.~Romero, M.~Müller, V.~Koltun, and D.~Scaramuzza, ``Reaching the limit in autonomous racing: {Optimal} control versus reinforcement learning,'' \emph{Science Robotics}, vol.~8, no.~82, p. eadg1462, Sep. 2023, publisher: American Association for the Advancement of Science.

\bibitem{shi_model-based_2023}
J.~Shi, K.~Li, C.~Piao, J.~Gao, and L.~Chen, ``\BIBforeignlanguage{en}{Model-{Based} {Predictive} {Control} and {Reinforcement} {Learning} for {Planning} {Vehicle}-{Parking} {Trajectories} for {Vertical} {Parking} {Spaces}},'' \emph{\BIBforeignlanguage{en}{Sensors}}, vol.~23, no.~16, p. 7124, Jan. 2023, number: 16 Publisher: Multidisciplinary Digital Publishing Institute.

\bibitem{imran_comparison_2023}
M.~Imran, R.~Izzo, A.~Tortorelli, and F.~Liberati, ``Comparison of {Traffic} {Control} with {Model} {Predictive} {Control} and {Deep} {Reinforcement} {Learning},'' in \emph{9th {International} {Conference} on {Control}, {Decision} and {Information} {Technologies} ({CoDIT})}, Jul. 2023, pp. 989--994, iSSN: 2576-3555.

\bibitem{reiter_hierarchical_2023}
R.~Reiter, J.~Hoffmann, J.~Boedecker, and M.~Diehl, ``A {Hierarchical} {Approach} for {Strategic} {Motion} {Planning} in {Autonomous} {Racing},'' in \emph{European {Control} {Conference} ({ECC})}, Jun. 2023, pp. 1--8.

\bibitem{morcego_reinforcement_2023}
B.~Morcego, W.~Yin, S.~Boersma, E.~van Henten, V.~Puig, and C.~Sun, ``Reinforcement {Learning} versus {Model} {Predictive} {Control} on greenhouse climate control,'' \emph{Computers and Electronics in Agriculture}, vol. 215, p. 108372, Dec. 2023.

\bibitem{hoffmann_comparison_2024}
P.~Hoffmann, K.~Gorelik, and V.~Ivanov, ``\BIBforeignlanguage{en}{Comparison of {Reinforcement} {Learning} and {Model} {Predictive} {Control} for {Automated} {Generation} of {Optimal} {Control} for {Dynamic} {Systems} within a {Design} {Space} {Exploration} {Framework}},'' \emph{\BIBforeignlanguage{en}{International Journal of Automotive Engineering}}, vol.~15, no.~1, pp. 19--26, 2024.

\bibitem{levine_guided_2013}
S.~Levine and V.~Koltun, ``\BIBforeignlanguage{en}{Guided {Policy} {Search}},'' in \emph{\BIBforeignlanguage{en}{Proceedings of the 30th {International} {Conference} on {Machine} {Learning}}}.\hskip 1em plus 0.5em minus 0.4em\relax PMLR, May 2013, pp. 1--9, iSSN: 1938-7228.

\bibitem{jenelten_dtc_2024}
F.~Jenelten, J.~He, F.~Farshidian, and M.~Hutter, ``{DTC}: {Deep} {Tracking} {Control},'' \emph{Science Robotics}, vol.~9, no.~86, p. eadh5401, Jan. 2024, publisher: American Association for the Advancement of Science.

\bibitem{ghezzi_imitation_2023}
A.~Ghezzi, J.~Hoffman, J.~Frey, J.~Boedecker, and M.~Diehl, ``Imitation {Learning} from {Nonlinear} {MPC} via the {Exact} {Q}-{Loss} and its {Gauss}-{Newton} {Approximation},'' in \emph{62nd {IEEE} {Conference} on {Decision} and {Control} ({CDC})}, Dec. 2023, pp. 4766--4771, iSSN: 2576-2370.

\bibitem{lowrey_plan_2019}
K.~Lowrey, A.~Rajeswaran, S.~Kakade, E.~Todorov, and I.~Mordatch, ``\BIBforeignlanguage{en}{Plan {Online}, {Learn} {Offline}: {Efficient} {Learning} and {Exploration} via {Model}-{Based} {Control}},'' in \emph{\BIBforeignlanguage{en}{7th {International} {Conference} on {Learning} {Representations}}}, 2019.

\bibitem{brito_where_2021}
B.~Brito, M.~Everett, J.~P. How, and J.~Alonso-Mora, ``Where to go {Next}: {Learning} a {Subgoal} {Recommendation} {Policy} for {Navigation} in {Dynamic} {Environments},'' \emph{IEEE Robotics and Automation Letters}, vol.~6, no.~3, pp. 4616--4623, Jul. 2021.

\bibitem{jacquet_n-mpc_2024}
M.~Jacquet and K.~Alexis, ``\BIBforeignlanguage{en}{N-{MPC} for {Deep} {Neural} {Network}-{Based} {Collision} {Avoidance} exploiting {Depth} {Images}},'' Feb. 2024, arXiv:2402.13038 [cs].

\bibitem{bhardwaj_blending_2021}
M.~Bhardwaj, S.~Choudhury, and B.~Boots, ``\BIBforeignlanguage{en}{Blending {MPC} \& {Value} {Function} {Approximation} for {Efficient} {Reinforcement} {Learning}},'' in \emph{\BIBforeignlanguage{en}{9th {International} {Conference} on {Learning} {Representations}, {ICLR}}}, 2021.

\bibitem{gros_data-driven_2020}
S.~Gros and M.~Zanon, ``Data-{Driven} {Economic} {NMPC} {Using} {Reinforcement} {Learning},'' \emph{IEEE Transactions on Automatic Control}, vol.~65, no.~2, pp. 636--648, Feb. 2020.

\bibitem{moradimaryamnegari_model_2022}
H.~Moradimaryamnegari, M.~Frego, and A.~Peer, ``Model {Predictive} {Control}-{Based} {Reinforcement} {Learning} {Using} {Expected} {Sarsa},'' \emph{IEEE Access}, vol.~10, pp. 81\,177--81\,191, 2022, conference Name: IEEE Access.

\bibitem{liu_learning_2023}
X.~Liu, L.~Peters, and J.~Alonso-Mora, ``Learning to {Play} {Trajectory} {Games} {Against} {Opponents} {With} {Unknown} {Objectives},'' \emph{IEEE Robotics and Automation Letters}, vol.~8, no.~7, pp. 4139--4146, Jul. 2023, conference Name: IEEE Robotics and Automation Letters.

\bibitem{kang_rl_2023}
D.~Kang, J.~Cheng, M.~Zamora, F.~Zargarbashi, and S.~Coros, ``\BIBforeignlanguage{en}{{RL} + {Model}-{Based} {Control}: {Using} {On}-{Demand} {Optimal} {Control} to {Learn} {Versatile} {Legged} {Locomotion}},'' \emph{\BIBforeignlanguage{en}{IEEE Robotics and Automation Letters}}, vol.~8, no.~10, pp. 6619--6626, Oct. 2023.

\bibitem{levine_variational_2013}
S.~Levine and V.~Koltun, ``Variational {Policy} {Search} via {Trajectory} {Optimization},'' in \emph{Advances in {Neural} {Information} {Processing} {Systems}}, vol.~26.\hskip 1em plus 0.5em minus 0.4em\relax Curran Associates, Inc., 2013.

\bibitem{neumann_variational_2011}
G.~Neumann, ``Variational inference for policy search in changing situations,'' in \emph{Proceedings of the 28th {International} {Conference} on {International} {Conference} on {Machine} {Learning}}.\hskip 1em plus 0.5em minus 0.4em\relax Bellevue, Washington, USA: Omnipress, 2011, pp. 817--824.

\bibitem{mordatch_combining_2014}
I.~Mordatch and E.~Todorov, ``Combining the benefits of function approximation and trajectory optimization,'' vol.~10, Jul. 2014.

\bibitem{levine_end--end_2016}
S.~Levine, C.~Finn, T.~Darrell, and P.~Abbeel, ``End-to-end training of deep visuomotor policies,'' \emph{Journal of Machine Learning Research}, vol.~17, no.~39, pp. 1--40, 2016, iSBN: 1533-7928.

\bibitem{wang_bregman_2014}
H.~Wang and A.~Banerjee, ``Bregman {Alternating} {Direction} {Method} of {Multipliers},'' in \emph{Advances in {Neural} {Information} {Processing} {Systems}}, vol.~27.\hskip 1em plus 0.5em minus 0.4em\relax Curran Associates, Inc., 2014.

\bibitem{sun_fast_2018}
L.~Sun, C.~Peng, W.~Zhan, and M.~Tomizuka, ``\BIBforeignlanguage{en}{A {Fast} {Integrated} {Planning} and {Control} {Framework} for {Autonomous} {Driving} via {Imitation} {Learning}}.''\hskip 1em plus 0.5em minus 0.4em\relax American Society of Mechanical Engineers Digital Collection, Nov. 2018.

\bibitem{wang_exploring_2019}
T.~Wang and J.~Ba, ``\BIBforeignlanguage{en}{Exploring {Model}-based {Planning} with {Policy} {Networks}},'' Sep. 2019.

\bibitem{pinneri_extracting_2021}
C.~Pinneri, S.~Sawant, S.~Blaes, and G.~Martius, ``Extracting {Strong} {Policies} for {Robotics} {Tasks} from {Zero}-{Order} {Trajectory} {Optimizers},'' in \emph{International {Conference} on {Learning} {Representations}}, 2021.

\bibitem{sacks_learning_2022}
J.~Sacks and B.~Boots, ``Learning to {Optimize} in {Model} {Predictive} {Control},'' in \emph{International {Conference} on {Robotics} and {Automation} ({ICRA})}, May 2022, pp. 10\,549--10\,556.

\bibitem{dawood_handling_2023}
M.~Dawood, N.~Dengler, J.~de~Heuvel, and M.~Bennewitz, ``Handling {Sparse} {Rewards} in {Reinforcement} {Learning} {Using} {Model} {Predictive} {Control},'' in \emph{{IEEE} {International} {Conference} on {Robotics} and {Automation} ({ICRA})}, May 2023, pp. 879--885.

\bibitem{ahn_model_2023}
K.~Ahn, Z.~Mhammedi, H.~Mania, Z.-W. Hong, and A.~Jadbabaie, ``\BIBforeignlanguage{en}{Model {Predictive} {Control} via {On}-{Policy} {Imitation} {Learning}},'' in \emph{\BIBforeignlanguage{en}{Proceedings of {The} 5th {Annual} {Learning} for {Dynamics} and {Control} {Conference}}}.\hskip 1em plus 0.5em minus 0.4em\relax PMLR, Jun. 2023, pp. 1493--1505, iSSN: 2640-3498.

\bibitem{lidec_enforcing_2023}
Q.~L. Lidec, W.~Jallet, I.~Laptev, C.~Schmid, and J.~Carpentier, ``Enforcing the consensus between {Trajectory} {Optimization} and {Policy} {Learning} for precise robot control,'' in \emph{{IEEE} {International} {Conference} on {Robotics} and {Automation} ({ICRA})}, May 2023, pp. 946--952.

\bibitem{mastalli_crocoddyl_2020}
C.~Mastalli, R.~Budhiraja, W.~Merkt, G.~Saurel, B.~Hammoud, M.~Naveau, J.~Carpentier, S.~Vijayakumar, and N.~Mansard, ``Crocoddyl: {An} {Efficient} and {Versatile} {Framework} for {Multi}-{Contact} {Optimal} {Control},'' in \emph{{ICRA} 2020 {IEEE} {International} {Conference} on {Robotics} and {Automation}}, Paris / Virtual, France, May 2020.

\bibitem{fu_learning_2018}
J.~Fu, K.~Luo, and S.~Levine, ``Learning {Robust} {Rewards} with {Adversarial} {Inverse} {Reinforcement} {Learning},'' Aug. 2018.

\bibitem{ho_generative_2016}
J.~Ho and S.~Ermon, ``Generative {Adversarial} {Imitation} {Learning},'' in \emph{Advances in {Neural} {Information} {Processing} {Systems}}, vol.~29.\hskip 1em plus 0.5em minus 0.4em\relax Curran Associates, Inc., 2016.

\bibitem{hoffmann_plannetx_2024}
J.~Hoffmann, D.~F. Clausen, J.~Brosseit, J.~Bernhard, K.~Esterle, M.~Werling, M.~Karg, and J.~J. Bödecker, ``\BIBforeignlanguage{en}{{PlanNetX}: {Learning} an efficient neural network planner from {MPC} for longitudinal control},'' in \emph{\BIBforeignlanguage{en}{Proceedings of the 6th {Annual} {Learning} for {Dynamics} \& {Control} {Conference}}}.\hskip 1em plus 0.5em minus 0.4em\relax PMLR, Jun. 2024, pp. 1214--1227, iSSN: 2640-3498.

\bibitem{schulz_learning_2024}
F.~Schulz, J.~Hoffmann, Y.~Zhang, and J.~Boedecker, ``Learning {When} to {Trust} the {Expert} for {Guided} {Exploration} in {RL},'' in \emph{{ICML} 2024 {Workshop}: {Foundations} of {Reinforcement} {Learning} and {Control} – {Connections} and {Perspectives}}, 2024.

\bibitem{bemporad_explicit_1999}
A.~Bemporad, F.~Borrelli, and M.~Morari, ``The explicit solution of constrained {LP}-{Based} receding horizon control,'' in \emph{Proceedings of the {IEEE} conference on decision and control ({CDC})}, Sydney, Australia, 1999.

\bibitem{bemporad_model_2002}
------, ``Model predictive control based on linear programming - the explicit solution,'' \emph{IEEE Transactions on Automatic Control}, vol.~47, no.~12, pp. 1974--1985, Dec. 2002, conference Name: IEEE Transactions on Automatic Control.

\bibitem{johansen_approximate_2003}
T.~Johansen and A.~Grancharova, ``Approximate explicit constrained linear model predictive control via orthogonal search tree,'' \emph{IEEE Trans. Automatic Control}, vol.~48, pp. 810--815, 2003.

\bibitem{vaupel_accelerating_2020}
Y.~Vaupel, N.~C. Hamacher, A.~Caspari, A.~Mhamdi, I.~G. Kevrekidis, and A.~Mitsos, ``Accelerating nonlinear model predictive control through machine learning,'' \emph{Journal of Process Control}, vol.~92, pp. 261--270, Aug. 2020.

\bibitem{hertneck_learning_2018}
M.~Hertneck, J.~Köhler, S.~Trimpe, and F.~Allgöwer, ``Learning an {Approximate} {Model} {Predictive} {Controller} {With} {Guarantees},'' \emph{IEEE Control Systems Letters}, vol.~2, no.~3, pp. 543--548, Jul. 2018, conference Name: IEEE Control Systems Letters.

\bibitem{akesson_neural_2006}
B.~M. Åkesson and H.~T. Toivonen, ``A neural network model predictive controller,'' \emph{Journal of Process Control}, vol.~16, no.~9, pp. 937--946, Oct. 2006.

\bibitem{cao_deep_2020}
Y.~Cao and R.~B. Gopaluni, ``Deep {Neural} {Network} {Approximation} of {Nonlinear} {Model} {Predictive} {Control},'' \emph{IFAC-PapersOnLine}, vol.~53, no.~2, pp. 11\,319--11\,324, Jan. 2020.

\bibitem{moraes_neural_2020}
``A {Neural} {Network} {Architecture} to {Learn} {Explicit} {MPC} {Controllers} from {Data},'' \emph{Ifac Papersonline}, 2020.

\bibitem{chen_approximating_2018}
S.~Chen, K.~Saulnier, N.~Atanasov, D.~D. Lee, V.~Kumar, G.~J. Pappas, and M.~Morari, ``Approximating {Explicit} {Model} {Predictive} {Control} {Using} {Constrained} {Neural} {Networks},'' in \emph{2018 {Annual} {American} {Control} {Conference} ({ACC})}, Jun. 2018, pp. 1520--1527.

\bibitem{karg_efficient_2020}
B.~Karg and S.~Lucia, ``Efficient {Representation} and {Approximation} of {Model} {Predictive} {Control} {Laws} via {Deep} {Learning},'' \emph{IEEE Transactions on Cybernetics}, vol.~50, no.~9, pp. 3866--3878, Sep. 2020.

\bibitem{cosner_end--end_2022}
R.~K. Cosner, Y.~Yue, and A.~D. Ames, ``\BIBforeignlanguage{en}{End-to-{End} {Imitation} {Learning} with {Safety} {Guarantees} using {Control} {Barrier} {Functions}},'' in \emph{\BIBforeignlanguage{en}{2022 {IEEE} 61st {Conference} on {Decision} and {Control} ({CDC})}}.\hskip 1em plus 0.5em minus 0.4em\relax Cancun, Mexico: IEEE, Dec. 2022, pp. 5316--5322.

\bibitem{schwan_stability_2023}
R.~Schwan, C.~N. Jones, and D.~Kuhn, ``Stability {Verification} of {Neural} {Network} {Controllers} {Using} {Mixed}-{Integer} {Programming},'' \emph{IEEE Transactions on Automatic Control}, pp. 1--16, 2023, conference Name: IEEE Transactions on Automatic Control.

\bibitem{drgona_learning_2024}
J.~Drgoňa, A.~Tuor, and D.~Vrabie, ``\BIBforeignlanguage{en}{Learning {Constrained} {Parametric} {Differentiable} {Predictive} {Control} {Policies} {With} {Guarantees}},'' \emph{\BIBforeignlanguage{en}{IEEE Transactions on Systems, Man, and Cybernetics: Systems}}, pp. 1--12, 2024.

\bibitem{everett_neural_2021}
M.~Everett, ``\BIBforeignlanguage{en}{Neural {Network} {Verification} in {Control}},'' in \emph{\BIBforeignlanguage{en}{2021 60th {IEEE} {Conference} on {Decision} and {Control} ({CDC})}}.\hskip 1em plus 0.5em minus 0.4em\relax Austin, TX, USA: IEEE, Dec. 2021, pp. 6326--6340.

\bibitem{carius_mpc-net_2020}
J.~Carius, F.~Farshidian, and M.~Hutter, ``{MPC}-{Net}: {A} {First} {Principles} {Guided} {Policy} {Search},'' \emph{IEEE Robotics and Automation Letters}, vol.~5, no.~2, pp. 2897--2904, Apr. 2020.

\bibitem{reske_imitation_2021}
A.~Reske, J.~Carius, Y.~Ma, F.~Farshidian, and M.~Hutter, ``\BIBforeignlanguage{en}{Imitation {Learning} from {MPC} for {Quadrupedal} {Multi}-{Gait} {Control}},'' in \emph{\BIBforeignlanguage{en}{2021 {IEEE} {International} {Conference} on {Robotics} and {Automation} ({ICRA})}}.\hskip 1em plus 0.5em minus 0.4em\relax Xi'an, China: IEEE, May 2021, pp. 5014--5020.

\bibitem{sutton_between_1999}
R.~S. Sutton, D.~Precup, and S.~Singh, ``Between {MDPs} and semi-{MDPs}: {A} framework for temporal abstraction in reinforcement learning,'' \emph{Artificial intelligence}, vol. 112, no. 1-2, pp. 181--211, 1999, publisher: Elsevier.

\bibitem{kordabad_equivalence_2024}
A.~B. Kordabad, M.~Zanon, and S.~Gros, ``Equivalence of {Optimality} {Criteria} for {Markov} {Decision} {Process} and {Model} {Predictive} {Control},'' \emph{IEEE Transactions on Automatic Control}, vol.~69, no.~2, pp. 1149--1156, Feb. 2024, conference Name: IEEE Transactions on Automatic Control.

\bibitem{reiter_ac4mpc_2024}
R.~Reiter, A.~Ghezzi, K.~Baumgärtner, J.~Hoffmann, R.~D. McAllister, and M.~Diehl, ``{AC4MPC}: {Actor}-{Critic} {Reinforcement} {Learning} for {Nonlinear} {Model} {Predictive} {Control},'' Jun. 2024.

\bibitem{zhong_value_2013}
M.~Zhong, M.~Johnson, Y.~Tassa, T.~Erez, and E.~Todorov, ``Value function approximation and model predictive control,'' in \emph{{IEEE} {Symposium} on {Adaptive} {Dynamic} {Programming} and {Reinforcement} {Learning} ({ADPRL})}, Apr. 2013, pp. 100--107, iSSN: 2325-1867.

\bibitem{aswani_provably_2013}
A.~Aswani, H.~Gonzalez, S.~S. Sastry, and C.~Tomlin, ``Provably safe and robust learning-based model predictive control,'' \emph{Automatica}, vol.~49, no.~5, pp. 1216--1226, 2013.

\bibitem{gill_snopt_2002}
P.~E. Gill, W.~Murray, and M.~A. Saunders, ``{SNOPT}: {An} {SQP} {Algorithm} for {Large}-{Scale} {Constrained} {Optimization},'' \emph{SIAM Journal on Optimization}, vol.~12, no.~4, pp. 979--1006, 2002.

\bibitem{rubinstein_cross-entropy_2004}
R.~Y. Rubinstein and D.~P. Kroese, \emph{The {Cross}-{Entropy} {Method}}, ser. Information {Science} and {Statistics}, M.~Jordan, J.~Kleinberg, B.~Schölkopf, F.~P. Kelly, and I.~Witten, Eds.\hskip 1em plus 0.5em minus 0.4em\relax New York, NY: Springer, 2004.

\bibitem{deits_lvis_2019}
R.~Deits, T.~Koolen, and R.~Tedrake, ``{LVIS}: {Learning} from {Value} {Function} {Intervals} for {Contact}-{Aware} {Robot} {Controllers},'' in \emph{International {Conference} on {Robotics} and {Automation} ({ICRA})}.\hskip 1em plus 0.5em minus 0.4em\relax Montreal, QC, Canada: IEEE Press, 2019, pp. 7762--7768.

\bibitem{yang_data_2020}
Y.~Yang, K.~Caluwaerts, A.~Iscen, T.~Zhang, J.~Tan, and V.~Sindhwani, ``\BIBforeignlanguage{en}{Data {Efficient} {Reinforcement} {Learning} for {Legged} {Robots}},'' in \emph{\BIBforeignlanguage{en}{Proceedings of the {Conference} on {Robot} {Learning}}}.\hskip 1em plus 0.5em minus 0.4em\relax PMLR, May 2020, pp. 1--10, iSSN: 2640-3498.

\bibitem{kenneally_design_2016}
G.~Kenneally, A.~De, and D.~E. Koditschek, ``Design {Principles} for a {Family} of {Direct}-{Drive} {Legged} {Robots},'' \emph{IEEE Robotics and Automation Letters}, vol.~1, no.~2, pp. 900--907, Jul. 2016.

\bibitem{lambert_low-level_2019}
N.~Lambert, D.~S. Drew, J.~Yaconelli, S.~Levine, R.~Calandra, and K.~S.~J. Pister, ``Low-{Level} {Control} of a {Quadrotor} {With} {Deep} {Model}-{Based} {Reinforcement} {Learning},'' \emph{IEEE Robotics and Automation Letters}, vol.~4, pp. 4224--4230, 2019.

\bibitem{karnchanachari_practical_2020}
N.~Karnchanachari, M.~I. Valls, D.~Hoeller, and M.~Hutter, ``\BIBforeignlanguage{en}{Practical {Reinforcement} {Learning} {For} {MPC}: {Learning} from sparse objectives in under an hour on a real robot},'' in \emph{\BIBforeignlanguage{en}{Proceedings of the 2nd {Conference} on {Learning} for {Dynamics} and {Control}}}.\hskip 1em plus 0.5em minus 0.4em\relax PMLR, Jul. 2020, pp. 211--224, iSSN: 2640-3498.

\bibitem{beckenbach_q-learning_2020}
L.~Beckenbach, P.~Osinenko, and S.~Streif, ``A {Q}-learning predictive control scheme with guaranteed stability,'' \emph{European Journal of Control}, vol.~56, pp. 167--178, Nov. 2020.

\bibitem{hoeller_deep_2020}
D.~Hoeller, F.~Farshidian, and M.~Hutter, ``\BIBforeignlanguage{en}{Deep {Value} {Model} {Predictive} {Control}},'' in \emph{\BIBforeignlanguage{en}{Proceedings of the {Conference} on {Robot} {Learning}}}.\hskip 1em plus 0.5em minus 0.4em\relax PMLR, May 2020, pp. 990--1004, iSSN: 2640-3498.

\bibitem{hatch_value_2021}
N.~Hatch and B.~Boots, ``The {Value} of {Planning} for {Infinite}-{Horizon} {Model} {Predictive} {Control},'' in \emph{2021 {IEEE} {International} {Conference} on {Robotics} and {Automation} ({ICRA})}, May 2021, pp. 7372--7378, iSSN: 2577-087X.

\bibitem{morgan_model_2021}
A.~S. Morgan, D.~Nandha, G.~Chalvatzaki, C.~D’Eramo, A.~M. Dollar, and J.~Peters, ``Model {Predictive} {Actor}-{Critic}: {Accelerating} {Robot} {Skill} {Acquisition} with {Deep} {Reinforcement} {Learning},'' in \emph{2021 {IEEE} {International} {Conference} on {Robotics} and {Automation} ({ICRA})}.\hskip 1em plus 0.5em minus 0.4em\relax IEEE Press, 2021, pp. 6672--6678.

\bibitem{beckenbach_approximate_2022}
L.~Beckenbach and S.~Streif, ``Approximate infinite-horizon predictive control,'' in \emph{{IEEE} 61st {Conference} on {Decision} and {Control} ({CDC})}, Dec. 2022, pp. 3711--3717, iSSN: 2576-2370.

\bibitem{moreno-mora_predictive_2023}
F.~Moreno-Mora, L.~Beckenbach, and S.~Streif, ``Predictive {Control} with {Learning}-{Based} {Terminal} {Costs} {Using} {Approximate} {Value} {Iteration},'' \emph{IFAC-PapersOnLine}, vol.~56, no.~2, pp. 3874--3879, Jan. 2023.

\bibitem{inc_matlab_2022}
\BIBentryALTinterwordspacing
T.~M. Inc, ``{MATLAB} version: 9.13.0 ({R2022b}),'' Natick, Massachusetts, United States, 2022. [Online]. Available: \url{https://www.mathworks.com}
\BIBentrySTDinterwordspacing

\bibitem{lin_reinforcement_2024}
M.~Lin, Z.~Sun, Y.~Xia, and J.~Zhang, ``Reinforcement {Learning}-{Based} {Model} {Predictive} {Control} for {Discrete}-{Time} {Systems},'' \emph{IEEE Transactions on Neural Networks and Learning Systems}, vol.~35, no.~3, pp. 3312--3324, Mar. 2024, conference Name: IEEE Transactions on Neural Networks and Learning Systems.

\bibitem{qu_rl-driven_2024}
Y.~Qu, H.~Chu, S.~Gao, J.~Guan, H.~Yan, L.~Xiao, S.~E. Li, and J.~Duan, ``{RL}-{Driven} {MPPI}: {Accelerating} {Online} {Control} {Laws} {Calculation} {With} {Offline} {Policy},'' \emph{IEEE Transactions on Intelligent Vehicles}, vol.~9, no.~2, pp. 3605--3616, Feb. 2024.

\bibitem{cai_learning-based_2023}
W.~Cai, S.~Sawant, D.~Reinhardt, S.~Rastegarpour, and S.~Gros, ``\BIBforeignlanguage{en}{A {Learning}-{Based} {Model} {Predictive} {Control} {Strategy} for {Home} {Energy} {Management} {Systems}},'' \emph{\BIBforeignlanguage{en}{IEEE Access}}, vol.~11, pp. 145\,264--145\,280, 2023.

\bibitem{amos_differentiable_2018}
B.~Amos, I.~Jimenez, J.~Sacks, B.~Boots, and J.~Z. Kolter, ``Differentiable {MPC} for {End}-to-end {Planning} and {Control},'' in \emph{Advances in {Neural} {Information} {Processing} {Systems}}, vol.~31.\hskip 1em plus 0.5em minus 0.4em\relax Curran Associates, Inc., 2018.

\bibitem{zanon_safe_2021}
M.~Zanon and S.~Gros, ``Safe {Reinforcement} {Learning} {Using} {Robust} {MPC},'' \emph{IEEE Transactions on Automatic Control}, vol.~66, no.~8, pp. 3638--3652, Aug. 2021, conference Name: IEEE Transactions on Automatic Control.

\bibitem{romero_actor-critic_2024}
A.~Romero, Y.~Song, and D.~Scaramuzza, ``\BIBforeignlanguage{en}{Actor-{Critic} {Model} {Predictive} {Control}},'' in \emph{\BIBforeignlanguage{en}{2024 {IEEE} {International} {Conference} on {Robotics} and {Automation} ({ICRA})}}.\hskip 1em plus 0.5em minus 0.4em\relax Yokohama, Japan: IEEE, May 2024, pp. 14\,777--14\,784.

\bibitem{zanon_reinforcement_2020}
M.~Zanon, V.~Kungurtsev, and S.~Gros, ``Reinforcement {Learning} {Based} on {Real}-{Time} {Iteration} {NMPC},'' \emph{IFAC-PapersOnLine}, vol.~53, no.~2, pp. 5213--5218, Jan. 2020.

\bibitem{zarrouki_adaptive_2024}
B.~Zarrouki, C.~Wang, and J.~Betz, ``Adaptive {Stochastic} {Nonlinear} {Model} {Predictive} {Control} with {Look}-ahead {Deep} {Reinforcement} {Learning} for {Autonomous} {Vehicle} {Motion} {Control},'' in \emph{2024 {IEEE}/{RSJ} {International} {Conference} on {Intelligent} {Robots} and {Systems} ({IROS})}, Oct. 2024, pp. 12\,726--12\,733, iSSN: 2153-0866.

\bibitem{kordabad_bias_2023}
A.~B. Kordabad and S.~Gros, ``Bias {Correction} of {Discounted} {Optimal} {Steady}-{State} using {Cost} {Modification},'' in \emph{2023 {European} {Control} {Conference} ({ECC})}, Jun. 2023, pp. 1--6.

\bibitem{gros_towards_2019}
S.~Gros and M.~Zanon, ``Towards {Safe} {Reinforcement} {Learning} {Using} {NMPC} and {Policy} {Gradients}: {Part} {I} - {Stochastic} case,'' Jun. 2019, arXiv:1906.04057 [cs].

\bibitem{rashid_optimistic_2020}
T.~Rashid, B.~Peng, W.~Böhmer, and S.~Whiteson, ``\BIBforeignlanguage{English}{Optimistic {Exploration} even with a {Pessimistic} {Initialisation}},'' in \emph{\BIBforeignlanguage{English}{8th {International} {Conference} on {Learning} {Representations}, {ICLR}}}, 2020.

\bibitem{greatwood_reinforcement_2019}
C.~Greatwood and A.~G. Richards, ``\BIBforeignlanguage{en}{Reinforcement learning and model predictive control for robust embedded quadrotor guidance and control},'' \emph{\BIBforeignlanguage{en}{Autonomous Robots}}, vol.~43, no.~7, pp. 1681--1693, Oct. 2019.

\bibitem{tram_learning_2019}
T.~Tram, I.~Batkovic, M.~Ali, and J.~Sjöberg, ``Learning {When} to {Drive} in {Intersections} by {Combining} {Reinforcement} {Learning} and {Model} {Predictive} {Control},'' in \emph{2019 {IEEE} {Intelligent} {Transportation} {Systems} {Conference} ({ITSC})}, Oct. 2019, pp. 3263--3268.

\bibitem{domahidi_forces_2014}
A.~Domahidi and J.~Jerez, ``{FORCES} {Professional},'' 2014, published: Embotech AG, url=https://embotech.com/FORCES-Pro.

\bibitem{brito_learning_2022}
B.~Brito, A.~Agarwal, and J.~Alonso-Mora, ``Learning {Interaction}-{Aware} {Guidance} for {Trajectory} {Optimization} in {Dense} {Traffic} {Scenarios},'' \emph{IEEE Transactions on Intelligent Transportation Systems}, vol.~23, no.~10, pp. 18\,808--18\,821, Oct. 2022, conference Name: IEEE Transactions on Intelligent Transportation Systems.

\bibitem{zhang_learning-based_2022}
Z.~Zhang, H.~An, Q.~Wei, and H.~Ma, ``Learning-{Based} {Model} {Predictive} {Control} for {Quadruped} {Locomotion} on {Slippery} {Ground},'' in \emph{4th {International} {Conference} on {Control} and {Robotics} ({ICCR})}, Dec. 2022, pp. 47--52.

\bibitem{pfrommer_safe_2022}
S.~Pfrommer, T.~Gautam, A.~Zhou, and S.~Sojoudi, ``\BIBforeignlanguage{en}{Safe {Reinforcement} {Learning} with {Chance}-constrained {Model} {Predictive} {Control}},'' in \emph{\BIBforeignlanguage{en}{Proceedings of {The} 4th {Annual} {Learning} for {Dynamics} and {Control} {Conference}}}.\hskip 1em plus 0.5em minus 0.4em\relax PMLR, May 2022, pp. 291--303, iSSN: 2640-3498.

\bibitem{aps_mosek_2024}
\BIBentryALTinterwordspacing
M.~ApS, \emph{The {MOSEK} optimization toolbox for {MATLAB} manual. {Version} 10.1.}, 2024. [Online]. Available: \url{http://docs.mosek.com/latest/toolbox/index.html}
\BIBentrySTDinterwordspacing

\bibitem{tao_difftune-mpc_2024}
R.~Tao, S.~Cheng, X.~Wang, S.~Wang, and N.~Hovakimyan, ``{DiffTune}-{MPC}: {Closed}-{Loop} {Learning} for {Model} {Predictive} {Control},'' \emph{IEEE Robotics and Automation Letters}, 2024, publisher: IEEE.

\bibitem{zarrouki_safe_2024}
B.~Zarrouki, M.~Spanakakis, and J.~Betz, ``A {Safe} {Reinforcement} {Learning} driven {Weights}-varying {Model} {Predictive} {Control} for {Autonomous} {Vehicle} {Motion} {Control}: 35th {IEEE} {Intelligent} {Vehicles} {Symposium}, {IV} 2024,'' \emph{35th IEEE Intelligent Vehicles Symposium, IV 2024}, pp. 1401--1408, 2024.

\bibitem{wen_collision-free_2024}
Z.~Wen, M.~Dong, and X.~Chen, ``Collision-{Free} {Robot} {Navigation} in {Crowded} {Environments} using {Learning} based {Convex} {Model} {Predictive} {Control},'' in \emph{2024 {IEEE}/{RSJ} {International} {Conference} on {Intelligent} {Robots} and {Systems} ({IROS})}, Oct. 2024, pp. 5452--5459, iSSN: 2153-0866.

\bibitem{tearle_predictive_2021}
B.~Tearle, K.~P. Wabersich, A.~Carron, and M.~N. Zeilinger, ``A {Predictive} {Safety} {Filter} for {Learning}-{Based} {Racing} {Control},'' \emph{IEEE Robotics and Automation Letters}, vol.~6, no.~4, pp. 7635--7642, Oct. 2021.

\bibitem{shen_reinforcement_2023}
X.~Shen and F.~Borrelli, ``Reinforcement {Learning} and {Distributed} {Model} {Predictive} {Control} for {Conflict} {Resolution} in {Highly} {Constrained} {Spaces},'' in \emph{{IEEE} {Intelligent} {Vehicles} {Symposium} ({IV})}, Jun. 2023, pp. 1--6, iSSN: 2642-7214.

\bibitem{ceder_birds-eye-view_2024}
K.~Ceder, Z.~Zhang, A.~Burman, I.~Kuangaliyev, K.~Mattsson, G.~Nyman, A.~Petersén, L.~Wisell, and K.~Åkesson, ``Bird’s-{Eye}-{View} {Trajectory} {Planning} of {Multiple} {Robots} using {Continuous} {Deep} {Reinforcement} {Learning} and {Model} {Predictive} {Control},'' in \emph{2024 {IEEE}/{RSJ} {International} {Conference} on {Intelligent} {Robots} and {Systems} ({IROS})}, Oct. 2024, pp. 8002--8008, iSSN: 2153-0866.

\bibitem{grandesso_cacto_2023}
G.~Grandesso, E.~Alboni, G.~P.~R. Papini, P.~M. Wensing, and A.~D. Prete, ``{CACTO}: {Continuous} {Actor}-{Critic} {With} {Trajectory} {Optimization}—{Towards} {Global} {Optimality},'' \emph{IEEE Robotics and Automation Letters}, vol.~8, no.~6, pp. 3318--3325, Jun. 2023.

\bibitem{li_robust_2020}
S.~Li and O.~Bastani, ``Robust {Model} {Predictive} {Shielding} for {Safe} {Reinforcement} {Learning} with {Stochastic} {Dynamics},'' in \emph{2020 {IEEE} {International} {Conference} on {Robotics} and {Automation} ({ICRA})}, May 2020, pp. 7166--7172, iSSN: 2577-087X.

\bibitem{didier_approximate_2023}
A.~Didier, R.~C. Jacobs, J.~Sieber, K.~P. Wabersich, and M.~N. Zeilinger, ``Approximate {Predictive} {Control} {Barrier} {Functions} using {Neural} {Networks}: {A} {Computationally} {Cheap} and {Permissive} {Safety} {Filter},'' in \emph{European {Control} {Conference} ({ECC})}, Jun. 2023, pp. 1--7.

\bibitem{alboni_cacto-sl_2024}
E.~Alboni, G.~Grandesso, G.~P.~R. Papini, J.~Carpentier, and A.~D. Prete, ``\BIBforeignlanguage{en}{{CACTO}-{SL}: {Using} {Sobolev} learning to improve continuous actor-critic with trajectory optimization},'' in \emph{\BIBforeignlanguage{en}{Proceedings of the 6th {Annual} {Learning} for {Dynamics} \& {Control} {Conference}}}.\hskip 1em plus 0.5em minus 0.4em\relax PMLR, Jun. 2024, pp. 1452--1463, iSSN: 2640-3498.

\bibitem{stella_simple_2017}
L.~Stella, A.~Themelis, P.~Sopasakis, and P.~Patrinos, ``A simple and efficient algorithm for nonlinear model predictive control,'' in \emph{56th {IEEE} {Conference} on {Decision} and {Control} ({CDC})}, 2017, pp. 1939--1944.

\bibitem{jenelten_tamols_2022}
F.~Jenelten, R.~Grandia, F.~Farshidian, and M.~Hutter, ``{TAMOLS}: {Terrain}-{Aware} {Motion} {Optimization} for {Legged} {Systems},'' \emph{IEEE Transactions on Robotics}, vol.~38, no.~6, pp. 3395--3413, Dec. 2022, arXiv:2206.14049 [cs].

\bibitem{bang_rl-augmented_2024}
S.~H. Bang, C.~A. Jové, and L.~Sentis, ``\BIBforeignlanguage{en}{{RL}-augmented {MPC} {Framework} for {Agile} and {Robust} {Bipedal} {Footstep} {Locomotion} {Planning} and {Control}},'' Jul. 2024, arXiv:2407.17683 [cs].

\bibitem{seel_convex_2022}
K.~Seel, A.~B. Kordabad, S.~Gros, and J.~T. Gravdahl, ``Convex {Neural} {Network}-{Based} {Cost} {Modifications} for {Learning} {Model} {Predictive} {Control},'' \emph{IEEE Open Journal of Control Systems}, vol.~1, pp. 366--379, 2022, conference Name: IEEE Open Journal of Control Systems.

\bibitem{esfahani_learning-based_2023}
H.~N. Esfahani, A.~Bahari~Kordabad, W.~Cai, and S.~Gros, ``Learning-based state estimation and control using {MHE} and {MPC} schemes with imperfect models,'' \emph{European Journal of Control}, vol.~73, p. 100880, Sep. 2023.

\bibitem{adhau_fast_2023}
S.~Adhau, D.~Reinhardt, S.~Skogestad, and S.~Gros, ``Fast {Reinforcement} {Learning} {Based} {MPC} based on {NLP} {Sensitivities},'' \emph{IFAC-PapersOnLine}, vol.~56, no.~2, pp. 11\,841--11\,846, Jan. 2023.

\bibitem{bharadhwaj_model-predictive_2020}
H.~Bharadhwaj, K.~Xie, and F.~Shkurti, ``Model-predictive control via cross-entropy and gradient-based optimization,'' in \emph{Learning for {Dynamics} and {Control}}.\hskip 1em plus 0.5em minus 0.4em\relax PMLR, 2020, pp. 277--286.

\bibitem{anand_painless_2023}
A.~S. Anand, D.~Reinhardt, S.~Sawant, J.~T. Gravdahl, and S.~Gros, ``A {Painless} {Deterministic} {Policy} {Gradient} {Method} for {Learning}-based {MPC},'' in \emph{European {Control} {Conference} ({ECC})}, Jun. 2023, pp. 1--7.

\bibitem{gros_reinforcement_2020}
S.~Gros and M.~Zanon, ``Reinforcement {Learning} for mixed-integer problems based on {MPC},'' \emph{IFAC-PapersOnLine}, vol.~53, no.~2, pp. 5219--5224, Jan. 2020.

\bibitem{kordabad_reinforcement_2021}
A.~B. Kordabad, H.~N. Esfahani, A.~M. Lekkas, and S.~Gros, ``Reinforcement {Learning} based on {Scenario}-tree {MPC} for {ASVs},'' in \emph{2021 {American} {Control} {Conference} ({ACC})}, May 2021, pp. 1985--1990, iSSN: 2378-5861.

\bibitem{anand_all_2025}
A.~S. Anand, S.~Sawant, D.~Reinhardt, and S.~Gros, ``All {AI} {Models} are {Wrong}, but {Some} are {Optimal},'' Jan. 2025, arXiv:2501.06086 [cs].

\bibitem{sawant_bridging_2022}
S.~V. Sawant and S.~N. Gros, ``\BIBforeignlanguage{eng}{Bridging the gap between {QP}-based and {MPC}-based {Reinforcement} {Learning}},'' \emph{\BIBforeignlanguage{eng}{IFAC-PapersOnLine}}, 2022.

\bibitem{sawant_model-free_2023}
S.~Sawant, D.~Reinhardt, A.~B. Kordabad, and S.~Gros, ``Model-{Free} {Data}-{Driven} {Predictive} {Control} {Using} {Reinforcement} {Learning},'' in \emph{62nd {IEEE} {Conference} on {Decision} and {Control} ({CDC})}, Dec. 2023, pp. 4046--4052.

\bibitem{favoreel_spc_1999}
W.~Favoreel, B.~D. Moor, and M.~Gevers, ``\BIBforeignlanguage{en}{{SPC}: {Subspace} {Predictive} {Control}},'' \emph{\BIBforeignlanguage{en}{IFAC Proceedings Volumes}}, vol.~32, no.~2, pp. 4004--4009, Jul. 1999.

\bibitem{esfahani_reinforcement_2021}
H.~N. Esfahani, A.~B. Kordabad, and S.~Gros, ``Reinforcement learning based on {MPC}/{MHE} for unmodeled and partially observable dynamics,'' in \emph{American {Control} {Conference} ({ACC})}.\hskip 1em plus 0.5em minus 0.4em\relax IEEE, 2021, pp. 2121--2126.

\bibitem{wabersich_data-driven_2023}
K.~P. Wabersich, A.~J. Taylor, J.~J. Choi, K.~Sreenath, C.~J. Tomlin, A.~D. Ames, and M.~N. Zeilinger, ``Data-{Driven} {Safety} {Filters}: {Hamilton}-{Jacobi} {Reachability}, {Control} {Barrier} {Functions}, and {Predictive} {Methods} for {Uncertain} {Systems},'' \emph{IEEE Control Systems Magazine}, vol.~43, no.~5, pp. 137--177, Oct. 2023, conference Name: IEEE Control Systems Magazine.

\bibitem{gros_safe_2020}
S.~Gros, M.~Zanon, and A.~Bemporad, ``Safe reinforcement learning via projection on a safe set: {How} to achieve optimality?'' \emph{IFAC-PapersOnLine}, vol.~53, no.~2, pp. 8076--8081, 2020, publisher: Elsevier.

\bibitem{gros_learning_2022}
S.~Gros and M.~Zanon, ``Learning for {MPC} with stability \& safety guarantees,'' \emph{Automatica}, vol. 146, p. 110598, Dec. 2022.

\bibitem{kordabad_safe_2022}
A.~B. Kordabad, R.~Wisniewski, and S.~Gros, ``Safe {Reinforcement} {Learning} {Using} {Wasserstein} {Distributionally} {Robust} {MPC} and {Chance} {Constraint},'' \emph{IEEE Access}, vol.~10, pp. 130\,058--130\,067, 2022, conference Name: IEEE Access.

\bibitem{kordabad_verification_2021}
A.~B. Kordabad and S.~Gros, ``Verification of {Dissipativity} and {Evaluation} of {Storage} {Function} in {Economic} {Nonlinear} {MPC} using {Q}-{Learning},'' \emph{IFAC-PapersOnLine}, vol.~54, no.~6, pp. 308--313, Jan. 2021.

\bibitem{zanon_stability-constrained_2022}
M.~Zanon, S.~Gros, and M.~Palladino, ``Stability-constrained {Markov} {Decision} {Processes} using {MPC},'' \emph{Automatica}, vol. 143, p. 110399, Sep. 2022.

\bibitem{zanon_practical_2019}
M.~Zanon, S.~Gros, and A.~Bemporad, ``Practical reinforcement learning of stabilizing economic {MPC},'' in \emph{18th {European} {Control} {Conference} ({ECC})}.\hskip 1em plus 0.5em minus 0.4em\relax IEEE, 2019, pp. 2258--2263.

\bibitem{kordabad_bias_2021}
A.~B. Kordabad, H.~Nejatbakhsh~Esfahani, and S.~Gros, ``Bias {Correction} in {Deterministic} {Policy} {Gradient} {Using} {Robust} {MPC},'' in \emph{2021 {European} {Control} {Conference} ({ECC})}, Jun. 2021, pp. 1086--1091.

\bibitem{seel_combining_2023}
K.~Seel, S.~Gros, and J.~T. Gravdahl, ``Combining {Q}-learning and {Deterministic} {Policy} {Gradient} for {Learning}-{Based} {MPC},'' in \emph{2023 62nd {IEEE} {Conference} on {Decision} and {Control} ({CDC})}, Dec. 2023, pp. 610--617, iSSN: 2576-2370.

\bibitem{anand_akhil_s_data-driven_2024}
{Anand, Akhil S}, {Sawant, Shambhuraj}, {Reinhardt, Dirk}, and {Gros, Sebastien}, ``Data-{Driven} {Predictive} {Control} and {MPC}: {Do} we achieve optimality?'' 2024.

\bibitem{reinhardt_economic_2024}
D.~Reinhardt, A.~S. Anand, S.~Sawant, and S.~Gros, ``Economic model predictive control as a solution to markov decision processes,'' 2024.

\bibitem{nikishin_control-oriented_2022}
E.~Nikishin, R.~Abachi, R.~Agarwal, and P.-L. Bacon, ``Control-oriented model-based reinforcement learning with implicit differentiation,'' in \emph{Proceedings of the {AAAI} {Conference} on {Artificial} {Intelligence}}, vol.~36, 2022, pp. 7886--7894, issue: 7.

\bibitem{schrittwieser_mastering_2020}
J.~Schrittwieser, I.~Antonoglou, T.~Hubert, K.~Simonyan, L.~Sifre, S.~Schmitt, A.~Guez, E.~Lockhart, D.~Hassabis, and T.~Graepel, ``Mastering atari, go, chess and shogi by planning with a learned model,'' \emph{Nature}, vol. 588, no. 7839, pp. 604--609, 2020, iSBN: 1476-4687 Publisher: Nature Publishing Group.

\bibitem{salzmann2023learning}
T.~Salzmann, J.~Arrizabalaga, J.~Andersson, M.~Pavone, and M.~Ryll, ``Learning for {CasADi}: {Data}-driven models in numerical optimization,'' 2023, arXiv: 2312.05873 [eess.SY].

\bibitem{andersson_sensitivity_2018}
J.~A.~E. Andersson and J.~B. Rawlings, ``Sensitivity {Analysis} for {Nonlinear} {Programming} in {CasADi}⁎,'' \emph{IFAC-PapersOnLine}, vol.~51, no.~20, pp. 331--336, Jan. 2018.

\bibitem{lahr_l4acados_2024}
A.~Lahr, J.~Näf, K.~P. Wabersich, J.~Frey, P.~Siehl, A.~Carron, M.~Diehl, and M.~N. Zeilinger, ``L4acados: {Learning}-based models for acados, applied to {Gaussian} process-based predictive control,'' Nov. 2024, arXiv: 2411.19258 tex.pubstate: prepublished.

\bibitem{gros_reinforcement_2021}
S.~Gros and M.~Zanon, ``Reinforcement {Learning} based on {MPC} and the {Stochastic} {Policy} {Gradient} {Method},'' in \emph{American {Control} {Conference} ({ACC})}, May 2021, pp. 1947--1952, iSSN: 2378-5861.

\end{thebibliography}

\appendices

\section{Differentiating Through a Parameterized MPC Layer}
\label{appendix:differentiating}

In the following, we discuss how to compute gradients concerning parameters~$\theta$ within the architectures introduced in Fig.~\ref{fig:structure2}.
The availability of gradient information is a crucial prerequisite for the learning approaches introduced in the following sections.
We restrict the discussion to the parameterized MPC formulation of Fig.~\ref{fig:structure2}, where the corresponding solution map is defined by
\begin{equation}\label{eq:z_star_def}
	z^\star(s, \theta) \coloneqq \argmin_z ~L(z; \theta) \text{ s.t. } g(z; s, \theta) \geq 0.
\end{equation}
Assuming that the objective is twice, the constraints once, continuously differentiable, the \ac{IFT} guarantees the existence of the solution sensitivities~$\frac{\partial z^\star}{\partial \theta}(\bar s, \bar\theta)$ at $(\bar s, \bar \theta)$ if the linear independence constraint qualification, second-order sufficient conditions, and strict complementarity are satisfied at the solution~$z^\star(\bar \theta)$.
Under these conditions, the \ac{IFT} furthermore implies that the solution sensitivity can be computed as
\begin{align}
	\!
	\begin{bmatrix}
		\dfrac{\partial z^\star}{\partial \theta} (\bar s, \bar \theta)\\[6pt]
		\dfrac{\partial \mu_{\mathcal{A}}^\star}{\partial \theta} (\bar s, \bar \theta)
	\end{bmatrix}
	\!= -
	\left(\frac{\partial r}{\partial (z, \mu_{\mathcal{A}})}\right)^{-1} \frac{\partial r}{\partial \theta}
\end{align}
where~$\mu_\mathcal{A}$ are the dual variables associated with the active constraints at the solution~$z^\star(\bar s, \bar \theta)$ and where the partial derivatives of the residual map~$r(z, \mu_\mathcal{A}; s, \theta)$ are evaluated at $z = z^\star(\bar \theta, \bar s), \mu_{\mathcal{A}} = \mu_{\mathcal{A}}^\star(\bar s, \bar \theta), s = \bar s$, and $\theta = \bar \theta$.
The residual map is defined as
\begin{align} \label{eq:residuals}
	r(z, \mu_\mathcal{A}; s, \theta) =
	\begin{bmatrix}
		\grad{z} L(z; \theta) + \grad{z} g_\mathcal{A}(z; s, \theta) \mu_\mathcal{A} \\
		g_\mathcal{A}(z, \mu_\mathcal{A}; s, \theta)	
	\end{bmatrix}
\end{align}
where~$g_\mathcal{A}(z; s, \theta)$ denotes all equalities as well as all inequalities that are active at~$z^\star(\bar s, \bar \theta)$.

At an active set changes, the solution map is not differentiable.
Within the stochastic optimization framework, this nondifferentiability of the solution map for some values of~$\theta$ and~$s$ is, in general, not problematic, as the gradient needs to be well-defined only almost everywhere.

A standard approximation replaces the residual map in~\eqref{eq:residuals} with a smoothed approximation of the KKT conditions associated with~\eqref{eq:z_star_def}, as is done within an interior point solver which leads to a natural smoothing of the solution map alleviating the problem of nondifferentiability at active set changes \cite{gros_data-driven_2020}.

For the interested reader, we provide some further references:
In \cite{andersson_sensitivity_2018}, the authors discuss differentiating through a parameterized \ac{MPC} for active-set methods, whereas in \cite{amos_differentiable_2018}, interior point methods for \ac{MPC} with box constraints on the actions are considered.
The factorization of the \ac{KKT} system required for solving the parameterized \ac{MPC} formulation can be reused to also derive the sensitivities for the parameters if an exact Hessian is used.

\section{Gradients in the Hierarchical Architecture}
\label{appendix:gradients_hierarchical}
The following discussion is restricted to the \hierarchical{} \ac{NN}-\ac{MPC} architecture of Fig~\ref{fig:structure2}.
At each decision step, the \ac{NN} predicts the parameter~$\phi$, which is then processed by the \ac{MPC}-optimization layer to generate a control~$u_0^\star(\phi)$.
We distinguish the learning approaches based on the actor's feedback.
The value function, i.e., the critic, is evaluated for the predicted parameter~$\phi$ or the resulting control~$u_0^\star(\phi)$.
Note that this distinction is first and foremost important for the optimization properties for the \ac{NN} component or when safe exploration is required during training.\\

\subsubsection{Parameter Critic}
Assuming that the critic gives feedback based on the predicted parameter~$\phi$, the optimization layer can be considered part of the environment \cite{reiter_hierarchical_2023}.
Thus, an altered version of the original \ac{MDP} can be defined by introducing a modified MDP transition model 
\begin{equation}\label{eq:transition parameter}
	P^\mathrm{MPC}(s^+ |s, a) \coloneqq P(s^+ | s, u^\star_0 (a)).
\end{equation}
One of the benefits of this approach is that implementations of \ac{RL} methods can be directly used without any further adjustments, as the MPC optimization is only done when generating new samples by forward simulating the system from a given state to the next using~\eqref{eq:transition parameter}.\\

\subsubsection{Control Critic}
Assuming that the critic gives feedback based on the control of the \ac{MPC} optimization layer~$u_0^\star$, there are multiple ways to obtain a gradient.
One can avoid differentiating through the \ac{MPC} optimization layer
\begin{equation*}
	\nabla_\theta J(\theta) = \E_{\psi \sim \pi_\theta(s),\; s \sim \mathcal{D}} \left[ Q(s, u^\star_0 (\psi))\; \nabla_\theta \log \pi_\theta (\psi) \right],
\end{equation*}
by extending the stochastic policy gradient \cite{gros_reinforcement_2021}.
As there is no gradient information, using such a policy gradient can be seen as black box optimization with respect to the \ac{MPC} optimization layer.
This can be especially problematic for high-dimensional parameter spaces, as discussed in \cite{silver_deterministic_2014}.

Alternatively, one can differentiate through the \ac{MPC} \cite{gros_data-driven_2020} extending the deterministic policy gradient~$\eqref{eq:deterministic_policy_gradient}$ to derive a policy gradient with
\begin{equation*}
	\nabla_\theta J^\pi(\theta)  =  \E_{s \sim \mathcal{D}} \left[ \left. \nabla_\theta \mu_\theta (s) \nabla_\psi Q^\mu (s, u^\star_0(\psi)) \right|_{\psi=\mu_\theta(s)} \right].
\end{equation*}
Doing the same for a stochastic policy, using the reparameterization trick, is an extension that has not been considered yet in the literature to the author's best knowledge.

\end{document}